\newtheorem{te}{Theorem}
\newtheorem{coro}{Corollary}
\newtheorem{lem}{Lemma}
\newtheorem{prop}{Proposition}
\newtheorem{de}{Definition}
\newtheorem{ex}{Example}
\providecommand{\keywords}[1]
{\small	\textbf{\textit{Keywords:}} #1}
\title{Scaling symmetries and canonoid transformations in Hamiltonian systems}
\author[1]{R. Azuaje}
\author[2]{A. Bravetti}
\affil[1]{Departamento de F\'isica, Universidad Aut\'onoma Metropolitana Unidad Iztapalapa,
San Rafael Atlixco 186, 09340 Cd. Mx., M\'exico}
\affil[2]{Instituto de Investigaciones en Matem\'aticas Aplicadas y en Sistemas, 
Universidad Nacional Aut\'onoma de M\'exico, A.~P.~70543, M\'exico, DF 04510, M\'exico}
\begin{document}
\maketitle

\begin{abstract}
We investigate various types of symmetries and their mutual relationships in Hamiltonian systems defined on manifolds with different geometric structures:
%with an increasing order of generality: 
symplectic, cosymplectic, contact and cocontact.
In each case we pay special attention to non-standard (non-canonical) symmetries,
in particular scaling symmetries and canonoid transformations, as they provide new interesting tools for the qualitative study of these systems.   
Our main results are the characterizations of these non-standard symmetries and the analysis of their relation with conserved (or dissipated) quantities.
\end{abstract}

\keywords{Hamiltonian systems, generalized Noether's theorem, scaling symmetries, cosymplectic geometry, contact geometry}

\section{Introduction}
\label{sec1}

Symmetry methods are arguably among the most important and most venerable tools
in the study of physical systems~\cite{olver1993applications}. One of the
consequences of the existence of symmetries is their association with conserved quantities given by the celebrated theorem by Noether. One can derive Noether's theorem either in its original context, the variational formulation of mechanics and field theory, or using the corresponding geometric re-formulation~\cite{Arnold78,KS2011}. 
In the standard geometric formulation of mechanics, one usually has a symplectic manifold and a Hamiltonian dynamics defined on such manifold. %which describes the physical system.
In this context one can define \emph{Noether symmetries} as symplectic (locally-Hamiltonian) vector fields that preserve the Hamiltonian function defining the dynamics.
Then Noether's result is given automatically (at least locally) by the anti-symmetry of the symplectic form~\cite{Lee2012,AMRC2019,Roman2020,AKN2006}.
%The power of the geometric formulation however is not limited to this simple re-statement.
%Indeed, 

There are several generalizations of Noether's theorem. 
On the one hand, one can consider a symplectic Hamiltonian system admitting an action of a symmetry group that is Hamiltonian and derive a more general version of Noether's theorem, one that includes ``several conserved quantities'' in one map, namely, the momentum map~\cite{marsden2013introduction,libermann2012symplectic}.
On the other hand, one can consider different types of Hamiltonian systems. 
For instance, time-dependent Hamiltonian systems
are best described geometrically in the context of cosymplectic geometry~\cite{LR89,CLL92,LS2017}; some dissipative and thermostatted mechanical systems are more naturally incorporated within the context of contact geometry~\cite{LL2019,LS2017,BCT2017,Bravetti2017}. More recently, cocontact
Hamiltonian systems have been studied, which generalize both cosymplectic and contact systems at once and allow for a nice geometric analysis of time-dependent contact Hamiltonian 
dynamics~\cite{Letal2022,rivas2022lagrangian,Azuaje2023}.
In all these contexts one can think of appropriate versions 
of Noether's theorem~\cite{Lee2012,Torres2020,Jovanovic2016,Azuaje2022,BG2021,GGMRR2020,LL2020}.
Finally, one may also generalize the concept of standard Noether symmetries and the related theorem  
(see e.g.~\cite{sarlet1981generalizations, GGMRR2020,BG2021}).
In particular, scaling symmetries and canonoid transformations are two natural 
and most important generalizations of standard Noether symmetries and canonical transformations respectively.
Scaling symmetries have long played a special role in the analysis of mathematical and physical phenomena~\cite{poincare2003science,Chenciner1997alinfini,gryb2021scale}. 
Moreover, a generalized Noether theorem for scaling symmetries has been found recently in the symplectic case~\cite{zhang2020generalized} and it has been argued in~\cite{BG2021} that the contact context may be more appropriate in order to fully understand the extent of the theorem. 
%which has important physical and mathematical ramifications~\cite{zhang2020generalized,bravettichung}. 
Remarkably, scaling symmetries are also central to obtain the intrinsic description of the universe given in the theory of ``shape dynamics''~\cite{Sloan2018,Sloan2021,gryb2021scale}. For all these reasons there has been a surge of interest to study them from various perspectives (see also~\cite{BJS2022}).
Canonoid transformations are an important generalization of standard canonical transformations, as they also preserve the Hamiltonian nature of the system. They have been defined first in~\cite{currie1972canonical}. Although they are far less famous than their canonical counterparts, they have been studied thoroughly in~\cite{NOT87,TNO89,CR88,CR89,CFR2013,TC2018,TCA2022} for the symplectic case,
in~\cite{RS2015} for the Poisson case, and more recently in~\cite{AE2023} for the cosymplectic, contact and cocontact cases.
 
In this work, we consider several possible generalizations of standard symplectic Hamiltonian systems, namely, cosymplectic, contact and cocontact systems, and define both scaling symmetries and canonoid transformations on each category. Then we provide some useful characterizations of these symmetries, 
we study their mutual relationships and in each case we derive Noether-like theorems that associate invariant or dissipated quantities to these symmetries.

To make the paper as clear and self-contained as possible, 
we start in Section~\ref{sec2} with a review of the known results and definitions
in the symplectic case. 
Then in the ensuing sections we follow the scheme detailed in Section~\ref{sec2} but we extend all the definitions and results to the cosymplectic (Section~\ref{sec3}), contact (Section~\ref{sec4}), and cocontact (Section~\ref{sec5}) case respectively. 
All the relevant definitions for these classes of systems are given in Appendix~\ref{sec:appA}.
Finally, in Section~\ref{sec:conclusions} we present the conclusions and outline some possible directions for future work. Throughout the paper we use Einstein's summation convention (i.e., a summation over repeated indices is assumed).

\section{Symmetries for symplectic Hamiltonian systems}
\label{sec2}

In this section we review some known definitions and results for symplectic Hamiltonian systems in order to set the stage and the notation for the results that will be presented in the next sections (we refer to Appendix~\ref{sec:appA} when needed for further more basic definitions and properties).

Let $(M,\omega,H)$ be a symplectic Hamiltonian system (see Appendix~\ref{sec:appA}).
\begin{de}
A smooth vector field $V$ on $M$ is called \emph{an infinitesimal symmetry of $(M,\omega,H)$} if $L_{V}\omega=L_{V}H=0$.
\end{de}
Note that this type of symmetries could be also naturally called \emph{Noether symmetries}, as they are the symmetries normally 
involved in the basic version of Noether's theorem.
Indeed, Noether's theorem reads (see~\cite{Lee2012,BG2021} for the proof and~\cite{RS2015,MSZ2020} for analogous results on Poisson manifolds):
\begin{te}
\begin{enumerate}
\item[i)] If $f$ is a constant of motion of $(M,\omega,H)$, then its associated Hamiltonian vector field 
$X_{f}$ is an infinitesimal symmetry of $(M,\omega,H)$. 
\end{enumerate}
Reciprocally,
\begin{enumerate}
\item[ii)] if $V$ is an infinitesimal symmetry of $(M,\omega,H)$, then there is a (possibly-local) function $f$ such that $V=X_{f}$ and $f$ 
is a constant of motion.
\end{enumerate}
\end{te}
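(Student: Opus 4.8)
The plan is to prove both implications by exploiting the two defining properties of the symplectic form $\omega$ — its closedness and its non-degeneracy — together with Cartan's magic formula $L_{V}=\iota_{V}\circ d+d\circ\iota_{V}$. Throughout I adopt the convention $\iota_{X_{f}}\omega=df$ for the Hamiltonian vector field, and I recall that $f$ being a constant of motion means $X_{H}(f)=0$ (equivalently $\{H,f\}=0$).

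For part (i), suppose $f$ is a constant of motion, so that $X_{H}(f)=0$. First I would check that \emph{any} Hamiltonian vector field preserves $\omega$: applying Cartan's formula gives $L_{X_{f}}\omega=d\iota_{X_{f}}\omega+\iota_{X_{f}}d\omega=d(df)+0=0$, where the closedness $d\omega=0$ annihilates the second term. It then remains to show $L_{X_{f}}H=X_{f}(H)=0$, and here the antisymmetry of $\omega$ does the work: since $X_{f}(H)=\iota_{X_{f}}dH=\iota_{X_{f}}\iota_{X_{H}}\omega=-\iota_{X_{H}}\iota_{X_{f}}\omega=-X_{H}(f)$, the hypothesis $X_{H}(f)=0$ yields $X_{f}(H)=0$ at once. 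This is nothing but the antisymmetry of the Poisson bracket, $\{f,H\}=-\{H,f\}$, which exchanges the roles of $f$ and $H$.

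For part (ii), suppose $V$ is an infinitesimal symmetry, so $L_{V}\omega=0$ and $L_{V}H=0$. The key step is to produce the generating function. Applying Cartan's formula and using $d\omega=0$, the condition $L_{V}\omega=0$ becomes $d(\iota_{V}\omega)=0$, so the $1$-form $\iota_{V}\omega$ is closed. By the Poincar\'e lemma it is locally exact, hence there is a (possibly only locally-defined) function $f$ with $\iota_{V}\omega=df$, and non-degeneracy of $\omega$ then identifies $V=X_{f}$. Finally the hypothesis $L_{V}H=0$ reads $X_{f}(H)=0$, and the same antisymmetry computation as in part (i) converts this into $X_{H}(f)=0$, so $f$ is a constant of motion.

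The only genuine subtlety — and the reason the statement can promise $f$ only as a \emph{possibly-local} function — lies in the passage from ``closed'' to ``exact'' in part (ii): the closed form $\iota_{V}\omega$ need be exact only on contractible neighborhoods, and globally it fails unless the first de Rham cohomology $H^{1}_{\mathrm{dR}}(M)$ vanishes. Everything else is a direct unwinding of the definitions, with the antisymmetry of $\omega$ supplying the essential mechanism in both directions; note in particular that the equivalence $X_{f}(H)=0\Leftrightarrow X_{H}(f)=0$ is insensitive to the sign convention chosen for $X_{f}$.
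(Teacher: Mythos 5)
Your proof is correct and follows the standard argument: Cartan's formula plus $d\omega=0$ for invariance of $\omega$, the Poincar\'e lemma for the (possibly only local) generating function $f$, and the antisymmetry of $\omega$ to exchange $X_{f}(H)=0$ with $X_{H}(f)=0$. The paper itself defers this symplectic proof to the cited references, but your argument is essentially identical to the proof the paper does spell out for the analogous cosymplectic theorem in Section~\ref{sec3}, so there is nothing to add.
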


Following the language employed in~\cite{Prince83,CFR2013,GGMRR2020}, we may consider the following more general definition of symmetry of a dynamics:
\begin{de}\label{def:general_infdynsym}
A smooth vector field $W$ on $M$ is called \emph{an infinitesimal dynamical symmetry of a dynamics $X$} if $[W,X]=0$. 
\end{de}
Clearly, when we restrict the above definition to the case in which the dynamics is given by a symplectic Hamiltonian system, we obtain
\begin{de}
A smooth vector field $W$ on $M$ is called \emph{an infinitesimal dynamical symmetry of $(M,\omega,H)$} if $[W,X_{H}]=0$. 
\end{de}
From the fact that the assignment $f\mapsto X_{f}$ is a Lie algebra 
antihomomorphism between the Lie algebras $(C^{\infty}(M),\lbrace,\rbrace)$ 
and $(\mathfrak{X}(M),[,])$, i.e., $X_{\lbrace f,g\rbrace}=-[X_{f},X_{g}]$, (see Appendix~\ref{sec:appA}), we have the following result: 
\begin{prop}
Every infinitesimal symmetry is an infinitesimal dynamical symmetry.
\end{prop}
\begin{proof}
Let $V$ be an infinitesimal symmetry, then $V=X_{f}$ for some constant of motion $f$. We have
\begin{equation}
[X_{f},X_{H}]=X_{\lbrace H,f\rbrace}=0
\end{equation}
since $f$ is a constant of motion. So we conclude that $V$ is an infinitesimal dynamical symmetry.
\end{proof}
The converse statement is not true. 
Indeed, if $X\in\mathfrak{X}(M)$ is such that $L_{X}H=\lambda_{1}H$ and 
$L_{X}\omega=\lambda_{2}\omega$ for some $\lambda_{1},\lambda_{2}\in\mathbb{R}$, 
then $[X,X_{H}]=(\lambda_{1}-\lambda_{2})X_{H}$ (cf.~Proposition 1 in~\cite{BG2021}, but beware that in this reference infinitesimal symmetries are called Noether symmetries); 
so for the case $\lambda_{1}=\lambda_{2}\neq 0$ we have that $X$ is an infinitesimal dynamical symmetry but not an infinitesimal symmetry. 
This is the case of scaling symmetries of degree 1, as we are now going to prove. 
%are interesting examples of infinitesimal dynamical symmetries that are not infinitesimal symmetries. 
We start by providing the general definition of a scaling symmetry in the symplectic case, which reads (see also~\cite{BJS2022}): 
\begin{de}
$X\in\mathfrak{X}(M)$ is a \emph{scaling symmetry of degree $\Lambda\in\mathbb{R}$}  for the Hamiltonian system $(M,\omega,H)$ if
\begin{enumerate}
\item[i)] $L_{X}\omega=\omega$ 
%\end{enumerate}
and
%\begin{enumerate}
\item[ii)] $L_{X}H=\Lambda H$.
\end{enumerate}
\end{de}
From the first condition in their definition, scaling symmetries are not Hamiltonian vector fields (this comes from the fact that $L_{X_{f}}\omega=0$ for any function $f\in C^{\infty}(M)$)
and thus they are not infinitesimal symmetries.
Moreover, if $X$ is a scaling symmetry of degree $\Lambda$, 
then $[X,X_{H}]=(\Lambda-1)X_{H}$. Therefore we conclude that 
scaling symmetries of degree~1 are infinitesimal dynamical symmetries which are not infinitesimal symmetries.

It is well known that infinitesimal symmetries are related to one-parameter groups of canonical transformations. 
To recall this, we start with a geometrical definition of canonical transformations~\cite{AMRC2019,AE2023}.
\begin{de}
\emph{A canonical transformation} for the symplectic manifold $(M,\omega)$ is a possibly-local diffeomorphism $F$ on $M$ such that $F^{*}\omega=\omega$.
\end{de}
Given a diffeomorphism $F:M\longrightarrow M$,  we know that $F^{*}\omega$ is a symplectic structure on $M$, so around any point $p\in M$ there are local coordinates $(Q^{1},\cdots,Q^{n},P_{1},\cdots,P_{n})$ such that
\begin{equation}
F^{*}\omega=dQ^{i}\wedge dP_{i}.
\end{equation} 
If we think of $F$ locally as a coordinate transformation $(q^{1},\cdots,q^{n},p_{1},\cdots,p_{n})\longmapsto (Q^{1},\cdots,Q^{n},P_{1},\cdots,P_{n})$, 
then $F$ is a canonical transformation if and only if
\begin{equation}\label{eqcpb}
\lbrace Q^{i},Q^{j}\rbrace=\lbrace P_{i},P_{j}\rbrace=0 \hspace{1cm}\textit{and}\hspace{1cm}\lbrace Q^{i},P_{j}\rbrace=\delta^{i}_{j}.
\end{equation}
Let $V$ be a smooth vector field on $M$ and $\varphi$ its flow (for a brief review of flows on smooth manifolds see \cite{Lee2012}). 
By definition, $V$ is an  infinitesimal symmetry of $(M,\omega,H)$ if and only if $\omega$ and $H$ are invariant under the flow $\varphi$ of $V$ ($\varphi_{s}^{*}\omega=\omega$ and $\varphi_{s}^{*}H=H$), so we have the following proposition.
\begin{prop}\label{prop:symm-canonical-sympl}
If $V$ is an infinitesimal symmetry of $(M,\omega,H)$, 
then the flow %$\varphi$ 
of $V$ defines a one-parameter group of canonical transformations
%namely $\lbrace \varphi_{s}\rbrace$, 
that leave the Hamiltonian invariant; 
reciprocally, if we have a one-parameter group of canonical transformations that leave the Hamiltonian $H$ invariant, then the infinitesimal generator of the flow defined by such one-parameter group is an infinitesimal symmetry of $(M,\omega,H)$.
\end{prop}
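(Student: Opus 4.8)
The plan is to exploit the standard relation between the Lie derivative of a tensor field along a vector field and the pullback under its flow, namely $\frac{d}{ds}\varphi_s^*\alpha=\varphi_s^*(L_V\alpha)$, valid for any smooth tensor field $\alpha$ (here applied to $\alpha=\omega$ and $\alpha=H$). Combined with the observation already made in the text --- that $V$ is an infinitesimal symmetry precisely when $\varphi_s^*\omega=\omega$ and $\varphi_s^*H=H$ --- this reduces the proposition to an elementary ODE argument in the parameter $s$.

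First I would prove the forward implication. Assuming $L_V\omega=0$, the formula above gives $\frac{d}{ds}\varphi_s^*\omega=\varphi_s^*(L_V\omega)=0$, so the family $s\mapsto\varphi_s^*\omega$ is constant in $s$; since $\varphi_0=\mathrm{id}$ yields $\varphi_0^*\omega=\omega$, we conclude $\varphi_s^*\omega=\omega$ for every $s$ in the domain, i.e. each $\varphi_s$ is a canonical transformation. The identical argument applied to $L_VH=0$ gives $\varphi_s^*H=H$, so the flow leaves the Hamiltonian invariant. Finally, the group property of a flow, $\varphi_{s+t}=\varphi_s\circ\varphi_t$ with $\varphi_0=\mathrm{id}$, shows that $\{\varphi_s\}$ is a one-parameter group, which completes this direction.

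For the converse I would start from a one-parameter group $\{\varphi_s\}$ of canonical transformations leaving $H$ invariant, so that $\varphi_s^*\omega=\omega$ and $\varphi_s^*H=H$ for all $s$, and let $V$ be its infinitesimal generator, $V_p=\frac{d}{ds}\big|_{s=0}\varphi_s(p)$. Differentiating the two invariance identities at $s=0$ gives $L_V\omega=\frac{d}{ds}\big|_{s=0}\varphi_s^*\omega=0$ and $L_VH=\frac{d}{ds}\big|_{s=0}\varphi_s^*H=0$, so $V$ is an infinitesimal symmetry by definition.

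The only point requiring care --- and the one I expect to be the main, if modest, obstacle --- is the local nature of the objects involved: the flow of $V$ need not be complete, and the proposition explicitly allows possibly-local diffeomorphisms. I would therefore phrase the ODE argument on the maximal open interval of existence of the flow through each point, noting that the constancy of $\varphi_s^*\omega$ and $\varphi_s^*H$ holds wherever the flow is defined and that the one-parameter group property is understood in the usual local sense. No genuine analytic difficulty arises beyond this bookkeeping, since the substance of the statement is carried entirely by the Lie-derivative--flow correspondence.
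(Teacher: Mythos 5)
Your proof is correct and follows essentially the same route as the paper: the paper simply notes (citing the standard Lie-derivative--flow correspondence) that $V$ being an infinitesimal symmetry is equivalent to $\varphi_s^*\omega=\omega$ and $\varphi_s^*H=H$, and states the proposition as an immediate consequence. Your write-up just makes explicit the ODE-in-$s$ argument and the local-flow bookkeeping that the paper leaves to the cited reference.
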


Similarly to the above result, in~\cite{CFR2013} one-parameter groups of canonoid transformations are studied by using their infinitesimal generators. 
Classically a canonoid transformation is  defined as a coordinate transformation preserving the Hamiltonian form of the equations of motion~\cite{SC72,NOT87,TCA2022}. For our purposes it is more convenient to follow the geometric definition presented in~\cite{CR88,AE2023,CFR2013}.
\begin{de}
We say that a possibly-local diffeomorphism $F:M\longrightarrow M$ is \emph{a canonoid transformation} for the Hamiltonian system $(M,\omega,H)$ if there exists a function $K\in C^{\infty}(M)$ such that 
\begin{equation}
X_{H}\lrcorner F^{*}\omega=dK.
\end{equation}
\end{de}
In general, canonoid transformations do not preserve the Poisson bracket, so that if we think of $F$ locally as a coordinate transformation $$(q^{1},\cdots,q^{n},p_{1},\cdots,p_{n})\longmapsto (Q^{1},\cdots,Q^{n},P_{1},\cdots,P_{n}),$$ 
then the new coordinates do not necessarily satisfy (\ref{eqcpb}). 
Nevertheless, the equations of motion in the coordinates $(Q^{1},\cdots,Q^{n},P_{1},\cdots,P_{n})$ read
\begin{equation}
%\begin{split}
\dot{Q^{i}} = \frac{\partial K}{\partial P_{i}},\qquad
\dot{P_{i}} =-\frac{\partial K}{\partial Q^{i}}\,.
%\end{split}
\end{equation}
Clearly, every canonical transformation is also (trivially) a canonoid transformation. 
However, the converse statement is not true. Indeed, we can find examples of canonoid transformations that are not canonical 
in Refs.~\cite{NOT87,CR88}.

Now let us focus on the relationship between canonoid transformations, their infinitesimal generators and the associated constants of motion.
Let $X$ be a smooth vector field on $M$ and $\varphi$ its flow. 
$X$ is the infinitesimal generator of a one-parameter group of canonoid transformations if and only if there exists a family $\lbrace K_{s}\rbrace$ of functions on $M$ such that
\begin{equation}
X_{H}\lrcorner \varphi_{s}^{*}\omega=dK_{s},
\end{equation}
which is equivalent to $d(X_{H}\lrcorner \varphi_{s}^{*}\omega)=0$, so $X$ is the infinitesimal generator of a one-parameter group of canonoid transformations if and only if 
\begin{equation}
\label{eqinfcanonoid}
L_{X_{H}}\varphi_{s}^{*}\omega=0.
\end{equation}
By taking $\frac{d}{ds}|_{s=0}$ in equation (\ref{eqinfcanonoid}), 
we have that $X$ is the infinitesimal generator of a one-parameter group of canonoid transformations if and only if
\begin{equation}
L_{X_{H}}L_{X}\omega=0,
\end{equation}
which is equivalent to the fact that there exists (at least locally) a function $K$ on $M$ such that
\begin{equation}
\label{eqcanfunc}
X_{H}\lrcorner L_{X}\omega=dK. 
\end{equation}
We have that the function $K$ in equation (\ref{eqcanfunc}) is a constant of motion~\cite{CFR2013}. 
Moreover, the following theorem provides a useful characterization of the infinitesimal generators of 
one-parameter groups of canonoid transformations
(see~\cite{CFR2013} for the proof and~\cite{RS2015} for an analogous statement in the Poisson case).
\begin{te}
\label{teinfcanonoid}
$X$ is the infinitesimal generator of a one-parameter group of canonoid transformations for $(M,\omega,H)$ if and only if $[X,X_{H}]$ is a Hamiltonian vector field. In this case its Hamiltonian function is $L_{X}H-K$, where $K$ is given in equation~\eqref{eqcanfunc}.
\end{te}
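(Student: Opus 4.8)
The plan is to reduce the statement to the infinitesimal characterization already obtained just above, namely that $X$ generates a one-parameter group of canonoid transformations if and only if $X_{H}\lrcorner L_{X}\omega = dK$ for some (possibly local) function $K$ (equation (\ref{eqcanfunc})), and then to rewrite the one-form $X_{H}\lrcorner L_{X}\omega$ in terms of the bracket $[X,X_{H}]$ by means of a single commutator identity. The tool I would use is the standard relation between interior product, Lie derivative and Lie bracket,
\begin{equation}
L_{X}(X_{H}\lrcorner\omega)=[X,X_{H}]\lrcorner\omega+X_{H}\lrcorner(L_{X}\omega),
\end{equation}
valid for any vector fields and any differential form.

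First I would feed into this identity the defining relation of the Hamiltonian vector field, $X_{H}\lrcorner\omega=dH$. Since the Lie derivative commutes with $d$, the left-hand side becomes $L_{X}(dH)=d(L_{X}H)$, and rearranging yields the central formula
\begin{equation}
[X,X_{H}]\lrcorner\omega=d(L_{X}H)-X_{H}\lrcorner(L_{X}\omega),
\end{equation}
which expresses the one-form dual to $[X,X_{H}]$ entirely through $L_{X}H$ and through the one-form appearing in (\ref{eqcanfunc}).

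Both implications then drop out. For the forward direction, if $X$ generates canonoid transformations then $X_{H}\lrcorner L_{X}\omega=dK$, and substitution gives $[X,X_{H}]\lrcorner\omega=d(L_{X}H-K)$; this says precisely that $[X,X_{H}]$ is a Hamiltonian vector field with Hamiltonian function $L_{X}H-K$, proving at once both the claim and the identification of the generating function. Conversely, if $[X,X_{H}]$ is Hamiltonian, say $[X,X_{H}]\lrcorner\omega=dg$, the central formula rearranges to $X_{H}\lrcorner(L_{X}\omega)=d(L_{X}H-g)$, so this one-form is exact and $X$ generates canonoid transformations by the earlier characterization.

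There is no real obstacle here: once the commutator identity is written down the argument is a one-line manipulation. The only points needing mild care are conventions --- using the equivalence ``$Z\lrcorner\omega$ exact $\Leftrightarrow$ $Z$ is a Hamiltonian vector field'' in the global sense demanded by the statement (rather than merely closed, i.e.\ locally Hamiltonian), and keeping the sign in $X_{H}\lrcorner\omega=dH$ consistent with the antihomomorphism $X_{\{f,g\}}=-[X_{f},X_{g}]$ recalled earlier.
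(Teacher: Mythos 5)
Your proof is correct, and it uses exactly the technique this paper employs: the paper itself defers the symplectic proof to the reference cited alongside the theorem, but its proofs of the analogous results in the cosymplectic, contact and cocontact cases (Theorems~\ref{teinfcanonoid2}, \ref{teinfcanonoid3} and~\ref{teinfcanonoid4}) rest on the same commutator identity $[X,X_{H}]\lrcorner\omega=L_{X}(X_{H}\lrcorner\omega)-X_{H}\lrcorner L_{X}\omega$ combined with the infinitesimal characterization~\eqref{eqcanfunc}, just as you do. Both directions and the identification of the Hamiltonian function $L_{X}H-K$ follow correctly from your central formula, so there is nothing to fix.
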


In the remaining of this section we comment on two direct consequences of Theorem~\ref{teinfcanonoid} that are of special interest for us.
\begin{coro}
Every scaling symmetry is the infinitesimal generator of a one-parameter group of canonoid transformations.    
\end{coro}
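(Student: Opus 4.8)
The plan is to reduce the statement entirely to the characterization of canonoid generators provided by Theorem~\ref{teinfcanonoid}, which says that $X$ generates a one-parameter group of canonoid transformations precisely when $[X,X_{H}]$ is a Hamiltonian vector field. So the only thing I really need to verify is that, for a scaling symmetry, the bracket $[X,X_{H}]$ is Hamiltonian.

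First I would take $X$ to be a scaling symmetry of degree $\Lambda$, so that $L_{X}\omega=\omega$ and $L_{X}H=\Lambda H$. To compute $[X,X_{H}]$ I would use the basic Lie-derivative identity applied to $X_{H}\lrcorner\omega=dH$: since $L_{X}$ commutes with $d$, one has $L_{X}(X_{H}\lrcorner\omega)=d(L_{X}H)$, while the Leibniz rule for $L_{X}$ acting on the contraction gives $L_{X}(X_{H}\lrcorner\omega)=[X,X_{H}]\lrcorner\omega+X_{H}\lrcorner L_{X}\omega$. Combining these two expressions yields $[X,X_{H}]\lrcorner\omega=d(L_{X}H)-X_{H}\lrcorner L_{X}\omega$.

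Next I would substitute the two scaling conditions. The first gives $X_{H}\lrcorner L_{X}\omega=X_{H}\lrcorner\omega=dH$, and the second gives $d(L_{X}H)=\Lambda\,dH$, so that $[X,X_{H}]\lrcorner\omega=(\Lambda-1)\,dH$. By nondegeneracy of $\omega$ this means $[X,X_{H}]=(\Lambda-1)X_{H}=X_{(\Lambda-1)H}$, which recovers the identity already noted just above the statement. In particular $[X,X_{H}]$ is a Hamiltonian vector field, so Theorem~\ref{teinfcanonoid} applies directly and $X$ is the infinitesimal generator of a one-parameter group of canonoid transformations.

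There is essentially no serious obstacle here: the whole content is that the scaling conditions force the correction term $X_{H}\lrcorner L_{X}\omega$ to be itself exact (indeed equal to $dH$), so that the bracket stays within the Hamiltonian vector fields rather than merely the locally-Hamiltonian ones. The only point requiring a little care is the bookkeeping in the Lie-derivative identity and the appeal to the nondegeneracy of $\omega$; once $[X,X_{H}]$ is identified with $X_{(\Lambda-1)H}$, the conclusion is immediate from Theorem~\ref{teinfcanonoid}.
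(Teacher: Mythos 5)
Your proposal is correct and follows essentially the same route as the paper: both reduce the claim to Theorem~\ref{teinfcanonoid} by establishing that $[X,X_{H}]=(\Lambda-1)X_{H}$ is the Hamiltonian vector field of $(\Lambda-1)H$; you merely derive this commutator identity explicitly via the Lie-derivative/contraction identity, whereas the paper invokes it as a previously stated fact. The only element you omit is the paper's additional remark identifying the associated function $K=H$ from equation~\eqref{eqcanfunc}, which is not needed for the statement itself.
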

\begin{proof}
If $X$ is a scaling symmetry for $(M,\omega,H)$, then $[X,X_{H}]$ is a Hamiltonian vector field. 
Indeed, if $X$ is a scaling symmetry, then $[X,X_{H}]=(\Lambda-1)X_{H}$ (where $\Lambda$ is the degree of $X$) is the Hamiltonian vector field with Hamiltonian function $(\Lambda-1)H$. 
We can further check that in this case $K=H$, since, by equation (\ref{eqcanfunc}), we must have $(\Lambda-1)H=\Lambda H-H=L_{X}H-K$.
%\begin{equation}
%X_{H}\lrcorner L_{X}\omega=X_{H}\lrcorner \omega=dH.
%\end{equation}
%So every scaling symmetry is the infinitesimal generator of a one-parameter group of canonoid transformations. 
\end{proof}

Infinitesimal dynamical symmetries are not necessarily scaling symmetries. Nevertheless, we can directly prove an analogous statement.
\begin{coro}
Every infinitesimal dynamical symmetry $X$ is the infinitesimal generator of a one-parameter group of canonoid transformations.
\end{coro}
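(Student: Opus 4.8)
The plan is to deduce this corollary immediately from Theorem~\ref{teinfcanonoid}. That theorem states that $X$ is the infinitesimal generator of a one-parameter group of canonoid transformations for $(M,\omega,H)$ if and only if $[X,X_{H}]$ is a Hamiltonian vector field. Hence the entire task reduces to checking that $[X,X_{H}]$ is Hamiltonian whenever $X$ is an infinitesimal dynamical symmetry.

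First I would unwind the definition: an infinitesimal dynamical symmetry satisfies $[X,X_{H}]=0$ by Definition~\ref{def:general_infdynsym} specialized to $X_{H}$. The only remaining point is the (trivial but necessary) observation that the zero vector field is a Hamiltonian vector field. This follows from the non-degeneracy of $\omega$: for any constant $c$ one has $dc=0$, so the associated Hamiltonian vector field $X_{c}$ obeys $X_{c}\lrcorner\omega=dc=0$, and non-degeneracy forces $X_{c}=0$. Thus $[X,X_{H}]=0=X_{c}$ is Hamiltonian, and Theorem~\ref{teinfcanonoid} applies at once, giving the claim.

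To make the statement fully explicit I would then pin down the function $K$ appearing in equation~\eqref{eqcanfunc}. Using the standard identity $L_{X}(X_{H}\lrcorner\omega)=[X,X_{H}]\lrcorner\omega+X_{H}\lrcorner L_{X}\omega$ together with $X_{H}\lrcorner\omega=dH$, one obtains $d(L_{X}H)=[X,X_{H}]\lrcorner\omega+dK$; since $[X,X_{H}]=0$ this collapses to $dK=d(L_{X}H)$, so $K=L_{X}H$ up to an additive constant. This is consistent with Theorem~\ref{teinfcanonoid}, whose prescription for the Hamiltonian function of $[X,X_{H}]$ gives $L_{X}H-K=0$, exactly as it must, and it also agrees with the earlier remark that such a $K$ is automatically a constant of motion.

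I do not expect any genuine obstacle: the result is an essentially immediate corollary of Theorem~\ref{teinfcanonoid}, the only subtlety being the elementary remark that $0$ qualifies as a Hamiltonian vector field. The optional computation of $K$ requires nothing beyond the Lie-derivative--interior-product identity used above.
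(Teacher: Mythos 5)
Your proposal is correct and follows essentially the same route as the paper: both reduce the claim to Theorem~\ref{teinfcanonoid} by noting that $[X,X_{H}]=0$ is a Hamiltonian vector field (with constant Hamiltonian function), your version merely spelling out the non-degeneracy argument for why $0$ qualifies. The extra identification $K=L_{X}H$ up to a constant is a correct and consistent addition, not a deviation.
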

\begin{proof}
If $X$ is an infinitesimal dynamical symmetry of $(M,\omega,H)$, then $[X,X_{H}]=0$ is a Hamiltonian vector field with constant Hamiltonian function $c$. 
Therefore, according to Theorem~\ref{teinfcanonoid}, $X$ is the generator of a one-parameter group of canonoid transformations.
%Moreover, by equation~(\ref{eqcanfunc}) we have $K=L_{X}H-c. 
%We can see that in fact this is so. Indeed, $L_{X}H-K=c\in\mathbb{R}$ if and only if $d(L_{X}H-K)=0$, we have
%\begin{equation}
%\begin{split}
%d(L_{X}H-K) &= L_{X}dH-dK\\
%&= L_{X}(X_{H}\lrcorner\omega)-X_{H}\lrcorner L_{X}\omega\\
%&= L_{X}X_{H}\lrcorner\omega+X_{H}\lrcorner L_{X}\omega-X_{H}\lrcorner L_{X}\omega\\
%&= L_{X}X_{H}\lrcorner\omega\\
%&= [X,X_{H}]\lrcorner\omega\\
%&= 0.
%\end{split}
%\end{equation}
%So, every infinitesimal dynamical symmetry $X$ is the infinitesimal generator of a one-parameter group of canonoid transformations, namely $\lbrace \varphi_{s}\rbrace$
\end{proof}

In addition, we have that for each $s$ the new Hamiltonian function is $K_{s}=\varphi_{s}^{*}H$. Indeed, we know that $X_{H}$ is invariant under the flow $\varphi$ of $X$ ($\varphi_{s \ast}X_{H}=X_{H}$) and for every $Y\in\mathfrak{X}(M)$ we have
\begin{equation}
\begin{split}
(X_{H}\lrcorner \varphi_{s}^{*}\omega)(Y) &=(\varphi_{s}^{*}\omega)(X_{H},Y)\\
&=\omega(\varphi_{s \ast}X_{H},\varphi_{s \ast}Y)\\
&=(X_{H}\lrcorner\omega)(\varphi_{s \ast}Y)\\
&=\varphi_{s}^{*}(X_{H}\lrcorner\omega)(Y)\\
&=(\varphi_{s}^{*}dH)(Y)\\
&=(d\varphi_{s}^{*}H)(Y)
\end{split}
\end{equation}
where $\varphi_{s \ast}Y$ is the pushforward of $Y$ by $\varphi_{s}$, i.e., the vector field $X_{H}$ is Hamiltonian with respect to the symplectic structure $\varphi_{s}^{*}\omega$ with Hamiltonian function $K_{s}=\varphi_{s}^{*}H$.

\section{Symmetries for cosymplectic Hamiltonian systems}
\label{sec3}

In this section we review some recent results concerning canonoid transformations in the cosymplectic setting and introduce new results regarding scaling symmetries in this geometry (we refer to Appendix~\ref{sec:appA} for more definitions and properties).

Let $(M,\Omega,\eta,H)$ be a cosymplectic Hamiltonian system. Following the definition of infinitesimal symmetry in the symplectic framework, we say that:
\begin{de}
A smooth vector field $V$ on $M$ is \emph{an infinitesimal symmetry} of $(M,\Omega,\eta,H)$ 
if $L_{V}\Omega=L_{V}H=0$ and $V\lrcorner\eta=0$ (this implies that $L_{V}\eta=0$).
\end{de}
In~\cite{Torres2020,Jovanovic2016,Azuaje2022} one can find a version of Noether's theorem for the special case where $M=\mathbb{R}\times T^*Q$,
$\Omega=dp_{i}\wedge dq^{i}-dH\wedge dt$ and $\eta=dt$. In such case one has that
any constant of motion (which possibly depends explicitly on time) is associated to a symmetry $X$ of $\Omega$ ($L_{X}\Omega=0$), 
without needing to restrict to the case $L_{X}H=0$. 

It is easy to prove the following (new) version of Noether's theorem for Hamiltonian systems on general cosymplectic manifolds: 
\begin{te}
\begin{enumerate}
\item[i)]
If $f$ is a constant of motion of $(M,\Omega,\eta,H)$ such that $Rf=0$ 
($f$ does not depend explicitly on time), 
then $X_{f}$ is an infinitesimal symmetry of $(M,\Omega,\eta,H)$. 
\end{enumerate}
Reciprocally,
\begin{enumerate}
\item[ii)] if $V$ is an infinitesimal symmetry of $(M,\Omega,\eta,H)$, 
then there is a (possibly-local) function $f$ such that $V=X_{f}$ and $f$ is a constant of motion with $Rf=0$.
\end{enumerate}
\end{te}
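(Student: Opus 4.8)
The plan is to follow, step by step, the proof of the symplectic Noether theorem, while keeping careful track of the two extra pieces of cosymplectic data --- the Reeb field $R$ and the one-form $\eta$ --- since it is exactly these that force the hypothesis $Rf=0$ in both implications. I use throughout the characterisation of the Hamiltonian vector field recalled in Appendix~\ref{sec:appA}, namely $X_{f}\lrcorner\eta=0$ and $X_{f}\lrcorner\Omega=df-(Rf)\eta$, together with the fact that $f$ is a constant of motion exactly when it is annihilated by the evolution vector field $R+X_{H}$. The single algebraic identity that does most of the work is the antisymmetry relation $X_{H}f=-X_{f}H$, obtained from $\Omega(X_{f},X_{H})=(X_{f}\lrcorner\Omega)(X_{H})=-(X_{H}\lrcorner\Omega)(X_{f})$ once one notices that all $\eta$-terms drop out because $X_{f}\lrcorner\eta=X_{H}\lrcorner\eta=0$.

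For the direct implication, assume that $f$ is a constant of motion with $Rf=0$. The condition $X_{f}\lrcorner\eta=0$ is built into the definition of $X_{f}$, so only $L_{X_{f}}\Omega=0$ and $L_{X_{f}}H=0$ remain. Using Cartan's formula and $d\Omega=0$ I would compute $L_{X_{f}}\Omega=d(X_{f}\lrcorner\Omega)=d\bigl(df-(Rf)\eta\bigr)=-\,d(Rf)\wedge\eta$, which vanishes precisely because $Rf=0$; this is the step that exhibits why the hypothesis is needed. For the Hamiltonian, the antisymmetry identity gives $L_{X_{f}}H=X_{f}H=-X_{H}f$, and since $f$ is a constant of motion one has $(R+X_{H})f=0$, whence $X_{H}f=-Rf=0$; therefore $L_{X_{f}}H=0$ and $X_{f}$ is an infinitesimal symmetry.

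For the converse, let $V$ be an infinitesimal symmetry, so that $L_{V}\Omega=0$, $L_{V}H=0$ and $V\lrcorner\eta=0$. Since $\Omega$ is closed, $L_{V}\Omega=d(V\lrcorner\Omega)=0$, so $V\lrcorner\Omega$ is closed and, by the Poincar\'e lemma, locally exact: $V\lrcorner\Omega=df$ for some (possibly-local) function $f$. I would then read off $Rf$ at once: $Rf=df(R)=\Omega(V,R)=-(R\lrcorner\Omega)(V)=0$ because $R\lrcorner\Omega=0$, which gives the required $Rf=0$ for free. Consequently $X_{f}\lrcorner\Omega=df-(Rf)\eta=df=V\lrcorner\Omega$ and $X_{f}\lrcorner\eta=0=V\lrcorner\eta$, so applying the musical isomorphism $w\mapsto w\lrcorner\Omega+(w\lrcorner\eta)\eta$ of the cosymplectic structure, which is injective, forces $V=X_{f}$. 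Finally, to see that $f$ is a constant of motion I would use the antisymmetry identity once more, $X_{H}f=-X_{f}H=-VH=-L_{V}H=0$, which together with $Rf=0$ yields $(R+X_{H})f=0$.

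The main obstacle is not any individual computation --- each is short --- but rather the bookkeeping of the $\eta$- and $R$-contributions, which here play a genuinely double role. In the direct implication the vanishing of $Rf$ is exactly what kills the anomalous term $d(Rf)\wedge\eta$ in $L_{X_{f}}\Omega$, whereas in the converse the same condition must be \emph{deduced} from $R\lrcorner\Omega=0$ rather than assumed, and only then does $X_{f}$ reproduce $V$ exactly. The delicate point is therefore to verify simultaneously that the locally reconstructed $f$ meets all three requirements at once --- $Rf=0$, $V=X_{f}$, and the dynamical invariance $(R+X_{H})f=0$ --- each of which feeds into the next through the antisymmetry of $\Omega$ and the defining property of the Reeb field.
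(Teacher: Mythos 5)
Your proof is correct and follows essentially the same route as the paper's: Cartan's formula plus the Poincar\'e lemma to produce the local $f$, the property $R\lrcorner\Omega=0$ to obtain $Rf=0$ for free, and the antisymmetry of $\Omega$ (equivalently, of the Poisson bracket) to relate $X_{f}H$ and $X_{H}f$ in both implications. Your only addition is making explicit the identification $V=X_{f}$ via the injectivity of the map $w\mapsto w\lrcorner\Omega+(w\lrcorner\eta)\eta$, which the paper leaves implicit.
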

\begin{proof}
\begin{enumerate}
\item[i)] We have that $L_{X_{f}}\Omega=X_{f}\lrcorner d\Omega+d(X_{f}\lrcorner \Omega)=
d(df-(Rf)\eta)=-d(Rf)\wedge\eta=0$ and $X_{f}\lrcorner\eta=0$. In addition 
\begin{equation}
L_{X_{f}}H=X_{f}H=\lbrace H,f\rbrace=R(f)=0.
\end{equation}
Therefore $X_{f}$ is an infinitesimal symmetry of $(M,\Omega,\eta,H)$.
\item[ii)] 
In this case we have
\begin{equation*}
\begin{split}
L_{V}\Omega=0 &\Longrightarrow V\lrcorner d\Omega + d(V\lrcorner\Omega)=0\\
&\Longrightarrow d(V\lrcorner\Omega)=0\\
&\Longrightarrow V\lrcorner\Omega=df\,.
\end{split}
\end{equation*}
On the other hand
\begin{equation}
Rf=df(R)=\Omega(V,R)=-(R\lrcorner\Omega)(V)=0\,.
\end{equation}
Finally,
\begin{equation}
\begin{split}
&L_{E_{H}}f = E_{H}f= X_{H}f+Rf= df(X_{H})=\Omega(V,X_{H})=-(X_{H}\lrcorner\Omega)(V)\\
&=-dH(V)+(RH)\eta(V)=-VH+(RH)(V\lrcorner\eta)=0.
\end{split}
\end{equation}
So we have that $V=X_{f}$ and $f$ is a constant of motion with $Rf=0$.
\end{enumerate}
\end{proof}

As in the symplectic case, any infinitesimal symmetry generates a 
one-parameter group of canonical transformations leaving the Hamiltonian invariant.
To see this, we begin with the following definition from~\cite{AE2023}.
\begin{de}
\emph{A canonical transformation} for the cosymplectic manifold $(M,\Omega,\eta)$ is a 
possibly-local diffeomorphism $F$ on $M$ such that $F^{*}\Omega=\Omega$ and $F^{*}\eta=\eta$.
\end{de}
If $F:M\longrightarrow M$ is a diffeomorphism, then $(F^{*}\Omega,F^{*}\eta)$ is a cosymplectic structure on $M$ and therefore around any point $p\in M$ there are local coordinates $(Q^{1},\cdots,Q^{n},P_{1},\cdots,P_{n},T)$ such that
\begin{equation}
F^{*}\Omega=dQ^{i}\wedge dP_{i}\,,\qquad \textrm{and} \qquad F^{*}\eta=dT.
\end{equation}
If we think of $F$ locally as a coordinate transformation $(q^{1},\cdots,q^{n},p_{1},\cdots,p_{n},t)\longmapsto (Q^{1},\cdots,Q^{n},P_{1},\cdots,P_{n},t)$, then it is a canonical transformation if and only if the new coordinates $(Q^{1},\cdots,Q^{n},P_{1},\cdots,P_{n},t)$ are also canonical. 

Finally, we have the following property, which is completely analogous to the symplectic case (cf.~Proposition~\ref{prop:symm-canonical-sympl}).
\begin{prop}\label{prop:symm-canonical-cosympl}
If $V$ is an infinitesimal symmetry of $(M,\Omega,\eta, H)$, 
then the flow %$\varphi$ 
of $V$ defines a one-parameter group of canonical transformations
%namely $\lbrace \varphi_{s}\rbrace$, 
that leave the Hamiltonian invariant; 
reciprocally, if we have a one-parameter group of canonical transformations that leave the Hamiltonian $H$ invariant, then the infinitesimal generator of the flow defined by such one-parameter group is an infinitesimal symmetry of $(M,\Omega,\eta, H)$.
\end{prop}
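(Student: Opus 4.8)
The plan is to reproduce, in the cosymplectic setting, the argument underlying Proposition~\ref{prop:symm-canonical-sympl}, the only genuinely new feature being that a cosymplectic canonical transformation must preserve the \emph{pair} $(\Omega,\eta)$ rather than a single $2$-form. The engine of the proof is the elementary identity
\begin{equation}
\frac{d}{ds}\,\varphi_{s}^{*}\alpha=\varphi_{s}^{*}L_{V}\alpha,
\end{equation}
valid for any tensor field $\alpha$, where $\varphi_{s}$ denotes the flow of $V$; applied to a tensor with vanishing Lie derivative it shows that $s\mapsto\varphi_{s}^{*}\alpha$ is constant and hence equal to $\alpha$.

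For the forward implication I would start from an infinitesimal symmetry $V$, so that $L_{V}\Omega=0$, $L_{V}H=0$ and $V\lrcorner\eta=0$. Since $\eta$ is closed, Cartan's formula gives $L_{V}\eta=V\lrcorner d\eta+d(V\lrcorner\eta)=0$, so $\Omega$, $\eta$ and $H$ all have vanishing Lie derivative along $V$. Feeding each of them into the identity above yields $\varphi_{s}^{*}\Omega=\Omega$, $\varphi_{s}^{*}\eta=\eta$ and $\varphi_{s}^{*}H=H$ for every $s$ for which the flow is defined, which is precisely the statement that each $\varphi_{s}$ is a canonical transformation for $(M,\Omega,\eta)$ leaving $H$ invariant.

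For the converse I would differentiate: given a one-parameter group $\{\varphi_{s}\}$ of canonical transformations with $\varphi_{s}^{*}H=H$, we have $\varphi_{s}^{*}\Omega=\Omega$, $\varphi_{s}^{*}\eta=\eta$ and $\varphi_{s}^{*}H=H$, and taking $\frac{d}{ds}|_{s=0}$ recovers $L_{V}\Omega=0$, $L_{V}\eta=0$ and $L_{V}H=0$ for the generator $V$. The step I expect to be the main obstacle is upgrading $L_{V}\eta=0$ to the pointwise condition $V\lrcorner\eta=0$ demanded by the definition of an infinitesimal symmetry: from $L_{V}\eta=0$ together with $d\eta=0$ one only deduces $d(V\lrcorner\eta)=0$, i.e. that $V\lrcorner\eta$ is locally constant, and this alone does not force it to vanish (the Reeb field already satisfies $L_{R}\eta=0$ while $R\lrcorner\eta=1$). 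To close the gap I would invoke the time-preserving character of cosymplectic canonical transformations exhibited in the coordinate description above, where $F$ acts as $(q,p,t)\mapsto(Q,P,t)$ and hence fixes the time coordinate; for a one-parameter group this gives $t\circ\varphi_{s}=t$ in the adapted coordinates, so that $V\lrcorner\eta=dt(V)=\frac{d}{ds}|_{s=0}(t\circ\varphi_{s})=0$. With $V\lrcorner\eta=0$ in hand, $V$ satisfies all three defining conditions and is therefore an infinitesimal symmetry, which completes the plan.
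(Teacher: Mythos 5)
Your forward direction is correct and is essentially the paper's own (implicit) argument: from $V\lrcorner\eta=0$ and $d\eta=0$ you get $L_V\eta=0$, and then the identity $\frac{d}{ds}\varphi_s^{*}\alpha=\varphi_s^{*}L_V\alpha$ applied to $\Omega$, $\eta$ and $H$ shows that every $\varphi_s$ is a canonical transformation leaving $H$ invariant. You have also correctly located the one genuine subtlety of the converse, which the paper glosses over entirely (it states the proposition by analogy with the symplectic case and gives no proof): differentiating the invariance relations only yields $L_V\eta=0$, which is strictly weaker than the condition $V\lrcorner\eta=0$ required by the definition of an infinitesimal symmetry.

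However, the patch you propose for this gap does not work. The claim that a cosymplectic canonical transformation ``fixes the time coordinate'' is not a consequence of the paper's definition: a canonical transformation is only required to satisfy $F^{*}\Omega=\Omega$ and $F^{*}\eta=\eta$, and since locally $\eta=dt$, the condition $F^{*}\eta=\eta$ gives $d(t\circ F)=dt$, i.e. $t\circ F=t+c$ with $c$ locally constant --- time translations are allowed. The coordinate picture $(q,p,t)\mapsto(Q,P,t)$ that you invoke is the paper's informal convention, not a theorem, so appealing to it is circular: it assumes precisely the vanishing of $V\lrcorner\eta$ that you need to establish. For a one-parameter group one gets $t\circ\varphi_s=t+c(s)$ with $c(s+s')=c(s)+c(s')$, hence $c(s)=ks$ and $V\lrcorner\eta=k$, an arbitrary constant. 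In fact the converse is false as literally stated: if $RH=0$, the Reeb flow $\varphi_s(q,p,t)=(q,p,t+s)$ is a one-parameter group of canonical transformations ($\varphi_s^{*}\Omega=\Omega$, $\varphi_s^{*}\eta=\eta$) leaving $H$ invariant, yet its generator $R$ satisfies $R\lrcorner\eta=1\neq 0$ and so is not an infinitesimal symmetry --- this is exactly the example you mention parenthetically, without noticing that it kills the statement rather than merely the naive proof. The converse can only be rescued by strengthening the hypothesis (e.g. demanding that the group preserve the time function itself, $t\circ\varphi_s=t$, rather than just the form $\eta$) or by weakening the definition of infinitesimal symmetry to require only $L_V\eta=0$; no proof along your lines, or the paper's, can close the gap as the statement stands.
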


\begin{ex}
\label{ex2}
Let us consider the Hamiltonian system $(M,\Omega,\eta,H)$ with $(q_{1},q_{2},p_{1},p_{2},t)$ 
canonical coordinates and $H(q^{1},q^{2},p_{1},p_{2},t)=\frac{1}{2}(p_{1}^{2}+p_{2}^{2})+tq_{1}$. 
The function $f(q_{1},q_{2},p_{1},p_{2},t)=p_{2}$ is a constant of motion with $R(f)=0$. 
Then $X_{f}$ is an infinitesimal symmetry of $(M,\Omega,\eta,H)$. Indeed
\begin{equation}
X_{f}=\frac{\partial}{\partial q_{2}},
\end{equation}
\begin{equation}
L_{X_{f}}\Omega=d(X_{f}\lrcorner \Omega)=d(dp_{2})=0, \hspace{1cm}X_{f}\lrcorner\eta=dt(\frac{\partial}{\partial q_{2}})=0
\end{equation}
and
\begin{equation}
L_{X_{f}}H=X_{f}H=\frac{\partial H}{\partial q_{2}}=0.
\end{equation}
\end{ex}

In the cosymplectic context we can also define infinitesimal dynamical symmetries as those vector fields that preserve the dynamics.
\begin{de}
A smooth vector field $W$ on $M$ is \emph{an infinitesimal dynamical symmetry} of $(M,\Omega,\eta,H)$ if $[W,E_{H}]=0$. 
\end{de}
Since $[W,E_{H}]=[W,X_{H}]+[W,R]$, we have that if $[W,X_{H}]=[W,R]=0$  then $W$ is an infinitesimal dynamical symmetry of $(M,\Omega,\eta,H)$. 
However, the converse statement is not true. 
By considering this kind of infinitesimal dynamical symmetries we have analogous results to the symplectic framework. 
As in the symplectic framework, we have that the assignment $f\mapsto X_{f}$ is a Lie algebra antihomomorphism between the Lie algebras $(C^{\infty}(M),\lbrace,\rbrace)$ and $(\mathfrak{X}(M),[,])$ \cite{CLL92}, i.e., $X_{\lbrace f,g\rbrace}=-[X_{f},X_{g}]$. In addition $X_{Rf}=-[X_{f},R]$. We have the following results.
\begin{prop}
Not every infinitesimal dynamical symmetry $W$ of $(M,\Omega,\eta,H)$ is such that $[W,X_{H}]=[W,R]=0$. 
\end{prop}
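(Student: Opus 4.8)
The plan is to prove the statement by exhibiting an explicit counterexample, i.e.\ a vector field $W$ that is an infinitesimal dynamical symmetry ($[W,E_{H}]=0$) but for which $[W,X_{H}]$ and $[W,R]$ do not both vanish. Since $E_{H}=X_{H}+R$, the defining condition $[W,E_{H}]=0$ only forces $[W,X_{H}]=-[W,R]$, so there is no obstruction to both brackets being nonzero provided they cancel. The most economical choice is $W=E_{H}$ itself: the bracket $[E_{H},E_{H}]=0$ vanishes identically, so $E_{H}$ is automatically an infinitesimal dynamical symmetry, and it only remains to show that its two contributions $[E_{H},X_{H}]$ and $[E_{H},R]$ are separately nonzero for a suitable system.

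Next I would reduce these two brackets to a single quantity. Using $[R,R]=0$ one gets $[E_{H},R]=[X_{H},R]$ and $[E_{H},X_{H}]=[R,X_{H}]$, and invoking the identity $X_{Rf}=-[X_{f},R]$ stated above with $f=H$ yields $[E_{H},R]=[X_{H},R]=-X_{RH}$ and $[E_{H},X_{H}]=-[E_{H},R]=X_{RH}$. Hence the whole question reduces to checking that $X_{RH}\neq0$, which occurs precisely when the system depends genuinely on time in such a way that $RH$ is non-constant.

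Concretely, I would take the cosymplectic system of Example~\ref{ex2}, where $R=\partial/\partial t$ and $H=\frac{1}{2}(p_{1}^{2}+p_{2}^{2})+tq_{1}$, so that $RH=q_{1}$. Computing the Hamiltonian vector field of $q_{1}$ from $X_{q_{1}}\lrcorner\Omega=dq_{1}$ together with $X_{q_{1}}\lrcorner\eta=0$ gives $X_{q_{1}}=-\partial/\partial p_{1}\neq0$. Therefore $[E_{H},R]=-X_{q_{1}}=\partial/\partial p_{1}\neq0$ and $[E_{H},X_{H}]=-\partial/\partial p_{1}\neq0$, exhibiting $W=E_{H}$ as an infinitesimal dynamical symmetry of $(M,\Omega,\eta,H)$ for which neither $[W,X_{H}]$ nor $[W,R]$ vanishes. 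This proves the proposition.

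The step requiring the most care is not any single computation but the conceptual point that the counterexample must live in a truly time-dependent system: if $RH$ were constant then $X_{RH}=0$ and $E_{H}$ would satisfy $[E_{H},X_{H}]=[E_{H},R]=0$, so the route above would collapse. The time-dependent Hamiltonian of Example~\ref{ex2} is exactly what guarantees $X_{RH}\neq0$, and one should verify that the identity $X_{Rf}=-[X_{f},R]$ is applied legitimately to $f=H$, which it is, since $X_{H}$ is by definition the Hamiltonian vector field of $H$.
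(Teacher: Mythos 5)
Your proof is correct, but it follows a genuinely different route from the paper's. The paper's counterexample is a Hamiltonian vector field: it takes $W=X_{f}$ for a constant of motion $f$ with $Rf\neq 0$, computes $[X_{f},E_{H}]=X_{\lbrace H,f\rbrace}-X_{Rf}=-X_{\lbrace f,H\rbrace+Rf}=0$, and then remarks that $[X_{f},R]=-X_{Rf}$ ``is not necessarily zero.'' You instead take $W=E_{H}$ itself, which is trivially a dynamical symmetry, and reduce the whole question to the single condition $X_{RH}\neq 0$ via the same identity $X_{Rf}=-[X_{f},R]$ applied to $f=H$; note that $E_{H}$ is never a Hamiltonian vector field (since $E_{H}\lrcorner\eta=1$ while $X_{f}\lrcorner\eta=0$ for every $f$), so your counterexample is of a structurally different kind. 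What your approach buys: it is fully explicit and closes the small gap the paper leaves open, since the paper never exhibits a system in which $X_{Rf}\neq 0$ actually holds --- and indeed not every constant of motion with $Rf\neq 0$ would do: for the Hamiltonian of Example~\ref{ex2}, the function $f=p_{1}+t^{2}/2$ is a constant of motion with $Rf=t\neq 0$, yet $X_{Rf}=X_{t}=0$, because $X_{g}=0$ whenever $g$ depends on $t$ alone ($dg=(Rg)\eta$ in that case). Your verification that $RH=q_{1}$ and $X_{q_{1}}=-\partial/\partial p_{1}\neq 0$ settles the point concretely, and both brackets $[E_{H},X_{H}]=X_{RH}$ and $[E_{H},R]=-X_{RH}$ are then nonzero. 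What the paper's approach buys in exchange: its counterexample stays inside the class of Hamiltonian vector fields of constants of motion, which is precisely the class that the surrounding discussion (Noether's theorem and the antihomomorphism $f\mapsto X_{f}$) concerns, so it shows that the splitting $[W,X_{H}]=[W,R]=0$ already fails for the most natural dynamical symmetries, not only for the evolution field itself.
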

\begin{proof}
If $f$ is a constant of motion with $Rf\neq 0$ then 
\begin{equation}
[X_{f},E_{H}]=[X_{f},X_{H}]+[X_{f},R]=X_{\lbrace H,f\rbrace}-X_{Rf}=-X_{\lbrace f,H\rbrace+Rf}=0,
\end{equation}
i.e., $X_{f}$ is an infinitesimal dynamical symmetry, 
but $[X_{f},R]=-X_{Rf}$, which is not necessarily zero.
\end{proof}
\begin{prop}
Every infinitesimal symmetry $W$ of $(M,\Omega,\eta,H)$ is an infinitesimal dynamical symmetry.
\end{prop}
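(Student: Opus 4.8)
The plan is to reduce this statement to the cosymplectic Noether theorem together with the Lie-algebraic identities recorded just above it. First I would invoke part (ii) of that theorem: if $W$ is an infinitesimal symmetry of $(M,\Omega,\eta,H)$, then there is a (possibly-local) function $f$ with $W=X_{f}$, where $f$ is a constant of motion satisfying $Rf=0$. Thus the assertion $[W,E_{H}]=0$ becomes the assertion $[X_{f},E_{H}]=0$ for this particular $f$, which is exactly the computation carried out in the immediately preceding proposition.

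Concretely, I would split $E_{H}=X_{H}+R$ and apply the antihomomorphism property $X_{\lbrace f,g\rbrace}=-[X_{f},X_{g}]$ together with the identity $X_{Rf}=-[X_{f},R]$ to obtain
\begin{equation*}
[X_{f},E_{H}]=[X_{f},X_{H}]+[X_{f},R]=X_{\lbrace H,f\rbrace}-X_{Rf}=-X_{\lbrace f,H\rbrace+Rf}.
\end{equation*}
The key step is then to recognize that $\lbrace f,H\rbrace+Rf=E_{H}f$, since $X_{H}f=\lbrace f,H\rbrace$ under the sign convention fixed in the proof of the Noether theorem (where $X_{f}H=\lbrace H,f\rbrace$). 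Because $f$ is a constant of motion we have $E_{H}f=0$, hence $\lbrace f,H\rbrace+Rf=0$ and therefore $[X_{f},E_{H}]=-X_{0}=0$. I would emphasize that this conclusion uses only that $f$ is a constant of motion and not the stronger condition $Rf=0$; in other words, an infinitesimal symmetry is simply a special case of the dynamical symmetries produced by constants of motion in the previous proposition.

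I do not expect a genuine obstacle here, as the argument is essentially a one-line consequence of results already established; the only point requiring care is the bookkeeping of the bracket sign conventions, so that the function appearing as the Hamiltonian of $[X_{f},E_{H}]$ is correctly identified with $-E_{H}f$ and thus vanishes. If instead one prefers a proof intrinsic to the geometry rather than through the Noether characterization, I would argue via Cartan's identity: from $L_{W}\Omega=0$ and $L_{W}\eta=0$ one gets $[W,E_{H}]\lrcorner\Omega=L_{W}(E_{H}\lrcorner\Omega)=L_{W}(dH-(RH)\eta)=-(W(RH))\eta$ and $[W,E_{H}]\lrcorner\eta=L_{W}(E_{H}\lrcorner\eta)=L_{W}(1)=0$; contracting the first relation with the Reeb field $R$ and using $R\lrcorner\Omega=0$ and $\eta(R)=1$ shows that the $\eta$-coefficient $W(RH)$ must vanish, so that $[W,E_{H}]\lrcorner\Omega=0$ as well, and nondegeneracy of the cosymplectic structure then forces $[W,E_{H}]=0$.
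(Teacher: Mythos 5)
Your proposal is correct, and your primary argument is essentially the paper's own proof: both invoke part (ii) of the cosymplectic Noether theorem to write $W=X_{f}$ with $f$ a (possibly local) constant of motion satisfying $Rf=0$, and then apply the identities $X_{\lbrace f,g\rbrace}=-[X_{f},X_{g}]$ and $X_{Rf}=-[X_{f},R]$. The organizational difference is that the paper establishes the two stronger facts $[X_{f},X_{H}]=X_{\lbrace H,f\rbrace}=0$ and $[X_{f},R]=-X_{Rf}=0$ separately (each of which genuinely uses $Rf=0$), while you collapse them into $[X_{f},E_{H}]=-X_{\lbrace f,H\rbrace+Rf}=-X_{E_{H}f}=0$, correctly noting that for this weaker conclusion the constant-of-motion property alone suffices --- indeed your computation is verbatim the one in the proposition immediately preceding this statement in the paper. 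The paper's finer splitting is not wasted, however: the extra information $[W,X_{H}]=[W,R]=0$ is precisely what matters later in the section, where only dynamical symmetries commuting with both $X_{H}$ and $R$ are shown to generate one-parameter groups of canonoid transformations. Your alternative Cartan-identity argument is a genuinely different route that the paper does not take: it bypasses the Noether characterization (and hence the auxiliary local function $f$) altogether, working directly from $L_{W}\Omega=0$, $W\lrcorner\eta=0$ and $L_{W}H=0$; the key step --- contracting $[W,E_{H}]\lrcorner\Omega=-(W(RH))\,\eta$ with $R$ and using $R\lrcorner\Omega=0$, $\eta(R)=1$ to force $W(RH)=0$ --- is correct, and the conclusion then follows from the bundle isomorphism $X\mapsto X\lrcorner\Omega+(X\lrcorner\eta)\eta$. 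That version is more self-contained and avoids any locality caveat, at the cost of not exhibiting $W$ as a Hamiltonian vector field, which is the structural fact the paper leans on throughout the section.
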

\begin{proof}
Let $W$ be an infinitesimal symmetry. Then $W=X_{f}$ for some constant of motion $f$ such that $Rf=0$. We have
\begin{equation}
[X_{f},X_{H}]=[X_{f},X_{H}]=X_{\lbrace H,f\rbrace}=0
\end{equation}
and
\begin{equation}
[X_{f},R]=-X_{Rf}=0\,.
\end{equation}
\end{proof}
There are non-Hamiltonian vector fields $W$ that satisfy $[W,X_{H}]=[W,R]=0$ and are not infinitesimal dynamical symmetries of $(M,\Omega,\eta,H)$. 
To show this, we introduce the concept of scaling symmetry in the cosymplectic framework.
\begin{de}
$X\in\mathfrak{X}(M)$ is \emph{a scaling symmetry of degree $\Lambda\in\mathbb{R}$}  for the Hamiltonian system $(M,\Omega,\eta,H)$ if
\begin{enumerate}
\item $L_{X}\Omega=\Omega$,
\item $X\lrcorner \eta=0$ (therefore $L_{X}\eta=0$) and
\item $L_{X}H=\Lambda H$
\end{enumerate}
\end{de}
As in the symplectic case, we have that scaling symmetries are not Hamiltonian for any function $f\in C^{\infty}(M)$, 
and thus they are not infinitesimal symmetries.
Moreover, if $X$ is a scaling symmetry of degree $\Lambda$, 
then $[X,X_{H}]=(\Lambda-1)X_{H}$. 
This is the content of the following result:
\begin{lem}\label{lem:scaling-cosympl}
If $X$ is a scaling symmetry of degree $\Lambda$ then $[X,X_{H}]=(\Lambda-1)X_{H}$.
\end{lem}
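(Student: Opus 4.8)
The plan is to characterize $X_{H}$ through its two defining cosymplectic equations, $X_{H}\lrcorner\Omega=dH-(RH)\eta$ and $X_{H}\lrcorner\eta=0$, and then to verify that $[X,X_{H}]$ satisfies exactly the equations that define the Hamiltonian vector field of the function $(\Lambda-1)H$. Since the cosymplectic structure $(\Omega,\eta)$ is non-degenerate, any vector field annihilated by both $\Omega$ and $\eta$ must vanish, so matching these equations is enough to conclude $[X,X_{H}]=(\Lambda-1)X_{H}$. The single tool throughout is the Leibniz-type identity $L_{X}(Y\lrcorner\alpha)=[X,Y]\lrcorner\alpha+Y\lrcorner(L_{X}\alpha)$, applied with $Y=X_{H}$ and with $Y=R$, combined with the three scaling relations $L_{X}\Omega=\Omega$, $L_{X}\eta=0$ and $L_{X}H=\Lambda H$.

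The first and genuinely essential step is to establish the auxiliary fact $[X,R]=0$. Applying $L_{X}$ to $R\lrcorner\Omega=0$ and using $L_{X}\Omega=\Omega$ gives
\[
0=L_{X}(R\lrcorner\Omega)=[X,R]\lrcorner\Omega+R\lrcorner(L_{X}\Omega)=[X,R]\lrcorner\Omega+R\lrcorner\Omega=[X,R]\lrcorner\Omega,
\]
while applying $L_{X}$ to $\eta(R)=1$ and using $L_{X}\eta=0$ gives $0=\eta([X,R])$. By non-degeneracy of $(\Omega,\eta)$, these two relations force $[X,R]=0$.

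With $[X,R]=0$ in hand, I would next take $L_{X}$ of the two defining equations of $X_{H}$. The identity applied to $X_{H}\lrcorner\eta=0$, together with $L_{X}\eta=0$, immediately yields $[X,X_{H}]\lrcorner\eta=0$. Applied to $X_{H}\lrcorner\Omega=dH-(RH)\eta$, and using $L_{X}\Omega=\Omega$, it gives
\[
[X,X_{H}]\lrcorner\Omega=L_{X}\!\left(dH-(RH)\eta\right)-\left(dH-(RH)\eta\right).
\]
Here $L_{X}dH=d(L_{X}H)=\Lambda\,dH$ and $L_{X}\!\left((RH)\eta\right)=\left(X(RH)\right)\eta$, again because $L_{X}\eta=0$. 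The remaining ingredient is $X(RH)=\Lambda\,RH$, which follows from $[X,R]=0$ via $L_{X}L_{R}H=L_{R}L_{X}H+L_{[X,R]}H=\Lambda\,RH$. Substituting, the $\eta$-terms combine precisely so that $[X,X_{H}]\lrcorner\Omega=(\Lambda-1)\left(dH-(RH)\eta\right)=(\Lambda-1)\,X_{H}\lrcorner\Omega$; together with $[X,X_{H}]\lrcorner\eta=0$ and non-degeneracy, this gives $[X,X_{H}]=(\Lambda-1)X_{H}$.

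The main obstacle is the extra term $(RH)\eta$ in the cosymplectic Hamiltonian equation, which has no counterpart in the symplectic computation reviewed earlier. Controlling it reduces entirely to proving $X(RH)=\Lambda RH$, and this is exactly where $[X,R]=0$ is indispensable: without it one would be left with a spurious term $\left([X,R]H\right)\eta$ spoiling the clean identity. I therefore expect the verification of $[X,R]=0$ from non-degeneracy to be the key step, with the rest being careful bookkeeping of Lie derivatives.
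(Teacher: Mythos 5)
Your proof is correct and follows essentially the same route as the paper's: both first establish $[X,R]=0$ by contracting with $\Omega$ and $\eta$, then compute $[X,X_{H}]\lrcorner\eta=0$ and $[X,X_{H}]\lrcorner\Omega=(\Lambda-1)\bigl(dH-(RH)\eta\bigr)$ via the Leibniz identity, and conclude by the non-degeneracy of the cosymplectic structure (equivalently, by recognizing the defining equations of $X_{(\Lambda-1)H}$). The only cosmetic difference is that the paper obtains $[X,R]\lrcorner\eta=0$ from closedness of $\eta$ rather than from $L_{X}\eta=0$, which is an equivalent bookkeeping choice.
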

\begin{proof}
Let $X$ be a scaling symmetry. We have
\begin{equation}
[X,R]\lrcorner\eta=X\eta(R)-R\eta(X)=0
\end{equation}
and
\begin{equation}
[X,R]\lrcorner \Omega=L_{X}(R\lrcorner\Omega)-R\lrcorner L_{X}\Omega=0\,.
\end{equation}
Therefore $XR=RX$, and then
\begin{equation}
[X,X_{H}]\lrcorner\eta=X\eta(X_{H})-X_{H}\eta(X)=0
\end{equation}
and
\begin{equation}
\begin{split}
[X,X_{H}]\lrcorner \Omega &= L_{X}(X_{H}\lrcorner\Omega)-X_{H}\lrcorner L_{X}\Omega\\
&= L_{X}(dH-RH\eta)-X_{H}\lrcorner\Omega\\
&=L_{X}(dH)-XRH\eta-RHL_{X}\eta-dH+RH\eta\\
&=d(L_{X}H)-RXH\eta-dH+RH\eta\\
&=d((\Lambda-1)H)-R((\Lambda-1)H)\eta\,,
\end{split}
\end{equation}
which mean that
\begin{equation}
[X,X_{H}]=X_{(\Lambda-1)H}=(\Lambda-1)X_{H}.
\end{equation}
\end{proof}
Therefore we conclude that scaling symmetries of degree~1 are infinitesimal dynamical symmetries which satisfy $[X,X_{H}]=[X,R]=0$ but are not infinitesimal symmetries.

We also have a result analogous to Proposition 1 in~\cite{BJS2022} and which provides a useful characterization of scaling symmetries.
\begin{prop}\label{prop1}
$(M,\Omega,\eta,H)$ admits a scaling symmetry of degree $\Lambda$ if and only if there exists a primitive, $\lambda$, of $\Omega$ ($d\lambda=\Omega$), satisfying
\begin{equation}
X_{H}\lrcorner\lambda=-\Lambda H\hspace{1cm}\textit{and}\hspace{1cm}R\lrcorner\lambda=0.
\end{equation}
\end{prop}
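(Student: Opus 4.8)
The plan is to prove the two implications separately, exploiting the facts (already used in the proofs above) that the cosymplectic Hamiltonian vector field satisfies $X_{H}\lrcorner\Omega=dH-(RH)\eta$ and $X_{H}\lrcorner\eta=0$, together with the defining properties of the Reeb field $R\lrcorner\Omega=0$, $R\lrcorner\eta=1$ and the closedness $d\Omega=d\eta=0$. For the forward implication, given a scaling symmetry $X$ of degree $\Lambda$, the natural candidate is $\lambda:=X\lrcorner\Omega$, and everything then reduces to Cartan's formula and a handful of contractions.

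For this forward direction I would first observe that $d\lambda=d(X\lrcorner\Omega)=L_{X}\Omega-X\lrcorner d\Omega=L_{X}\Omega=\Omega$, using $d\Omega=0$ and the first scaling condition $L_{X}\Omega=\Omega$; hence $\lambda$ is a primitive of $\Omega$. To obtain the first identity I would write $X_{H}\lrcorner\lambda=\Omega(X,X_{H})=-(X_{H}\lrcorner\Omega)(X)=-dH(X)+(RH)(X\lrcorner\eta)$; the last term vanishes since $X\lrcorner\eta=0$, and $dH(X)=L_{X}H=\Lambda H$, so $X_{H}\lrcorner\lambda=-\Lambda H$. For the second identity, $R\lrcorner\lambda=\Omega(X,R)=-(R\lrcorner\Omega)(X)=0$. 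Thus $\lambda$ satisfies both required conditions.

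The hard part will be the converse, since $\Omega$ is degenerate along $R$ and therefore cannot be inverted directly to recover $X$ from $\lambda$. The key observation is that the image of the map $Y\mapsto Y\lrcorner\Omega$ is exactly the annihilator of $R$, and the hypothesis $R\lrcorner\lambda=0$ places $\lambda$ precisely in that image; to single out a unique $X$ I would use the cosymplectic musical isomorphism $\flat(Y):=Y\lrcorner\Omega+(Y\lrcorner\eta)\eta$ and set $X:=\flat^{-1}(\lambda)$. Evaluating $\flat(X)=\lambda$ on $R$ and using $(X\lrcorner\Omega)(R)=0$ and $\eta(R)=1$ forces $X\lrcorner\eta=\lambda(R)=0$, and substituting this back yields $X\lrcorner\Omega=\lambda$. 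It then remains only to verify the three defining conditions: $L_{X}\Omega=d(X\lrcorner\Omega)+X\lrcorner d\Omega=d\lambda=\Omega$; $X\lrcorner\eta=0$ by construction; and, reversing the contraction computation above, $L_{X}H=dH(X)=-(X\lrcorner\Omega)(X_{H})=-(X_{H}\lrcorner\lambda)=\Lambda H$. The only point requiring care is the existence and uniqueness of $\flat^{-1}$, which is the standard nondegeneracy of the cosymplectic structure, so no genuine obstacle remains once this isomorphism is in hand.
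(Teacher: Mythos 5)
Your proposal is correct and follows essentially the same route as the paper's proof: the same candidate $\lambda = X\lrcorner\Omega$ with Cartan's formula for the forward direction, and for the converse the same cosymplectic bundle isomorphism $Y\mapsto Y\lrcorner\Omega+(Y\lrcorner\eta)\eta$ (which you call $\flat$), including the identical step of contracting with $R$ to force $X\lrcorner\eta=0$ before verifying the three scaling conditions.
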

\begin{proof}
Let $X$ be a scaling symmetry of degree $\Lambda$ for $(M,\Omega,\eta,H)$. 
Then for $\lambda=X\lrcorner\Omega$ we have
\begin{equation}
d\lambda=d(X\lrcorner\Omega)=L_{X}\Omega=\Omega,
\end{equation}
\begin{equation}
X_{H}\lrcorner\lambda=X_{H}\lrcorner(X\lrcorner\Omega)=\Omega(X,X_{H})=-dH(X)+(RH)\eta(X)=-XH=-\Lambda H
\end{equation}
and 
\begin{equation}
R\lrcorner\lambda=R\lrcorner(X\lrcorner\Omega)=\Omega(X,R)=0.
\end{equation}
Now, for the converse statement, let us suppose that there exists a primitive, $\lambda$, of $\Omega$ such that $X_{H}\lrcorner\lambda=-\Lambda H$ and $R\lrcorner\lambda=0$; since the map $\mathcal{X}_{\eta\Omega}:TM\longrightarrow T^{*}M$ defined by $\mathcal{X}_{\eta\Omega}(X)=X\lrcorner\Omega+(X\lrcorner\eta)\eta$ is a bundle isomorphism \cite{CLL92}, there exists a smooth vector field $X$ on $M$ such that 
\begin{equation}
X\lrcorner\Omega+(X\lrcorner\eta)\eta=\lambda.
\end{equation}
Since $R\lrcorner\lambda=0$ we have 
\begin{equation}
\begin{split}
R\lrcorner(X\lrcorner\Omega+(X\lrcorner\eta)\eta)=0 &\Longrightarrow \Omega(X,R)+(X\lrcorner\eta)\eta(R)=0\\
&\Longrightarrow X\lrcorner\eta=0.
\end{split}
\end{equation}
Now 
\begin{equation}
\begin{split}
d\lambda=\Omega &\Longrightarrow d(X\lrcorner\Omega)=\Omega\\
&\Longrightarrow L_{X}\Omega=\Omega,
\end{split}
\end{equation}
and
\begin{equation}
\begin{split}
X_{H}\lrcorner\lambda=-\Lambda H &\Longrightarrow \Omega(X,X_{H})=-\Lambda H\\
&\Longrightarrow -dH(X)+(RH)\eta(X)=-\Lambda H\\
&\Longrightarrow -XH=-\Lambda H\\
&\Longrightarrow XH=\Lambda H.
\end{split}
\end{equation}
So we conclude that $X$ is a scaling symmetry of degree $\Lambda$ for $(M,\Omega,\eta,H)$.
\end{proof} 

In the rest of this section we characterize the infinitesimal generators of one-parameter groups of canonoid transformations for cosymplectic Hamiltonian systems in a way completely analogous to the symplectic case. 
Canonoid transformations for time-dependent Hamiltonian systems have been studied in~\cite{CR89,TC2018,TCA2022}. 
The geometrical definition of canonoid transformations in the cosymplectic framework is presented in~\cite{AE2023} as follows:
\begin{de}
We say that a possibly-local diffeomorphism $F:M\longrightarrow M$ is \emph{a canonoid transformation} for the Hamiltonian system $(M,\Omega,\eta,H)$ if $F^{*}\eta=\eta$ (or equivalently $F_{\ast}R=R$) and there exists a function $K\in C^{\infty}(M)$ such that 
\begin{equation}
X_{H}\lrcorner F^{*}\Omega=dK- RK \eta.
\end{equation}
\end{de}
Locally, a canonoid transformation is a coordinate transformation of the form 
$$(q^{1},\cdots,q^{n},p_{1},\cdots,p_{n},t)\longmapsto (Q^{1},\cdots,Q^{n},P_{1},\cdots,P_{n},t)$$ 
such that the equations of motion in the coordinates $(Q^{1},\cdots,Q^{n},P_{1},\cdots,P_{n},t)$ are 
\begin{equation}
%\begin{split}
\dot{Q^{i}} = \frac{\partial K}{\partial P_{i}},\qquad
\dot{P_{i}} =-\frac{\partial K}{\partial Q^{i}},\qquad
\dot{t} =1\,,
%\end{split}
\end{equation}
for some function $K\in C^{\infty}(M)$. 

The following result will be handy in order to obtain a characterization of the infinitesimal generators of canonoid transformations.
\begin{lem}
\label{teinffunc2}
$X$ is the infinitesimal generator of a one-parameter group of canonoid transformations if and only if $L_{X}\eta=0$ (or equivalently $L_{X}R=0$) and there exists (at least locally) a function $K$ on $M$ such that
\begin{equation}
\label{eqcanfunc2}
X_{H}\lrcorner L_{X}\Omega=dK-RK\eta. 
\end{equation}
\end{lem}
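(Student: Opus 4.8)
The plan is to mirror, step by step, the symplectic derivation reviewed above, replacing the closedness condition $d\alpha=0$ there by its cosymplectic counterpart. Writing $\varphi_{s}$ for the flow of $X$, the statement ``$X$ is the infinitesimal generator of a one-parameter group of canonoid transformations'' unpacks into the requirement that, for every $s$, the diffeomorphism $\varphi_{s}$ be canonoid, i.e. (a) $\varphi_{s}^{*}\eta=\eta$ and (b) there is a function $K_{s}$ with $X_{H}\lrcorner\varphi_{s}^{*}\Omega=dK_{s}-(RK_{s})\eta$. I would then treat (a) and (b) separately and reduce each to an infinitesimal (at $s=0$) statement.

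For (a), the standard flow identity $\frac{d}{ds}\varphi_{s}^{*}\eta=\varphi_{s}^{*}L_{X}\eta$ shows that $\varphi_{s}^{*}\eta=\eta$ for all $s$ holds if and only if $L_{X}\eta=0$; the equivalent formulation $L_{X}R=0$ I would extract from the Reeb identities $[X,R]\lrcorner\eta=-R\lrcorner L_{X}\eta$ and $[X,R]\lrcorner\Omega=-R\lrcorner L_{X}\Omega$ together with the bundle isomorphism $\mathcal{X}_{\eta\Omega}$ (this is the one point where properties of the Reeb field, rather than a purely formal manipulation, enter).

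The substantive step is (b), whose key ingredient is a characterization of the $1$-forms expressible as $dK-(RK)\eta$: working in Darboux coordinates $(q^{i},p_{i},t)$, where $\Omega=dq^{i}\wedge dp_{i}$, $\eta=dt$, $R=\partial_{t}$, one sees that $dK-(RK)\eta$ is exactly the ``spatial differential'' $\partial_{q^{i}}K\,dq^{i}+\partial_{p_{i}}K\,dp_{i}$, so that a $1$-form $\alpha$ is locally of the form $dK-(RK)\eta$ if and only if $R\lrcorner\alpha=0$ and $\eta\wedge d\alpha=0$ (the latter being the spatial Poincar\'e condition, with $t$ as a parameter). Applying this to $\alpha_{s}:=X_{H}\lrcorner\varphi_{s}^{*}\Omega$, and using that $\varphi_{s}^{*}\Omega$ is closed so that $d\alpha_{s}=L_{X_{H}}\varphi_{s}^{*}\Omega$, condition (b) for all $s$ becomes: $R\lrcorner\alpha_{s}=0$ and $\eta\wedge L_{X_{H}}\varphi_{s}^{*}\Omega=0$ for all $s$. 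Differentiating at $s=0$ (the left-hand sides producing $X_{H}\lrcorner L_{X}\Omega$ and, via $d(X_{H}\lrcorner L_{X}\Omega)=L_{X_{H}}L_{X}\Omega$, the form $L_{X_{H}}L_{X}\Omega$) and invoking the characterization once more, the differentiated condition reads precisely $X_{H}\lrcorner L_{X}\Omega=dK-(RK)\eta$ for some local $K=\frac{d}{ds}\big|_{s=0}K_{s}$; here at $s=0$ one has $K_{0}=H$ because $X_{H}\lrcorner\Omega=dH-(RH)\eta$ by definition of the Hamiltonian vector field. This gives the ``only if'' direction directly.

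The main obstacle is the ``if'' direction, namely integrating the infinitesimal condition back to all $s$. As in the symplectic case reviewed above, the equivalence between the all-$s$ statement and its $s=0$ derivative is not purely formal: one must show that the two families $s\mapsto R\lrcorner\alpha_{s}$ and $s\mapsto\eta\wedge L_{X_{H}}\varphi_{s}^{*}\Omega$ vanish identically given only that they and their $s$-derivatives vanish at the origin. I would establish this by a flow-propagation argument, differentiating the families in $s$ through $\frac{d}{ds}\varphi_{s}^{*}\Omega=\varphi_{s}^{*}L_{X}\Omega$ and using the group property $\varphi_{s+u}=\varphi_{s}\circ\varphi_{u}$ to exhibit each family as the solution of a linear evolution equation along the flow with vanishing initial data, hence identically zero; this is exactly the point that the symplectic treatment glosses over, and it is somewhat more delicate here because of the extra $\eta$- and $R$-dependent terms. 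Once this propagation is secured, combining it with part (a) shows that every $\varphi_{s}$ is canonoid, completing the proof.
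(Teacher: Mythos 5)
Your handling of condition (a), your pointwise characterization of the $1$-forms $dK-(RK)\eta$ (as those $\alpha$ with $R\lrcorner\alpha=0$ and $\eta\wedge d\alpha=0$), and your derivation of the ``only if'' half of (b) are all correct, and up to that point you are on the paper's own route: the paper likewise unpacks the definition into $\varphi_{s}^{*}\eta=\eta$ plus $X_{H}\lrcorner\varphi_{s}^{*}\Omega=dK_{s}-(RK_{s})\eta$, differentiates at $s=0$, and invokes the Poincar\'e lemma (your coordinate characterization is a tidy substitute for its closing computation $Rk_{0}=RK$). The genuine gap is the ``if'' direction, exactly where you place it, but your flow-propagation strategy cannot be completed. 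Setting $A(s)=\eta\wedge L_{X_{H}}\varphi_{s}^{*}\Omega$ and using $\varphi_{s}^{*}L_{X}\Omega=L_{X}\varphi_{s}^{*}\Omega$, the commutator identity $L_{X_{H}}L_{X}=L_{X}L_{X_{H}}+L_{[X_{H},X]}$ and $L_{X}\eta=0$, one finds
\[
A'(s)=L_{X}A(s)+\eta\wedge L_{[X_{H},X]}\varphi_{s}^{*}\Omega\,,
\]
and the second term is a source that is \emph{not} expressible through $A(s)$, so vanishing of $A$ and $A'$ at $s=0$ does not propagate: there is no closed linear evolution equation. Worse, no argument can close this gap, because the implication ``first-order condition at $s=0$ $\Rightarrow$ every $\varphi_{s}$ canonoid'' is false. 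Take $M=\mathbb{R}^{3}$ with $\Omega=dq\wedge dp$, $\eta=dt$, $H=p$ (so $X_{H}=\partial_{q}$, $R=\partial_{t}$) and $X=qp\,\partial_{q}+q\,\partial_{p}$. Then $L_{X}\eta=0$ and $X_{H}\lrcorner L_{X}\Omega=p\,dp=dK-(RK)\eta$ with $K=p^{2}/2$, so the right-hand side of the lemma holds; but $L_{X}\Omega=p\,dq\wedge dp$ and $L_{X}^{2}\Omega=(q+p^{2})\,dq\wedge dp$ give $L_{X_{H}}\varphi_{s}^{*}\Omega=\frac{s^{2}}{2}\,dq\wedge dp+O(s^{3})\neq0$, hence $\eta\wedge d\bigl(X_{H}\lrcorner\varphi_{s}^{*}\Omega\bigr)\neq0$ and $\varphi_{s}$ is not canonoid for small $s\neq0$.

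You should also know that the paper does not prove this direction either: its proof simply asserts that taking $\frac{d}{ds}\big|_{s=0}$ in the all-$s$ conditions yields an \emph{equivalent} statement, which is precisely the unjustified step (the same gloss appears in the symplectic discussion around Theorem~\ref{teinfcanonoid}, following~\cite{CFR2013}). So your diagnosis of where the difficulty sits is sharper than the paper's, but the difficulty is not a missing technical estimate --- read literally, the equivalence only holds in one direction. A correct formulation either takes condition \eqref{eqcanfunc2} (together with $L_{X}\eta=0$) as the \emph{definition} of an infinitesimal canonoid symmetry, or replaces the hypothesis by its flow-invariant form, namely $\eta\wedge L_{X_{H}}\varphi_{s}^{*}(L_{X}\Omega)=0$ for every $s$; under that hypothesis $A'(s)\equiv0$ and $A(0)=\eta\wedge L_{X_{H}}\Omega=0$, so the integration you envisage becomes trivial.
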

\begin{proof}
We have that $X$ is the infinitesimal generator of a one-parameter group of canonoid transformations if and only if $\varphi_{s}^{*}\eta=\eta$ and there exists a family $\lbrace K_{s}\rbrace$ of functions on $M$ such that
\begin{equation}
X_{H}\lrcorner \varphi_{s}^{*}\Omega=dK_{s}-RK_{s}\eta,
\end{equation}
which is equivalent to $d(X_{H}\lrcorner \varphi_{s}^{*}\Omega)=-d(RK_{s}\eta)$. 
Thus $X$ is the infinitesimal generator of a one-parameter group of canonoid transformations if and only if 
$\varphi_{s}^{*}\eta=\eta$ and
\begin{equation}
\label{eqinfcanonoid2}
L_{X_{H}}\varphi_{s}^{*}\Omega=-d(RK_{s}\eta).
\end{equation}
By taking $\frac{d}{ds}|_{s=0}$ in $\varphi_{s}^{*}\eta=\eta$ and in equation (\ref{eqinfcanonoid2}) we have that $X$ is the infinitesimal generator of a one-parameter group of canonoid transformations if and only if $L_{X}\eta=0$ and 
\begin{equation}
\label{eqRKcero}
L_{X_{H}}L_{X}\Omega=-d(Rk_{0}\eta),
\end{equation}
where $k_{0}=\frac{d}{ds}|_{s=0}K_{s}$. Equation (\ref{eqRKcero}) is equivalent to 
\begin{equation}
d(X_{H}\lrcorner L_{X}\Omega)=-d(Rk_{0}\eta),
\end{equation}
which is equivalent to the fact that there exists (at least locally) a function $K$ on $M$ such that
\begin{equation}
X_{H}\lrcorner L_{X}\Omega+Rk_{0}\eta=dK,
\end{equation}
or equivalently
\begin{equation}
X_{H}\lrcorner L_{X}\Omega=dK-Rk_{0}\eta.
\end{equation}
Finally, we have that $Rk_{0}=RK$. Indeed, 
\begin{equation}
RK-Rk_{0}=R\lrcorner(dK-Rk_{0}\eta)=R\lrcorner(X_{H}\lrcorner L_{X}\Omega)=L_{X}\omega(X_{H},R)=0.
\end{equation}
\end{proof}

In the cosymplectic case we know that the Hamiltonian function is not necessarily a constant of motion, namely, we have that $E_{H}H=RH$, although it is invariant under the flow of the Hamiltonian vector field $X_{H}$, i.e., $X_{H}H=0$. 
Therefore we have that the function $K$ in equation (\ref{eqcanfunc2}) is invariant under the flow of the Hamiltonian vector field $X_{H}$, and satisfies $E_{H}K=RK$. Indeed, 
\begin{equation}
X_{H}K=X_{H}\lrcorner(dK-RK\eta)=X_{H}\lrcorner(X_{H}\lrcorner L_{X}\Omega)=0.
\end{equation}

The following theorem provides a useful characterization of the infinitesimal generators of one-parameter groups of canonoid transformations, in complete analogy with Theorem~\ref{teinfcanonoid} for the symplectic case.
\begin{te}
\label{teinfcanonoid2}
$X$ is the infinitesimal generator of a one-parameter group of canonoid transformations for $(M,\Omega,\eta,H)$ if and only if $[X,R]=0$ and $[X,X_{H}]$ is a Hamiltonian vector field. In this case its Hamiltonian function is $L_{X}H-K$, where $K$ is given in equation~\eqref{eqcanfunc2}.
\end{te}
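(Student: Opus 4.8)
The plan is to deduce the theorem directly from Lemma~\ref{teinffunc2}, which already characterizes the infinitesimal generators of canonoid transformations by the two conditions $L_X\eta=0$ and the existence of a function $K$ with $X_H\lrcorner L_X\Omega=dK-RK\eta$. Since $L_X\eta=0$ is equivalent to $L_X R=[X,R]=0$ (as recorded in Lemma~\ref{teinffunc2}), the first of these conditions already matches the hypothesis $[X,R]=0$. It therefore remains only to show that, under the standing assumption $[X,R]=0$, the functional condition $X_H\lrcorner L_X\Omega=dK-RK\eta$ is equivalent to the assertion that $[X,X_H]$ is a Hamiltonian vector field, and to identify its Hamiltonian function as $L_XH-K$.

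The central computation I would carry out is the evaluation of $[X,X_H]$ through its two defining contractions, using the Cartan-type identity $[X,X_H]\lrcorner\alpha=L_X(X_H\lrcorner\alpha)-X_H\lrcorner L_X\alpha$ for $\alpha=\Omega$ and $\alpha=\eta$. For the $\eta$-component, since $X_H\lrcorner\eta=0$ and $L_X\eta=0$, both terms vanish and I obtain $[X,X_H]\lrcorner\eta=0$, which is precisely the condition needed for $[X,X_H]$ to be a Hamiltonian vector field in the cosymplectic sense. For the $\Omega$-component I would substitute $X_H\lrcorner\Omega=dH-(RH)\eta$ and expand $L_X(dH-(RH)\eta)$, using $L_X\eta=0$ together with the commutation $X(RH)=R(XH)=R(L_XH)$ (which follows from $[X,R]=0$) to reach
\begin{equation}
[X,X_H]\lrcorner\Omega=d(L_XH)-R(L_XH)\,\eta-X_H\lrcorner L_X\Omega.
\end{equation}

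With this identity in hand both implications are immediate. If the functional condition $X_H\lrcorner L_X\Omega=dK-RK\eta$ holds, then substituting gives $[X,X_H]\lrcorner\Omega=d(L_XH-K)-R(L_XH-K)\,\eta$, which together with $[X,X_H]\lrcorner\eta=0$ shows $[X,X_H]=X_{L_XH-K}$. Conversely, if $[X,X_H]$ is Hamiltonian with Hamiltonian function $g$, then comparing $X_g\lrcorner\Omega=dg-(Rg)\eta$ with the displayed identity yields $X_H\lrcorner L_X\Omega=d(L_XH-g)-R(L_XH-g)\,\eta$, so that $K=L_XH-g$ satisfies the condition of Lemma~\ref{teinffunc2}, giving $g=L_XH-K$ as claimed. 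The main point requiring care is the interchange $X(RH)=R(XH)$ in the $\Omega$-computation, which is exactly where the hypothesis $[X,R]=0$ is essential, and the verification that the $\eta$-component of $[X,X_H]$ vanishes, so that matching on $\Omega$ alone genuinely certifies $[X,X_H]$ as the Hamiltonian vector field of $L_XH-K$ rather than merely agreeing with it up to a multiple of $R$.
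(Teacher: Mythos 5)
Your proof is correct and takes essentially the same route as the paper's: both reduce the statement to Lemma~\ref{teinffunc2} and then compute the two contractions $[X,X_{H}]\lrcorner\eta$ and $[X,X_{H}]\lrcorner\Omega$ via $[X,X_{H}]\lrcorner\alpha=L_{X}(X_{H}\lrcorner\alpha)-X_{H}\lrcorner L_{X}\alpha$, arriving at the same key identity $[X,X_{H}]\lrcorner\Omega=d(L_{X}H)-R(L_{X}H)\,\eta-X_{H}\lrcorner L_{X}\Omega$ from which both implications follow. If anything, you are more explicit than the paper about where $[X,R]=0$ is actually used (the interchange $L_{X}(RH)=R(L_{X}H)$) and about why the vanishing of the $\eta$-component is needed so that matching the $\Omega$-contraction determines $[X,X_{H}]$ exactly rather than up to a multiple of $R$.
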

\begin{proof}
We have
\begin{equation}
[X,X_{H}]\lrcorner\eta=L_{X}(X_{H}\lrcorner\eta)-X_{H}\lrcorner L_{X}\eta=-X_{H}\lrcorner L_{X}\eta,
\end{equation}
and
\begin{equation}
\begin{split}
[X,X_{H}]\lrcorner\Omega &=L_{X}(X_{H}\lrcorner\Omega)-X_{H}\lrcorner L_{X}\Omega\\
&= L_{X}(dH-RH\eta)-X_{H}\lrcorner L_{X}\Omega\\
&= dL_{X}H-L_{X}(RH)\eta-(RH)L_{X}\eta-X_{H}\lrcorner L_{X}\Omega\\
&= dL_{X}H-RL_{X}H\eta-X_{H}\lrcorner L_{X}\Omega,
\end{split}
\end{equation}
so, using Lemma~\ref{teinffunc2}, we have that $X$ is the infinitesimal generator of a one-parameter group of canonoid transformations if and only if $L_{X}R=0$, $[X,X_{H}]\lrcorner\eta=0$ and there exists (at least locally) a function $K$ on $M$ such that
\begin{equation}
[X,X_{H}]\lrcorner\Omega= dL_{X}H-RL_{X}H\eta-dK+RK\eta=d(L_{X}H-K)-R(L_{X}H-K)\eta,
\end{equation}
which is equivalent to saying that $[X,R]=0$ and $[X,X_{H}]$ is a Hamiltonian vector field with Hamiltonian function $L_{X}H-K$.
\end{proof}

As in the symplectic case, we can now easily prove the following.
\begin{coro}\label{coroscalingcanonoid2}
Every scaling symmetry is the infinitesimal generator of a one-parameter group of canonoid transformations.
\end{coro}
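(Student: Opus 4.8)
The plan is to reduce the statement entirely to Theorem~\ref{teinfcanonoid2}, which characterizes the infinitesimal generators of one-parameter groups of canonoid transformations as exactly those vector fields $X$ satisfying the two conditions $[X,R]=0$ and ``$[X,X_{H}]$ is a Hamiltonian vector field''. So, given a scaling symmetry $X$ of degree $\Lambda$, I would simply verify that both conditions hold, and the conclusion follows at once.

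For the first condition, I would observe that $[X,R]=0$ has in fact already been established inside the proof of Lemma~\ref{lem:scaling-cosympl}: using $R\lrcorner\Omega=0$ and $R\lrcorner\eta=1$ together with $L_{X}\Omega=\Omega$ and $X\lrcorner\eta=0$, one finds $[X,R]\lrcorner\eta=0$ and $[X,R]\lrcorner\Omega=L_{X}(R\lrcorner\Omega)-R\lrcorner L_{X}\Omega=-R\lrcorner\Omega=0$, so that the cosymplectic nondegeneracy (the bundle isomorphism $\mathcal{X}_{\eta\Omega}$) forces $[X,R]=0$. I would either cite this computation directly or reproduce the single line.

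For the second condition, Lemma~\ref{lem:scaling-cosympl} gives immediately $[X,X_{H}]=(\Lambda-1)X_{H}=X_{(\Lambda-1)H}$, which is manifestly a Hamiltonian vector field with Hamiltonian function $(\Lambda-1)H$. Both hypotheses of Theorem~\ref{teinfcanonoid2} are thus met, so $X$ generates a one-parameter group of canonoid transformations. To match the symplectic corollary, I would finally pin down $K$: the theorem asserts that the Hamiltonian function of $[X,X_{H}]$ equals $L_{X}H-K$; comparing this with $(\Lambda-1)H$ and using $L_{X}H=\Lambda H$ yields $K=H$, exactly as in the symplectic case.

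I do not anticipate any genuine obstacle, since this is a corollary precisely because the analytic content was already discharged in Lemma~\ref{lem:scaling-cosympl} and Theorem~\ref{teinfcanonoid2}. The only point deserving a moment of care is the verification of $[X,R]=0$, as this is where the cosymplectic structure (rather than the purely symplectic one) enters and where one must use both the $\Omega$-component and the $\eta$-component together with nondegeneracy; but this step is routine and already available from the earlier lemma.
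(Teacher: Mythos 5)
Your proposal is correct and follows essentially the same route as the paper: reduce to Theorem~\ref{teinfcanonoid2} by invoking Lemma~\ref{lem:scaling-cosympl} to get $[X,X_{H}]=(\Lambda-1)X_{H}=X_{(\Lambda-1)H}$. In fact you are slightly more careful than the paper itself, whose proof only states that $[X,X_{H}]$ is Hamiltonian and silently omits the other hypothesis $[X,R]=0$ required by Theorem~\ref{teinfcanonoid2}; you correctly note that this condition is established inside the proof of Lemma~\ref{lem:scaling-cosympl} via $[X,R]\lrcorner\eta=0$, $[X,R]\lrcorner\Omega=0$ and the nondegeneracy of the cosymplectic structure, and your identification $K=H$ matches the symplectic case.
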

\begin{proof}
We have that if $X$ is a scaling symmetry of $(M,\Omega,\eta, H)$, then $[X,X_{H}]=(\Lambda-1)X_{H}$ (see Lemma~\ref{lem:scaling-cosympl}). Therefore $[X,X_{H}]$ is a Hamiltonian vector field with Hamiltonian function $(\Lambda-1)H$. 
\end{proof}

Nevertheless, if $W$ is an infinitesimal dynamical symmetry of $(M,\Omega,\eta,H)$ and $\varphi$ is its flow, then we do not necessarily have that $\eta$ (nor $R$) is invariant under $\varphi$. So we do not necessarily have a group of canonoid transformations. 
Since we are interested in symmetries related to canonoid transformations, we consider a subset of the set of infinitesimal dynamical symmetries. 
We have
\begin{coro}
 If $W$ is such that $[W,X_{H}]=0$ and $[W,R]=0$, then $W$ is the infinitesimal generator of a one-parameter group of canonoid transformations.
\end{coro}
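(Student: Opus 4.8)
The plan is to reduce the statement entirely to Theorem~\ref{teinfcanonoid2}, which characterizes the infinitesimal generators of one-parameter groups of canonoid transformations as exactly those vector fields $X$ satisfying two conditions: $[X,R]=0$, and $[X,X_{H}]$ is a Hamiltonian vector field. Since the two hypotheses of the corollary line up directly with the hypotheses of that theorem, the argument should be immediate, mirroring the corresponding corollary in the symplectic case.

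First I would note that the hypothesis $[W,R]=0$ is verbatim the first of the two conditions required by Theorem~\ref{teinfcanonoid2}, so no work is needed there. Next I would handle the second condition by observing that the remaining hypothesis $[W,X_{H}]=0$ says precisely that $[W,X_{H}]$ is the zero vector field. The only point that genuinely requires checking is that the zero field qualifies as a Hamiltonian vector field in the cosymplectic setting. I would confirm this by computing the Hamiltonian vector field of an arbitrary constant function $c$: from $X_{c}\lrcorner\Omega=dc-(Rc)\eta=0$ together with $X_{c}\lrcorner\eta=0$, and using that the bundle map $\mathcal{X}_{\eta\Omega}$ is an isomorphism, one obtains $X_{c}=0$. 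Hence $[W,X_{H}]=0$ is the Hamiltonian vector field of any constant, and both conditions of Theorem~\ref{teinfcanonoid2} are met.

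With both conditions in hand, the conclusion follows directly from Theorem~\ref{teinfcanonoid2}. I expect no real obstacle: the only non-formal step is the observation that the zero field is Hamiltonian in the cosymplectic sense, and the substance of the result rests entirely in Theorem~\ref{teinfcanonoid2}, this corollary merely packaging the special case $[W,X_{H}]=[W,R]=0$. As a side remark paralleling the symplectic discussion, combining $[W,X_{H}]=0$ with the Hamiltonian function $L_{W}H-K$ supplied by Theorem~\ref{teinfcanonoid2} (with $K$ as in equation~\eqref{eqcanfunc2}) shows that $L_{W}H-K$ is necessarily constant along the dynamics.
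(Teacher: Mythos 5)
Your proof is correct and takes essentially the same approach as the paper: both reduce the corollary to Theorem~\ref{teinfcanonoid2}, using $[W,R]=0$ as the first condition and observing that the zero vector field $[W,X_{H}]=0$ is Hamiltonian (for a constant Hamiltonian function), so the theorem applies directly. Your explicit check via the bundle isomorphism $\mathcal{X}_{\eta\Omega}$ that $X_{c}=0$ for constant $c$ merely spells out what the paper dispatches by analogy with the symplectic case.
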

\begin{proof}
If $W$ is such that $[W,X_{H}]=0$ and $[W,R]=0$, then $[W,X_{H}]=0$ is a Hamiltonian vector field with constant Hamiltonian function $L_{X}H-K=c\in\mathbb{R}$ (we can see this in a completely analogous way as in the symplectic case). By hypothesis, $X_{H}$ and $R$ are invariant under the flow $\varphi$ of $W$. 
Therefore, we conclude that $W$ is the infinitesimal generator of a one-parameter group of canonoid transformations.
\end{proof}
    
In addition, we have that for each $s\in \mathbb R$, the new Hamiltonian function is $K_{s}=\varphi_{s}^{*}H$. 
Indeed, for every $Y\in\mathfrak{X}(M)$ we have
\begin{equation}
\begin{split}
(X_{H}\lrcorner \varphi_{s}^{*}\Omega)(Y) &=(\varphi_{s}^{*}\Omega)(X_{H},Y)\\
&=\Omega(\varphi_{s \ast}X_{H},\varphi_{s \ast}Y)\\
&=(X_{H}\lrcorner\Omega)(\varphi_{s \ast}Y)\\
&=\varphi_{s}^{*}(X_{H}\lrcorner\Omega)(Y)\\
&=(\varphi_{s}^{*}dH)(Y)-(\varphi_{s}^{*}(RH))(\varphi_{s}^{*}\eta)(Y)\\
&=(d\varphi_{s}^{*}H)(Y)-(\varphi_{s \ast}R)(\varphi_{s}^{*}H)\eta(Y)\\
&=(d\varphi_{s}^{*}H)(Y)-R(\varphi_{s}^{*}H)\eta(Y)
\end{split}
\end{equation}
i.e.~the vector field $X_{H}$ is Hamiltonian with respect to the cosymplectic structure $(\varphi_{s}^{*}\Omega,\eta)$ with Hamiltonian function $K_{s}=\varphi_{s}^{*}H$.

\section{Symmetries for contact Hamiltonian systems}
\label{sec4}

In this section we review some recent results concerning canonoid transformations in the contact setting and introduce new results regarding scaling symmetries in this geometry (we refer to Appendix~\ref{sec:appA} for more definitions and properties).

Let $(M,\theta,H)$ be a contact Hamiltonian system. Since these systems are
more adapted to describe dissipative systems in general, in~\cite{BG2021,GGMRR2020,LL2020} the concept of constant of motion (conserved quantity) has been generalized to the so-called dissipated quantities as follows:
\begin{de}
\emph{A dissipated quantity} of the contact Hamiltonian system $(M,\theta,H)$ is a function $f\in C^{\infty}(M)$ that is dissipated at the same rate as the contact Hamiltonian $H$, i.e., $X_{H}f=-fRH$.
\end{de}
In~\cite{BG2021} it is argued that a suitable definition of Noether symmetries in the contact setting is the following: 
\begin{de}
\emph{A Noether symmetry} of $(M,\theta,H)$ is a contact Hamiltonian vector field $X_{f}$ such that $\lbrace f,H\rbrace=0$. 
\end{de}
Then the contact version of Noether's theorem reads:
\begin{te}
\label{tecontact}
$X_{f}$ is a Noether symmetry if and only if $f$ is a dissipated quantity.
\end{te}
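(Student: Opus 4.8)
The plan is to reduce the Noether condition $\{f,H\}=0$ to the dissipation condition $X_H f=-fRH$ by means of a single algebraic identity, after which the equivalence is immediate in both directions. First I would recall from Appendix~\ref{sec:appA} the defining relations of the contact Hamiltonian vector field, $X_f\lrcorner\theta=-f$ and $X_f\lrcorner d\theta=df-(Rf)\theta$, the Reeb relations $\theta(R)=1$ and $R\lrcorner d\theta=0$, and the form of the Jacobi bracket, which in these conventions reads $\{f,g\}=X_f g+g\,Rf$.

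The heart of the proof is the fundamental identity
\begin{equation}
X_f g + X_g f + f\,Rg + g\,Rf = 0 .
\end{equation}
I would establish it by contracting $X_f\lrcorner d\theta=df-(Rf)\theta$ with $X_g$ and using $\theta(X_g)=-g$ to obtain $d\theta(X_f,X_g)=X_g f+g\,Rf$, and then invoking the antisymmetry $d\theta(X_f,X_g)=-d\theta(X_g,X_f)$. Equivalently, this identity is nothing but the antisymmetry of the Jacobi bracket, $\{f,g\}=-\{g,f\}$, and it is the only nontrivial ingredient. Specializing to $g=H$ yields
\begin{equation}
\{f,H\} = X_f H + H\,Rf = -\bigl(X_H f + f\,RH\bigr),
\end{equation}
so that $X_H f=-\{f,H\}-f\,RH$; in particular, taking $f=H$ recovers the expected $X_H H=-H\,RH$, confirming that $H$ is itself dissipated.

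From this relation the theorem follows at once, with no local-to-global subtlety since $f$ is fixed: $X_f$ is a Noether symmetry $\iff\{f,H\}=0\iff X_H f+f\,RH=0\iff X_H f=-fRH$, i.e.\ $\iff f$ is a dissipated quantity. I expect the only delicate point to be the bookkeeping of signs and of the Reeb correction term: the contact bracket is not the naive derivative $X_f H$, and the normalization $\theta(X_f)=-f$ is precisely what allows one to trade $X_f H$ for $X_H f$. Once the fundamental identity is fixed in agreement with the appendix's conventions, the remainder is a one-line computation.
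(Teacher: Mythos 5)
Your proof is correct, but it does not follow the paper's own treatment, because the paper gives no proof at all: it cites the ``dissipation theorem'' of~\cite{GGMRR2020,LL2020}. Moreover, in the paper's conventions the statement is nearly definitional: the appendix defines the Jacobi bracket as $\lbrace f,g\rbrace=X_{g}f+fRg$, so $\lbrace f,H\rbrace=X_{H}f+fRH$, and the Noether condition $\lbrace f,H\rbrace=0$ is \emph{literally} the dissipation condition $X_{H}f=-fRH$, with no identity needed. What you do instead is derive the antisymmetry identity $X_{f}g+X_{g}f+fRg+gRf=0$ from the structure equations ($X_{f}\lrcorner\theta=-f$, $X_{f}\lrcorner d\theta=df-(Rf)\theta$, and the antisymmetry of $d\theta$); this is correct and buys you a self-contained, convention-independent argument — essentially the proof found in the cited references. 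One caveat: you state the paper's bracket convention as $\lbrace f,g\rbrace=X_{f}g+gRf$, but by your own identity this expression equals $\lbrace g,f\rbrace=-\lbrace f,g\rbrace$ in the paper's convention, so your recalled formula is off by a sign. This is immaterial here, since only the vanishing of $\lbrace f,H\rbrace$ enters the theorem, but it would matter elsewhere: for instance the paper's evolution formula $\dot{f}=\lbrace f,H\rbrace-fRH$ holds with the paper's sign and fails with yours.
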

This theorem has been proved first in~\cite{GGMRR2020,LL2020}, where it was called \emph{the dissipation theorem}. 
Note that one of the reasons why dissipated quantities are important is that 
it is easy to show that the quotient of two dissipated quantities (whenever well defined) is a constant of motion.
%\textcolor{red}{In~\cite{LL2020} a more general definition of symmetry for a contact Hamiltonian system is given as follows:
%\begin{de}
%A dynamical symmetry on a contact Hamiltonian system $(M,\theta,H)$ is a vector field $X$ such that $\theta([X_{H},X])=0$.
%\end{de}
%Indeed in this definition it is not required that the vector field $X$ is a Hamiltonian vector field. As pointed out in \cite{BG2021}, one finds out that the necessary and sufficient condition for the existence of a dissipated quantity associated with a vector field $X$ is to have $\theta([X_{H},X])=0$, which is precisely the definition of a dynamical symmetry; however due to the generality of the definition of dynamical symmetries, the correspondence between these symmetries and dissipated quantities is not one-to-one (for details see \cite{LL2020}).
%}
Moreover, we observe that if $RH=0$ (locally in canonical coordinates this means that the Hamiltonian function does not depend on $z$), then dissipated quantities (including $H$ itself) are constants of motion. Indeed, in this case we have
\begin{equation}
X_{H}f=0\Leftrightarrow \lbrace H,f\rbrace=0.
\end{equation} 
This motivates the following definition from e.g.~\cite{Boyer2011,Visinescu2017}: 
\begin{de}
A contact Hamiltonian systems with $RH=0$ is called \emph{a good contact Hamiltonian system}.
\end{de}
Then it follows immediately that:
\begin{coro}
Let $(M,\theta,H)$ be a good contact Hamiltonian system. $X_{f}$ is a Noether symmetry if and only if $f$ is a constant of motion.
\end{coro}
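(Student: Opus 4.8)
The plan is to chain the contact dissipation theorem (Theorem~\ref{tecontact}) with the observation, recorded just before the statement, that in a good contact system the notions of dissipated quantity and constant of motion collapse into one. First I would invoke Theorem~\ref{tecontact}, which already asserts that $X_{f}$ is a Noether symmetry if and only if $f$ is a dissipated quantity, i.e.\ $X_{H}f=-f\,RH$. This reduces the corollary to the purely pointwise claim that, under the hypothesis $RH=0$, being a dissipated quantity is the same as being a constant of motion.

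Next I would substitute the defining property of a good system, $RH=0$, into the dissipation condition $X_{H}f=-f\,RH$. The right-hand side vanishes identically, so the condition degenerates to $X_{H}f=0$, which is exactly the statement that $f$ is invariant along the Hamiltonian flow, i.e.\ a constant of motion. This is precisely the equivalence $X_{H}f=0\Leftrightarrow\{H,f\}=0$ already noted in the text, so no fresh computation is needed.

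Putting the two steps together, I would conclude the biconditional directly: $X_{f}$ is a Noether symmetry $\iff$ $f$ is a dissipated quantity (by Theorem~\ref{tecontact}) $\iff$ $X_{H}f=0$ (because $RH=0$) $\iff$ $f$ is a constant of motion. Since each link is a genuine equivalence, both directions of the corollary follow simultaneously.

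There is essentially no hard step here; the only point that deserves care is to make explicit that the defining equation of a dissipated quantity degenerates \emph{exactly} to the defining equation of a constant of motion when $RH=0$, rather than merely implying one inclusion. The forward and backward implications are symmetric precisely because the term $-f\,RH$ vanishes identically (not just on some particular level set of $f$), so the argument is a clean biconditional requiring no additional nondegeneracy or regularity hypothesis on $f$.
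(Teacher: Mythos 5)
Your proposal is correct and is exactly the paper's argument: the corollary is stated there as an immediate consequence of Theorem~\ref{tecontact} combined with the observation, made just before the definition of good systems, that when $RH=0$ the dissipation condition $X_{H}f=-fRH$ reduces to $X_{H}f=0$, i.e.\ dissipated quantities and constants of motion coincide. Nothing is missing; both directions follow from chaining these two equivalences, as you do.
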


Following the ideas developed in the previous sections, we are interested in symmetries of contact Hamiltonian systems related to canonical and canonoid transformations. We begin with the geometric definition of a canonical transformation in the contact setting as put forward in~\cite{AE2023}:
\begin{de}
\emph{A canonical transformation} for the contact manifold $(M,\theta)$ is a possibly-local diffeomorphism $F$ on $M$ such that $F^{*}\theta=\theta$.
\end{de}
Let $F:M\longrightarrow M$ be a diffeomorphism. We know that $F^{*}\theta$ is a contact structure on $M$. 
It follows that around any point $p\in M$ there are local coordinates $(Q^{1},\cdots,Q^{n},P_{1},\cdots,P_{n},Z)$ such that
\begin{equation}
F^{*}\theta=dZ-P_{i}dQ^{i}.
\end{equation}
If we think of $F$ locally as a coordinate transformation 
$$(q^{1},\cdots,q^{n},p_{1},\cdots,p_{n},z)\longmapsto (Q^{1},\cdots,Q^{n},P_{1},\cdots,P_{n},Z)\,,$$ 
then $F$ is a canonical transformation if and only if it preserves the Jacobi bracket (for details see~\cite{AE2023}). 

Analogously to the previous sections, we can now state the following definition from~\cite{GGMRR2020}:
\begin{de}\label{de-inf-symm-contact}
A smooth vector field $V$ on $M$ such that $L_{V}\theta=L_{V}H=0$ is called \emph{an infinitesimal (contact) symmetry} of the contact Hamiltonian system $(M,\theta,H)$.
\end{de}
Again, we have that:
\begin{prop}\label{prop:symm-canonical-contact}
If $V$ is an infinitesimal symmetry of $(M,\theta, H)$, then the flow %$\varphi$ 
of $V$ defines a one-parameter group of canonical transformations
%namely $\lbrace \varphi_{s}\rbrace$, 
that leave the Hamiltonian invariant; reciprocally, if we have a one-parameter group of canonical transformations that leave the Hamiltonian $H$ invariant, then the infinitesimal generator of the flow defined by such one-parameter group is an infinitesimal symmetry of $(M,\theta, H)$.
\end{prop}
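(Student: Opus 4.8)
The plan is to mirror the arguments already given for the symplectic and cosymplectic settings (cf.~Proposition~\ref{prop:symm-canonical-sympl} and Proposition~\ref{prop:symm-canonical-cosympl}), since the only geometric data entering the statement are the invariance of the contact form $\theta$ and of the Hamiltonian $H$, and the underlying tool is the same in all three cases: the identity relating a Lie derivative to the derivative of the pullback along a flow. Concretely, I would denote by $\varphi$ the (possibly-local) flow of $V$ and recall that for any tensor field $\alpha$ one has $\frac{d}{ds}\varphi_{s}^{*}\alpha=\varphi_{s}^{*}L_{V}\alpha$, together with the standard consequence that $L_{V}\alpha=0$ holds if and only if $\varphi_{s}^{*}\alpha=\alpha$ for every $s$ in the domain of the flow.

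For the forward implication, I would start from the hypothesis $L_{V}\theta=L_{V}H=0$. Applying the above equivalence to $\alpha=\theta$ gives $\varphi_{s}^{*}\theta=\theta$ for all $s$, which is precisely the condition for each $\varphi_{s}$ to be a canonical transformation of the contact manifold $(M,\theta)$ in the sense of the definition recalled above; applying it to the function $\alpha=H$ gives $\varphi_{s}^{*}H=H$, i.e.~the Hamiltonian is left invariant. Hence the flow of $V$ is a one-parameter group of canonical transformations preserving $H$.

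For the converse, I would assume a one-parameter group $\lbrace\varphi_{s}\rbrace$ of canonical transformations with $\varphi_{s}^{*}\theta=\theta$ and $\varphi_{s}^{*}H=H$ for all $s$, and let $V$ be its infinitesimal generator. Differentiating both identities at $s=0$ and using $\frac{d}{ds}\big|_{s=0}\varphi_{s}^{*}\alpha=L_{V}\alpha$ yields $L_{V}\theta=0$ and $L_{V}H=0$, so $V$ is an infinitesimal symmetry of $(M,\theta,H)$ by Definition~\ref{de-inf-symm-contact}.

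I do not expect any genuine obstacle here, as nothing contact-specific intervenes: unlike the cosymplectic case, there is no auxiliary one-form $\eta$ (and hence no Reeb-direction condition) to keep track of, so the argument is actually shorter than its cosymplectic counterpart. The only point requiring mild care is the local nature of the flow when $V$ is not complete, which is exactly why the statement speaks of possibly-local diffeomorphisms; restricting all equalities to the open set on which $\varphi_{s}$ is defined handles this uniformly.
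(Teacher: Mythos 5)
Your proof is correct and follows exactly the route the paper intends: the paper gives no separate proof for the contact case, treating it (as in the symplectic discussion preceding Proposition~\ref{prop:symm-canonical-sympl}) as an immediate consequence of the equivalence $L_{V}\alpha=0 \Leftrightarrow \varphi_{s}^{*}\alpha=\alpha$ applied to $\theta$ and $H$, which is precisely your argument. Your added remarks on the absence of an $\eta$-condition and on possibly-local flows are accurate and consistent with the paper's conventions.
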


Moreover, in this case infinitesimal symmetries are a subset of Noether symmetries, according to:
\begin{te}
$V$ is an infinitesimal symmetry of $(M,\theta,H)$ if and only if it is a Noether symmetry $X_{f}$ with $Rf=0$.
\end{te}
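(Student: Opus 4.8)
The plan is to prove both implications directly from the defining relations of the contact Hamiltonian vector field, namely $X_f\lrcorner\theta=-f$ together with $X_f\lrcorner d\theta=df-(Rf)\theta$ (so that $L_{X_f}\theta=-(Rf)\theta$), and from the structural identity linking the Jacobi bracket to these fields, $\{f,H\}=X_f H+H\,Rf$ (the same identity repackaged in Theorem~\ref{tecontact}). Two ingredients will carry the argument: (a) a vector field annihilating $\theta$ under the Lie derivative is automatically the contact Hamiltonian field of $f:=-V\lrcorner\theta$, and moreover $Rf=0$; and (b) once $Rf=0$, the Noether condition $\{f,H\}=0$ collapses to $X_f H=0$, which is exactly $L_V H=0$.

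For the forward implication I would assume $L_V\theta=L_V H=0$ and put $f:=-V\lrcorner\theta$. Cartan's formula gives $L_V\theta=V\lrcorner d\theta+d(V\lrcorner\theta)=V\lrcorner d\theta-df=0$, so $V\lrcorner d\theta=df$. Contracting with the Reeb field and using $R\lrcorner d\theta=0$ yields $Rf=R\lrcorner df=R\lrcorner(V\lrcorner d\theta)=d\theta(V,R)=-(R\lrcorner d\theta)(V)=0$. Hence $V\lrcorner d\theta=df-(Rf)\theta$ and $V\lrcorner\theta=-f$, which are exactly the equations defining $X_f$; therefore $V=X_f$ with $Rf=0$. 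The Noether condition is then immediate: since $Rf=0$ we have $\{f,H\}=X_f H+H\,Rf=L_V H=0$, so $V=X_f$ is a Noether symmetry with $Rf=0$.

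The backward implication simply runs this in reverse. Assuming $V=X_f$ with $\{f,H\}=0$ and $Rf=0$, the identity $L_{X_f}\theta=-(Rf)\theta$ gives $L_V\theta=0$ at once, while $L_V H=X_f H=\{f,H\}-H\,Rf=0$ supplies the remaining condition. I expect the only real subtlety to lie in the first half of the forward direction—the passage from $L_V\theta=0$ to $V=X_f$—since this is where the genuine contact-geometric input (the Reeb identities $R\lrcorner\theta=1$ and $R\lrcorner d\theta=0$) enters, and one must verify that $Rf$ vanishes of its own accord rather than assume it. Everything else is bracket bookkeeping already encapsulated in the dissipation theorem; in particular the proof is insensitive to the sign convention adopted for $\{\cdot,\cdot\}$, because in all conventions $\{f,H\}=0$ together with $Rf=0$ is equivalent to $X_f H=0$.
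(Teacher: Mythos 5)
Your proof is correct and follows essentially the same route as the paper's: Cartan's formula plus contraction with the Reeb field to identify $V$ with $X_{f}$ for $f=-V\lrcorner\theta$ and to obtain $Rf=0$, then the defining/bracket identities to translate $L_{V}H=0$ into the Noether condition. The only cosmetic differences are that you define $f$ globally at the outset (the paper first takes a local primitive of $V\lrcorner d\theta$ and bypasses the bracket step via the dissipation theorem), and that you quote the Jacobi-bracket identity with the opposite sign convention to the paper's $\lbrace f,g\rbrace=X_{g}f+fRg$, which, as you yourself note, is harmless here since every term involved vanishes.
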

\begin{proof}
Let us suppose that $V$ is an infinitesimal symmetry, i.e., $L_{V}\theta=L_{V}H=0$. We have
\begin{equation}
L_{V}\theta=0\longrightarrow L_{V}d\theta=0\longrightarrow d(V\lrcorner d\theta)=0,
\end{equation}
i.e. there is a (possibly-local) function $f$ on $M$ such that $V\lrcorner d\theta=df$. 
Now
\begin{equation}
L_{V}\theta=0\longrightarrow d(V\lrcorner \theta)+V\lrcorner d\theta=0\longrightarrow V\lrcorner \theta=-f,
\end{equation}
so we have $V\lrcorner \theta=-f$ and $V\lrcorner d\theta=df$. On the other hand
\begin{equation}
L_{V}H=V(H)=dH(V)=d\theta(X_{H},V)+RH\theta(V)=-df(X_{H})-fRH=-X_{H}f-fRH,
\end{equation}
so
\begin{equation}
L_{V}H=0\longrightarrow X_{H}f=-fRH
\end{equation}
therefore $f$ is a dissipated quantity with $Rf=df(R)=d\theta(V,R)=-(R\lrcorner d\theta)(V)=0$, 
so we conclude that $V=X_{f}$ is a Noether symmetry with $Rf=0$.

Now let us see the converse statement. Let us suppose that $V$ is a 
Noether symmetry $X_{f}$ for some function $f$ such that $Rf=0$. Then
\begin{equation}
L_{X_{f}}\theta=d(X_{f}\lrcorner\theta)+X_{f}\lrcorner d\theta=-(Rf)\theta=0,
\end{equation}
\begin{equation}
L_{X_{f}}H=X_{f}H=\lbrace H,f\rbrace-HRf=-\lbrace f,H\rbrace=0.
\end{equation}
\end{proof}
Observe that if $(M,\theta,H)$ is a good contact Hamiltonian system, then we have an analogous result to the cosymplectic case, given by
\begin{coro}
Let $(M,\theta,H)$ be a good contact Hamiltonian system. 
Then  $V$ is an infinitesimal symmetry if and only if $V=X_{f}$ for some constant of motion $f$ such that $Rf=0$.
\end{coro}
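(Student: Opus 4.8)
The plan is to obtain this corollary by composing the two equivalences already established, rather than re-deriving it from scratch out of the defining conditions $L_{V}\theta=L_{V}H=0$. The theorem immediately preceding this corollary tells us that $V$ is an infinitesimal symmetry precisely when $V=X_{f}$ is a Noether symmetry with $Rf=0$. Hence the entire content of the corollary reduces to replacing the clause ``$X_{f}$ is a Noether symmetry'' by the clause ``$f$ is a constant of motion,'' and this replacement is exactly what the good-contact hypothesis $RH=0$ provides, via the earlier corollary stating that for good contact systems $X_{f}$ is a Noether symmetry if and only if $f$ is a constant of motion (itself a consequence of the dissipation theorem, Theorem~\ref{tecontact}, together with the fact that dissipated quantities coincide with constants of motion when $RH=0$).

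Concretely, for the forward implication I would assume $V$ is an infinitesimal symmetry and apply the preceding theorem to get $V=X_{f}$ with $\lbrace f,H\rbrace=0$ and $Rf=0$; the good-contact corollary then upgrades ``$X_{f}$ is a Noether symmetry'' to ``$f$ is a constant of motion,'' yielding $V=X_{f}$ with $f$ a constant of motion and $Rf=0$. For the converse I would run the same chain backwards: starting from $V=X_{f}$ with $f$ a constant of motion and $Rf=0$, the good-contact corollary gives that $X_{f}$ is a Noether symmetry, and then the reverse direction of the preceding theorem (which requires exactly the hypothesis $Rf=0$) concludes that $V=X_{f}$ is an infinitesimal symmetry.

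There is essentially no hard step here, as the statement is a bookkeeping composition of two biconditionals. The only point that warrants care is checking that the auxiliary condition $Rf=0$ is genuinely shared by both characterizations: it enters as a hypothesis in the preceding theorem and must be carried through unchanged across the good-contact corollary, which constrains only whether $f$ is conserved and places no condition on the value of $Rf$. Once this is verified, the equivalence is immediate and no direct manipulation of $L_{V}\theta$ or $L_{V}H$ is needed.
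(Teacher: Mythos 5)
Your proposal is correct and is precisely the argument the paper intends: the corollary is stated as an immediate observation obtained by composing the preceding theorem (infinitesimal symmetry $\Leftrightarrow$ Noether symmetry $X_{f}$ with $Rf=0$) with the good-contact equivalence between Noether symmetries and constants of motion (which holds since $RH=0$ makes dissipated quantities and constants of motion coincide). Your care in noting that the side condition $Rf=0$ passes through both equivalences untouched is exactly the only point that needed checking.
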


We can proceed further to define infinitesimal dynamical symmetries in the contact framework~\cite{GGMRR2020}.
\begin{de}
A smooth vector field $W$ on $M$ is called \emph{an infinitesimal dynamical symmetry} of $(M,\theta,H)$ if $[W,X_{H}]=0$. 
\end{de}
In the contact case the assignment $f\mapsto X_{f}$ still defines a Lie algebra antihomomorphism 
between the Lie algebras $(C^{\infty}(M),\lbrace,\rbrace)$ and $(\mathfrak{X}(M),[,])$. 
Indeed, for $f,g\in C^{\infty}(M)$ we have
\begin{equation}
-[X_{f},X_{g}]=X_{\lbrace f,g\rbrace}
\end{equation}
For details see~\cite{LL2019}. 
Thus we have that every infinitesimal symmetry is an infinitesimal dynamical symmetry (this result was first shown in~\cite{GGMRR2020,GLR2022} in a different way from the one we present here).
\begin{prop}
Every infinitesimal symmetry is an infinitesimal dynamical symmetry.
\end{prop}
\begin{proof}
Let $V$ be an infinitesimal symmetry of $(M,\theta,H)$, i.e. $V=X_{f}$ for some function $f\in C^{\infty}(M)$ such that $\lbrace f,H\rbrace=Rf=0$. 
Then
\begin{equation}
-[X_{f},X_{H}]=X_{\lbrace f,H\rbrace}=0.
\end{equation}
Therefore $V$ is an infinitesimal dynamical symmetry of $(M,\theta,H)$.
\end{proof}
As in the previous sections, the converse statement is not true. 
To see that, we first introduce the concept of scaling symmetries in the contact framework and then we show that scaling symmetries of degree 1 are infinitesimal dynamical symmetries but not infinitesimal symmetries (they are not even Noether symmetries).
\begin{de}
$X\in\mathfrak{X}(M)$ is \emph{a scaling symmetry of degree $\Lambda\in\mathbb{R}$}  for the contact Hamiltonian system $(M,\theta,H)$ if
\begin{enumerate}
\item $L_{X}\theta=\theta$ (which implies that $L_{X}d\theta=d\theta$) and
\item $L_{X}H=\Lambda H$
\end{enumerate}
\end{de}
As in the previous sections, we obtain
\begin{lem}
If $X$ is a scaling symmetry of degree $\Lambda$ for $(M,\theta,H)$, then $[X,X_{H}]=(\Lambda-1)X_{H}$.
\end{lem}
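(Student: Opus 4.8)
The plan is to determine the vector field $[X,X_H]$ by computing its interior products with $\theta$ and with $d\theta$, and then to invoke the fact that on a contact manifold a vector field is uniquely determined by these two contractions (the bundle map $Y\mapsto Y\lrcorner d\theta+(Y\lrcorner\theta)\theta$ is an isomorphism, the contact analogue of the symplectic musical isomorphism, so if $Y_1,Y_2$ share both contractions they coincide). The main tool is the Cartan-type operator identity $i_{[X,Y]}=L_Xi_Y-i_YL_X$, which yields $[X,X_H]\lrcorner\alpha=L_X(X_H\lrcorner\alpha)-X_H\lrcorner(L_X\alpha)$ for any form $\alpha$. I would combine this with the defining relations $X_H\lrcorner\theta=-H$ and $X_H\lrcorner d\theta=dH-(RH)\theta$, the Reeb identities $R\lrcorner\theta=1$ and $R\lrcorner d\theta=0$, and the hypotheses $L_X\theta=\theta$ (hence $L_Xd\theta=d\theta$) and $L_XH=\Lambda H$.

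First I would treat the contraction with $\theta$, which is immediate: since $X_H\lrcorner\theta=-H$ gives $L_X(X_H\lrcorner\theta)=-\Lambda H$ and $X_H\lrcorner(L_X\theta)=X_H\lrcorner\theta=-H$, one obtains $[X,X_H]\lrcorner\theta=-\Lambda H+H=-(\Lambda-1)H=(\Lambda-1)(X_H\lrcorner\theta)$. The contraction with $d\theta$ is where the work lies. Using $L_Xd\theta=d\theta$ I get $[X,X_H]\lrcorner d\theta=L_X\bigl(dH-(RH)\theta\bigr)-\bigl(dH-(RH)\theta\bigr)$, where $L_X(dH)=d(L_XH)=\Lambda\,dH$ is routine, but the term $L_X\bigl((RH)\theta\bigr)=\bigl(X(RH)\bigr)\theta+(RH)\theta$ introduces the scalar $X(RH)$, which I cannot yet control. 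This is the main obstacle of the proof.

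To resolve it I would first compute the bracket $[X,R]$ by the same contraction method, since this is exactly the step where the contact case diverges from the cosymplectic Lemma~\ref{lem:scaling-cosympl}. From $R\lrcorner\theta=1$ one finds $[X,R]\lrcorner\theta=L_X(1)-R\lrcorner(L_X\theta)=0-R\lrcorner\theta=-1$, while $[X,R]\lrcorner d\theta=L_X(R\lrcorner d\theta)-R\lrcorner(L_Xd\theta)=0-R\lrcorner d\theta=0$. Comparing with the contractions of $R$ and using uniqueness gives $[X,R]=-R$ (note that here the Reeb bracket is $-R$ rather than the value $0$ obtained in the cosymplectic setting). Applying this to $H$ yields $X(RH)=R(XH)-RH=R(\Lambda H)-RH=(\Lambda-1)RH$, so that $L_X\bigl((RH)\theta\bigr)=(\Lambda-1)(RH)\theta+(RH)\theta=\Lambda(RH)\theta$ and therefore $L_X\bigl(X_H\lrcorner d\theta\bigr)=\Lambda\bigl(dH-(RH)\theta\bigr)$.

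Substituting back gives $[X,X_H]\lrcorner d\theta=\Lambda\bigl(dH-(RH)\theta\bigr)-\bigl(dH-(RH)\theta\bigr)=(\Lambda-1)\bigl(dH-(RH)\theta\bigr)=(\Lambda-1)(X_H\lrcorner d\theta)$. Having matched both the $\theta$- and $d\theta$-contractions of $[X,X_H]$ with those of $(\Lambda-1)X_H$, I would conclude by uniqueness that $[X,X_H]=(\Lambda-1)X_H$. Everything apart from the identity $[X,R]=-R$ is a direct application of the Cartan identity; the one point deserving care is the justification that the two contractions determine the vector field, which is precisely where the contact (nondegeneracy) condition enters.
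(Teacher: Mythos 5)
Your proof is correct and follows essentially the same route as the paper's: both compute $[X,X_H]\lrcorner\theta$ and $[X,X_H]\lrcorner d\theta$ via the identity $i_{[X,Y]}=L_X i_Y - i_Y L_X$, both establish the key intermediate fact $[X,R]=-R$ by the same contraction argument (which controls the term $X(RH)=(\Lambda-1)RH$), and both conclude by the nondegeneracy of the contact structure that the matching contractions force $[X,X_H]=(\Lambda-1)X_H$. The only cosmetic difference is that you phrase the conclusion as matching the contractions of $(\Lambda-1)X_H$ directly, while the paper identifies them as the defining relations of $X_{(\Lambda-1)H}$; these are the same statement.
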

\begin{proof}
Let $X$ be a scaling symmetry. We have 
\begin{equation}
[X,R]\lrcorner\theta=L_{X}(R\lrcorner\theta)-R\lrcorner L_{X}\theta=-R\lrcorner \theta=-1
\end{equation}
and
\begin{equation}
[X,R]\lrcorner d\theta=L_{X}(R\lrcorner d\theta)-R\lrcorner L_{X}d\theta=-R\lrcorner d\theta=0,
\end{equation}
and therefore $[X,R]=-R$. 
On the other hand
\begin{equation}
\begin{split}
[X,X_{H}]\lrcorner\theta &= L_{X}(X_{H}\lrcorner\theta)-X_{H}\lrcorner L_{X}\theta\\
&= -(L_{X}H+X_{H}\lrcorner L_{X}\theta)\\
&=-(\Lambda H-H)\\
&=-(\Lambda-1)H
\end{split}
\end{equation}
and
\begin{equation}
\begin{split}
[X,X_{H}]\lrcorner d\theta &= L_{X}(X_{H}\lrcorner d\theta)-X_{H}\lrcorner L_{X}d\theta\\
&= L_{X}(dH-RH\theta)-X_{H}\lrcorner dL_{X}\theta\\
&= dL_{X}H-XRH\theta-RHL_{X}\theta-X_{H}\lrcorner dL_{X}\theta\\
&= dL_{X}H+RH\theta-RXH\theta-RH\theta-dH+RH\theta\\
&= d((\Lambda-1)H)-R((\Lambda-1)H),
\end{split}
\end{equation}
and therefore
\begin{equation}
[X,X_{H}]=X_{(\Lambda-1)H}=(\Lambda-1)X_{H}.
\end{equation}
\end{proof}
So we have that scaling symmetries of degree 1 are examples of infinitesimal dynamical symmetries that are not infinitesimal symmetries. 
In this case however, a scaling symmetry might be a Hamiltonian vector field. To see this, observe that if $X_{f}$ is such that $R(f)=-1$ and $\lbrace H,f\rbrace=(\Lambda-1)H$, then $X_{f}$ is a scaling symmetry. 
Indeed,
\begin{equation}
L_{X_{f}}\theta=-(Rf)\theta=\theta
\end{equation}
and
\begin{equation}
L_{X_{f}}H=X_{f}H=\lbrace H,f\rbrace-HRf=(\Lambda-1)H+H=\Lambda H.
\end{equation}
To get more familiar with scaling symmetries for contact Hamiltonian systems, we provide now a couple of simple examples.
\begin{ex}
Let us consider the contact Hamiltonian system $(\mathbb R^3,\theta,H)$ with $(q,p,z)$ canonical coordinates and $H(q,p,z)=\frac{1}{2}p^{2}-\frac{1}{q}-\frac{1}{z^{2}}$. 
The vector field $X=2q\frac{\partial}{\partial q}-p\frac{\partial}{\partial p}+z\frac{\partial}{\partial z}$ is a scaling symmetry of degree $-2$. 
Indeed,
\begin{equation}
\begin{split}
L_{X}\theta &=d((2q\frac{\partial}{\partial q}-p\frac{\partial}{\partial p}+z\frac{\partial}{\partial z})\lrcorner (dz-pdq))+(2q\frac{\partial}{\partial q}-p\frac{\partial}{\partial p}+z\frac{\partial}{\partial z})\lrcorner dq\wedge dp\\
&=-2qdp-2pdq+dz+pdq+2qdp\\
&=dz-pdq\\
&=\theta
\end{split}
\end{equation}
and
\begin{equation}
L_{X}H=(2q\frac{\partial}{\partial q}-p\frac{\partial}{\partial p}+z\frac{\partial}{\partial z})(\frac{1}{2}p^{2}-\frac{1}{q}-\frac{1}{z^{2}})=-p^{2}+\frac{2}{q}+\frac{2}{z^{2}}=-2H.
\end{equation}
\end{ex}

\begin{ex}
Now let us consider the contact Hamiltonian system $(\mathbb R^3,\theta,H)$ with $(q,p,z)$ canonical coordinates and $H(q,p,z)=pf(q,z)+z$ where $f(q,z)$ is any smooth function on the variables $q,z$. The vector field $X=p\frac{\partial}{\partial p}+z\frac{\partial}{\partial z}$ is a scaling symmetry of degree $1$, and hence it is an infinitesimal dynamical symmetry. Indeed,
\begin{equation}
\begin{split}
L_{X}\theta &=d((p\frac{\partial}{\partial p}+z\frac{\partial}{\partial z})\lrcorner (dz-pdq))+(p\frac{\partial}{\partial p}+z\frac{\partial}{\partial z})\lrcorner dq\wedge dp\\
&=dz-pdq\\
&=\theta
\end{split}
\end{equation}
and
\begin{equation}
L_{X}H=(p\frac{\partial}{\partial p}+z\frac{\partial}{\partial z})(pf(q,z)+z)=pf(q,z)+z=H.
\end{equation}
In addition, we can see that $X$ is also a Hamiltonian vector field. Indeed, $X$ is the Hamiltonian vector field of the function $f=-z$.
\end{ex}

To finish this section, we characterize infinitesimal generators of one-parameter groups of canonoid transformations for contact Hamiltonian systems in a similar way as in the previous sections. The geometrical definition of canonoid transformations for contact Hamiltonian systems is presented in \cite{AE2023} as follows.
\begin{de}
We say that a possibly-local diffeomorphism $F:M\longrightarrow M$ is \emph{a canonoid transformation} for the Hamiltonian system $(M,\theta,H)$ if there exists a function $K\in C^{\infty}(M)$ such that 
\begin{equation}
X_{H}\lrcorner F^{*}\theta = -K \hspace{1cm}\textit{and}\hspace{1cm} X_{H}\lrcorner d F^{*}\theta =dK-(F_{\ast}R)KF^{*}\theta.
\end{equation}
\end{de}
Locally, a canonoid transformation is a coordinate transformation of the form 
$$(q^{1},\cdots,q^{n},p_{1},\cdots,p_{n},z)\longmapsto (Q^{1},\cdots,Q^{n},P_{1},\cdots,P_{n},Z)$$ 
such that the equations of motion in the coordinates $(Q^{1},\cdots,Q^{n},P_{1},\cdots,P_{n},Z)$ are 
\begin{equation}
%\begin{split}
\dot{Q^{i}} = \frac{\partial K}{\partial P_{i}},\qquad
\dot{P_{i}} =-\frac{\partial K}{\partial Q^{i}}-P_{i}\frac{\partial K}{\partial Z},\qquad
\dot{Z} = P_{i}\frac{\partial K}{\partial P_{i}}-K\,,
%\end{split}
\end{equation}
for some function $K\in C^{\infty}(M)$.

We have the following result, analogous to Theorems~\ref{teinfcanonoid} and~\ref{teinfcanonoid2}.
\begin{te}
\label{teinfcanonoid3}
$X\in \mathfrak{X}(M)$ is the infinitesimal generator of a one-parameter group of canonoid transformations for $(M,\theta,H)$ if and only if $[X,X_{H}]$ is a Hamiltonian vector field. In this case its Hamiltonian function is $L_{X}H+X_{H}\lrcorner L_{X}\theta$.
\end{te}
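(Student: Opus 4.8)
The plan is to follow the template of Theorems~\ref{teinfcanonoid} and~\ref{teinfcanonoid2}, the key algebraic input being the identity for $[X,X_{H}]\lrcorner\theta$ already produced in the proof of the scaling-symmetry lemma for contact systems. First I would isolate the characterization on which everything rests: a vector field $Y$ on $M$ is a contact Hamiltonian vector field if and only if $L_{Y}\theta=\mu\theta$ for some $\mu\in C^{\infty}(M)$, in which case its Hamiltonian function is $-Y\lrcorner\theta$. The forward implication is exactly the computation $L_{X_{f}}\theta=-(Rf)\theta$ used earlier; for the converse one sets $g=-Y\lrcorner\theta$, writes $L_{Y}\theta=-dg+Y\lrcorner d\theta=\mu\theta$, and contracts with the Reeb field $R$: since $R\lrcorner d\theta=0$ and $\theta(R)=1$ this forces $\mu=-Rg$, whence $Y\lrcorner\theta=-g$ and $Y\lrcorner d\theta=dg-(Rg)\theta$, i.e.\ $Y=X_{g}$. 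The same statement applies verbatim to any contact form, in particular to $\varphi_{s}^{*}\theta$ with its own Reeb field.

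Next I would reformulate the canonoid condition. By the very definition of a canonoid transformation, $\varphi_{s}$ is canonoid (with new Hamiltonian $K_{s}$) precisely when $X_{H}$ is the contact Hamiltonian vector field of $K_{s}$ with respect to the contact form $\varphi_{s}^{*}\theta$; by the characterization above this is equivalent to requiring that $L_{X_{H}}\varphi_{s}^{*}\theta$ be proportional to $\varphi_{s}^{*}\theta$, say $L_{X_{H}}\varphi_{s}^{*}\theta=\mu_{s}\varphi_{s}^{*}\theta$. Thus $X$ generates a one-parameter group of canonoid transformations if and only if this proportionality holds for all $s$; as in the earlier lemmas, the one-parameter group structure lets me reduce this to its infinitesimal form, obtained by applying $\frac{d}{ds}\big|_{s=0}$. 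Since $\varphi_{0}^{*}\theta=\theta$ and $L_{X_{H}}\theta=-(RH)\theta$ (so that $\mu_{0}=-RH$), differentiating gives the condition that there exist a function $\nu$ with
\begin{equation}
L_{X_{H}}L_{X}\theta=\nu\theta-(RH)L_{X}\theta.
\end{equation}

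Finally I would convert this into a statement about $[X,X_{H}]$ using the commutator identity $L_{X_{H}}L_{X}=L_{X}L_{X_{H}}-L_{[X,X_{H}]}$ together with $L_{X_{H}}\theta=-(RH)\theta$. Substituting $L_{X}L_{X_{H}}\theta=-(X(RH))\theta-(RH)L_{X}\theta$ cancels the $(RH)L_{X}\theta$ terms and reduces the displayed condition to
\begin{equation}
L_{[X,X_{H}]}\theta=-\big(X(RH)+\nu\big)\theta,
\end{equation}
that is, $L_{[X,X_{H}]}\theta$ is proportional to $\theta$; by the characterization of the first paragraph this is exactly the assertion that $[X,X_{H}]$ is a contact Hamiltonian vector field (the existence of a suitable $\nu$ being equivalent to the existence of the proportionality factor). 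The Hamiltonian function is then read off directly: from the scaling-symmetry computation $[X,X_{H}]\lrcorner\theta=-(L_{X}H+X_{H}\lrcorner L_{X}\theta)$, so when $[X,X_{H}]=X_{g}$ one has $g=-[X,X_{H}]\lrcorner\theta=L_{X}H+X_{H}\lrcorner L_{X}\theta$, as claimed.

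I expect the only delicate point to be the passage from the condition ``for all $s$'' to its infinitesimal version at $s=0$: as in Lemma~\ref{teinffunc2}, one must invoke the group property $\varphi_{s+t}=\varphi_{s}\circ\varphi_{t}$ to propagate the infinitesimal identity back to all $s$, the remaining manipulations being routine Cartan-calculus bookkeeping.
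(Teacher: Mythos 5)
Your proof is correct and arrives at the same infinitesimal condition and the same Hamiltonian function as the paper, but by a genuinely different route. The paper never isolates your key lemma: it works with the \emph{pair} of defining contractions of a canonoid transformation, $X_{H}\lrcorner \varphi_{s}^{*}\theta = -K_{s}$ and $X_{H}\lrcorner d\varphi_{s}^{*}\theta = dK_{s}-(\varphi_{s\ast}R)K_{s}\varphi_{s}^{*}\theta$, differentiates both at $s=0$ to obtain the identity (\ref{eqcanfunc3}), and then expands $[X,X_{H}]\lrcorner\theta$ and $[X,X_{H}]\lrcorner d\theta$ by Cartan calculus, matching them term by term against the defining equations of a Hamiltonian vector field with Hamiltonian function $L_{X}H+X_{H}\lrcorner L_{X}\theta$. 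You instead (i) prove the conformal characterization ``$Y$ is contact Hamiltonian if and only if $L_{Y}\theta=\mu\theta$ for some function $\mu$, in which case its Hamiltonian function is $-Y\lrcorner\theta$''; (ii) recast the canonoid condition as ``$X_{H}$ is Hamiltonian with respect to $\varphi_{s}^{*}\theta$'', which is exactly how the paper itself reads its own definition at the end of Section~\ref{sec4} (the factor $\varphi_{s\ast}R$ being the Reeb field of $\varphi_{s}^{*}\theta$); and (iii) pass from the differentiated condition to $[X,X_{H}]$ via the operator identity $L_{X}L_{X_{H}}-L_{X_{H}}L_{X}=L_{[X,X_{H}]}$ rather than by recomputing contractions. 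Your route is shorter and more conceptual: a single scalar proportionality replaces the bookkeeping of two contraction identities, and both your lemma and the commutator trick transfer verbatim to the cocontact case (Theorem~\ref{teinfcanonoid4}). What the paper's longer computation buys is structural uniformity with its symplectic and cosymplectic proofs (Theorems~\ref{teinfcanonoid} and~\ref{teinfcanonoid2}) and an explicit expression for the new Hamiltonian $K$ along the way. Finally, the step you flag as delicate --- upgrading the derivative-at-$s=0$ condition back to the condition for all $s$ --- is asserted as an immediate ``if and only if'' by the paper too, with no propagation argument given; so on this point your proposal is, if anything, more careful than the paper's own proof, and invoking the group property of the flow is indeed the right strategy for closing that gap in either argument.
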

\begin{proof}
Let $X\in \mathfrak{X}(M)$ and $\varphi$ its flow. $X$ is the infinitesimal generator of a one-parameter group of canonoid transformations for $(M,\theta,H)$ if and only if 
\begin{equation}
\label{eqcanonoids}
X_{H}\lrcorner \varphi_{s}^{*}\theta = -K_{s} \hspace{1cm}\textit{and}\hspace{1cm} X_{H}\lrcorner d \varphi_{s}^{*}\theta =dK_{s}-(\varphi_{s\ast}R)K_{s}\varphi_{s}^{*}\theta
\end{equation}
for a family $\lbrace K_{s}\rbrace$ of functions on $M$. By taking $\frac{d}{ds}|_{s=0}$ in equation $X_{H}\lrcorner d \varphi_{s}^{*}\theta =dK_{s}-(\varphi_{s\ast}R)K_{s}\varphi_{s}^{*}\theta$ we have that it is equivalent to
\begin{equation}
X_{H}\lrcorner dL_{X}\theta=dK-(L_{X}R)H\theta-RK\theta-RHL_{X}\theta
\end{equation}
where $K=\frac{d}{ds}|_{s=0}K_{s}$ and $K_{s}|_{s=0}=H$. Now by taking $\frac{d}{ds}|_{s=0}$ in equation $X_{H}\lrcorner \varphi_{s}^{*}\theta = -K_{s}$ we have that 
\begin{equation}
K=-X_{H}\lrcorner L_{x}\theta.
\end{equation}
So $X$ is the infinitesimal generator of a one-parameter group of canonoid transformations for $(M,\theta,H)$ if and only if
\begin{equation}
X_{H}\lrcorner dL_{X}\theta=-d(X_{H}\lrcorner L_{x}\theta)-(L_{X}R)H\theta+R(X_{H}\lrcorner L_{x}\theta)\theta-RHL_{X}\theta,
\end{equation}
or equivalently 
\begin{equation}
\label{eqcanfunc3}
-d(X_{H}\lrcorner L_{x}\theta)-(L_{X}R)H\theta+R(X_{H}\lrcorner L_{x}\theta)\theta-RHL_{X}\theta-X_{H}\lrcorner dL_{X}\theta=0.
\end{equation}

On the other hand, we have that
\begin{equation}
\begin{split}
[X,X_{H}]\lrcorner\theta &= L_{X}(X_{H}\lrcorner\theta)-X_{H}\lrcorner L_{X}\theta\\
&= -(L_{X}H+X_{H}\lrcorner L_{X}\theta)
\end{split}
\end{equation}
and
\begin{equation}
\label{eqhamveccan3}
\begin{split}
[X,X_{H}]\lrcorner d\theta &= L_{X}(X_{H}\lrcorner d\theta)-X_{H}\lrcorner L_{X}d\theta\\
&= L_{X}(dH-RH\theta)-X_{H}\lrcorner dL_{X}\theta\\
&= dL_{X}H-XRH\theta-RHL_{X}\theta-X_{H}\lrcorner dL_{X}\theta\\
&=d(L_{X}H+X_{H}\lrcorner L_{X}\theta)-R(L_{X}H+X_{H}\lrcorner L_{X}\theta)\theta\\
&-d(X_{H}\lrcorner L_{x}\theta)-(L_{X}R)H\theta+R(X_{H}\lrcorner L_{x}\theta)\theta-RHL_{X}\theta-X_{H}\lrcorner dL_{X}\theta.
\end{split}
\end{equation}
So, by using equations (\ref{eqcanfunc3}) and (\ref{eqhamveccan3}), 
we conclude that $X$ is the infinitesimal generator of a one-parameter group of canonoid transformations for $(M,\theta,H)$ if and only if 
\begin{equation}
[X,X_{H}]\lrcorner\theta = -(L_{X}H+X_{H}\lrcorner L_{X}\theta)
\end{equation}
and
\begin{equation}
[X,X_{H}]\lrcorner d\theta=d(L_{X}H+X_{H}\lrcorner L_{X}\theta)-R(L_{X}H+X_{H}\lrcorner L_{X}\theta)\theta,
\end{equation}
i.e., $X$ is the infinitesimal generator of a one-parameter group of canonoid transformations for $(M,\theta,H)$ if and only if $[X,X_{H}]$ is a Hamiltonian vector field with Hamiltonian function $L_{X}H+X_{H}\lrcorner L_{X}\theta$.
\end{proof}

As in the previous sections, we are interested in the following corollary:
\begin{coro}\label{coroscalingcanonoid3}
Scaling symmetries are infinitesimal generators of one-parameter groups of canonoid transformations.
\end{coro}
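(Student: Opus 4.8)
The plan is to deduce the statement directly from Theorem~\ref{teinfcanonoid3}, whose criterion for $X$ to generate a one-parameter group of canonoid transformations is simply that $[X,X_{H}]$ be a Hamiltonian vector field. Thus it suffices to exhibit a contact Hamiltonian function for $[X,X_{H}]$ when $X$ is a scaling symmetry, and the entire argument reduces to identifying this bracket.

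First I would invoke the preceding Lemma, which computes $[X,X_{H}]=(\Lambda-1)X_{H}$ for a scaling symmetry $X$ of degree $\Lambda$. The only point needing care is that a real scalar multiple of a contact Hamiltonian vector field is again one: since the defining relations $X_{f}\lrcorner\theta=-f$ and $X_{f}\lrcorner d\theta=df-(Rf)\theta$ are both $\mathbb{R}$-linear in $f$, the correspondence $f\mapsto X_{f}$ is linear, so that $(\Lambda-1)X_{H}=X_{(\Lambda-1)H}$. Hence $[X,X_{H}]$ is the contact Hamiltonian vector field of $(\Lambda-1)H$, and Theorem~\ref{teinfcanonoid3} applies to give the conclusion.

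As a consistency check, the Hamiltonian function supplied by Theorem~\ref{teinfcanonoid3} is $L_{X}H+X_{H}\lrcorner L_{X}\theta$; for a scaling symmetry this equals $\Lambda H+X_{H}\lrcorner\theta=\Lambda H-H=(\Lambda-1)H$, in agreement with the function found above. Because every ingredient is already established, I do not expect any genuine obstacle: the whole proof is essentially a one-line application of Theorem~\ref{teinfcanonoid3} once the bracket $[X,X_{H}]$ has been identified, exactly mirroring the symplectic and cosymplectic corollaries.
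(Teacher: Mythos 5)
Your proposal is correct and follows essentially the same route as the paper: apply the preceding lemma to get $[X,X_{H}]=(\Lambda-1)X_{H}$, note by linearity of $f\mapsto X_{f}$ that this is the Hamiltonian vector field of $(\Lambda-1)H$, and conclude via Theorem~\ref{teinfcanonoid3}. Your consistency check that $L_{X}H+X_{H}\lrcorner L_{X}\theta=\Lambda H-H=(\Lambda-1)H$ is a nice (correct) addition, but the argument is the same one-line application of the theorem that the paper gives.
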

\begin{proof}
If $X$ is a scaling symmetry of $(M,\theta,H)$ then $[X,X_{H}]=(\Lambda-1)X_{H}$ (where $\Lambda$ is the degree of $X$), which is the Hamiltonian vector field of the function $(\Lambda-1)H$.
\end{proof}
 Of course, we also have 
 \begin{coro}
If $X$ is an infinitesimal dynamical symmetry of $(M,\theta,H)$, then the flow of $X$ defines a one-parameter group of canonoid transformations.
 \end{coro}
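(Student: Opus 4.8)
The plan is to deduce this corollary directly from Theorem~\ref{teinfcanonoid3}, exactly in parallel with the corresponding results in the symplectic and cosymplectic cases. By definition, an infinitesimal dynamical symmetry $X$ of $(M,\theta,H)$ satisfies $[X,X_{H}]=0$, so the first thing I would observe is that the zero vector field is itself a contact Hamiltonian vector field. In the contact setting the assignment $f\mapsto X_{f}$ is a bijection, and the function whose Hamiltonian vector field vanishes is simply $f=0$: indeed $X_{0}\lrcorner\theta=0=-0$ and $X_{0}\lrcorner d\theta=0=d0-(R0)\theta$, so $X_{0}=0$. Hence $[X,X_{H}]=0=X_{0}$ is a Hamiltonian vector field.

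With this observation in hand, Theorem~\ref{teinfcanonoid3} applies immediately and yields that $X$ is the infinitesimal generator of a one-parameter group of canonoid transformations for $(M,\theta,H)$. As a built-in consistency check, the same theorem tells us that the Hamiltonian function associated to $[X,X_{H}]$ is $L_{X}H+X_{H}\lrcorner L_{X}\theta$; since this vector field is the zero vector field and $f\mapsto X_{f}$ is injective, we must have $L_{X}H+X_{H}\lrcorner L_{X}\theta=0$. This also follows at once from the identity $[X,X_{H}]\lrcorner\theta=-(L_{X}H+X_{H}\lrcorner L_{X}\theta)$ derived within the proof of Theorem~\ref{teinfcanonoid3}, so nothing new needs to be computed.

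I do not expect any genuine obstacle: the statement is a direct specialization of Theorem~\ref{teinfcanonoid3}, entirely analogous to the corollaries following Theorems~\ref{teinfcanonoid} and~\ref{teinfcanonoid2}. The only contact-specific subtlety worth flagging is that, unlike the symplectic case where the zero vector field is the Hamiltonian vector field of an arbitrary constant, here it corresponds \emph{uniquely} to the zero function, because $f\mapsto X_{f}$ is a bijection rather than merely having constants in its kernel; consequently no freedom in the associated Hamiltonian function remains, and the vanishing of $L_{X}H+X_{H}\lrcorner L_{X}\theta$ is forced. This is the point where I would be slightly more careful than in the symplectic write-up, but it does not affect the conclusion.
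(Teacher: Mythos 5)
Your proof is correct and follows essentially the same route as the paper, which leaves this corollary as an immediate consequence of Theorem~\ref{teinfcanonoid3}: since $[X,X_{H}]=0$ and the zero vector field is the contact Hamiltonian vector field of the zero function, the theorem applies directly. Your added remark that in the contact case the injectivity of $f\mapsto X_{f}$ (via $f=-X_{f}\lrcorner\theta$) forces the associated Hamiltonian function $L_{X}H+X_{H}\lrcorner L_{X}\theta$ to vanish, rather than merely being constant as in the symplectic and cosymplectic cases, is a correct and worthwhile refinement consistent with the paper's framework.
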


% if $X$ is an infinitesimal dynamical symmetry of $(M,\theta,H)$ then $[X,X_{H}]=0$ is a Hamiltonian vector field with Hamiltonian function $f=0$, so according to theorem \ref{teinfcanonoid3} we must have $L_{X}H+X_{H}\lrcorner L_{X}\theta=0$, in fact this is so, indeed, $L_{X}H+X_{H}\lrcorner L_{X}\theta=-[X,X_{H}]\lrcorner\theta=0$. If $X$ is an  infinitesimal dynamical symmetry of $(M,\theta,H)$ then the flow $\varphi$ of $W$ defines a one-parameter group of canonoid transformations, namely $\lbrace \varphi_{s}\rbrace$, 
In addition, we have that for each $s\in\mathbb R$, the new Hamiltonian function is $K_{s}=\varphi_{s}^{*}H$. 
Indeed, let us suppose that $X$ is an infinitesimal dynamical symmetry of $(M,\theta,H)$, then $\varphi_{s \ast}X_{H}=X_{H}$, so that
\begin{equation}
(X_{H}\lrcorner \varphi_{s}^{*}\theta)(p)=\theta_{\varphi_{s}(p)}(\varphi_{s \ast}X_{H})=\theta_{\varphi_{s}(p)}(X_{H})=-H(\varphi_{s}(p))=-(\varphi_{s}^{*}H)(p)
\end{equation}
and for every $Y\in\mathfrak{X}(M)$ we have
\begin{equation}
\begin{split}
(X_{H}\lrcorner \varphi_{s}^{*}d\theta)(Y) &=(\varphi_{s}^{*}d\theta)(X_{H},Y)\\
&=d\theta(\varphi_{s \ast}X_{H},\varphi_{s \ast}Y)\\
&=(X_{H}\lrcorner d\theta)(\varphi_{s \ast}Y)\\
&=\varphi_{s}^{*}(X_{H}\lrcorner d\theta)(Y)\\
&=(\varphi_{s}^{*}dH-(\varphi_{s \ast}R)(\varphi_{s}^{*}H)\varphi_{s}^{*}\theta)(Y)\\
&=(d\varphi_{s}^{*}H-(\varphi_{s \ast}R)(\varphi_{s}^{*}H)\varphi_{s}^{*}\theta)(Y)
\end{split}
\end{equation}
i.e.~the vector field $X_{H}$ is Hamiltonian with respect to the contact structure $\varphi_{s}^{*}\theta$ with Hamiltonian function $K_{s}=\varphi_{s}^{*}H$.

In~\cite{AE2023} it has been shown that from a given canonoid transformation for a \emph{good} contact Hamiltonian system one can obtain
the corresponding constants of motion.
%Here we have shown that we can obtain canonoid transformations, in fact a one-parameter group of canonoid transformations, for a contact Hamiltonian system from an infinitesimal generator characterized by Theorem~\ref{teinfcanonoid3}. 
It is worth remarking that in our case the contact Hamiltonian system needs not be good.
Therefore in the general case we have that if $\varphi_{s}$ is a one-parameter group of canonoid transformations for $(M,\theta,H)$ with new Hamiltonian functions $K_{s}$, then $K_{s}$ are dissipated quantities; therefore the quotients $K_{s}/K_{s^{\prime}}$ (whenever well defined) are constants of motion.

\section{Symmetries for cocontact Hamiltonian systems}
\label{sec5}

In~\cite{BG2021} the contact version of Noether's theorem has been extended to time-dependent contact Hamiltonian systems using the extended phase space $T^{*}Q\times\mathbb{R}\times\mathbb{R}$; on the other hand, time-dependent contact Hamiltonian systems can be generalized using the formalism of cocontact geometry, to obtain the so-called cocontact Hamiltonian systems~\cite{Letal2022}. In~\cite{GLR2022} the authors studied dissipated quantities and dynamical symmetries for cocontact Hamiltonian systems.
In this section, we review some recent aspects concerning canonoid transformations in the cocontact setting and introduce new results (we refer to Appendix~\ref{sec:appA} for more definitions and properties).

Let $(M,\theta,\eta,H)$ be a cocontact Hamiltonian system. 
\begin{de}
\emph{A Noether symmetry} of $(M,\theta,\eta,H)$ is a contact Hamiltonian vector field $X_{f}$ such that $\lbrace f,H\rbrace+R_{t}f=0$. 
\end{de}
\begin{de}
\emph{A dissipated quantity} of the cocontact Hamiltonian system $(M,\theta,\eta,H)$ is a function $f\in C^{\infty}(M)$ such that $E_{H}f=-fR_{z}H$.
\end{de}
As it was pointed out in \cite{BG2021,GLR2022}, in the most general case we have $E_{H}H=-HR_{z}H+R_{t}H$,
and therefore whenever $H$ depends explicitly on $t$ it is not itself a dissipated quantity. 

As expected, we have a cocontact version of Noether's theorem (see Proposition 3.6 in~\cite{GLR2022}):
\begin{te}
A function $f\in C^{\infty}(M)$ is a dissipated quantity if and only if $\lbrace f,H\rbrace+R_{t}f=0$, which is equivalent to $X_{f}$ being a Noether symmetry.
\end{te}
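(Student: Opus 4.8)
The plan is to reduce the whole statement to the single pointwise identity
\begin{equation}
E_{H}f=\lbrace f,H\rbrace+R_{t}f-fR_{z}H ,
\end{equation}
valid for every $f\in C^{\infty}(M)$, and then read off the two equivalences directly. Granting this identity, the dissipation condition $E_{H}f=-fR_{z}H$ holds \emph{if and only if} $\lbrace f,H\rbrace+R_{t}f=0$; since the right-hand equation is, by definition, the condition for the contact Hamiltonian vector field $X_{f}$ to be a Noether symmetry, all three properties in the statement coincide. A good consistency check is that setting $f=H$ in the identity gives $E_{H}H=\lbrace H,H\rbrace+R_{t}H-HR_{z}H=-HR_{z}H+R_{t}H$, exactly the relation quoted before the theorem.

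Thus the entire content lies in establishing the displayed identity, and here I would mirror the contact computation of Section~\ref{sec4} almost verbatim, with the contact Reeb $R$ replaced by $R_{z}$ and with an extra $R_{t}$-contribution. First I would use the decomposition of the evolution vector field $E_{H}=X_{H}+R_{t}$ (whose defining feature is $E_{H}\lrcorner\eta=1$ while $X_{H}\lrcorner\eta=0$), so that $E_{H}f=X_{H}f+R_{t}f$. It then remains to prove $X_{H}f=\lbrace f,H\rbrace-fR_{z}H$, which is the cocontact analogue of the identity $X_{H}f=\lbrace f,H\rbrace-fRH$ already used implicitly in the contact case (via $X_{f}H=\lbrace H,f\rbrace-HRf$ and antisymmetry of the bracket). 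This I would obtain from the defining equations of the cocontact Hamiltonian vector field together with the definition of the bracket $\lbrace\cdot,\cdot\rbrace$ recalled in Appendix~\ref{sec:appA}, by writing $X_{H}f=df(X_{H})$ and splitting $df$ along $\theta$, $\eta$ and the complementary distribution; the term $-fR_{z}H$ arises precisely from the normalization $\theta(X_{H})=-H$.

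The only place requiring care --- and hence the main obstacle --- is the bookkeeping of signs and of which Reeb field multiplies $f$: the contact Reeb $R_{z}$ enters multiplicatively through $-fR_{z}H$, whereas $R_{t}$ enters only as the time-flow part of $E_{H}$ and is never multiplied by $f$. Once the identity is pinned down with this correct placement of $R_{z}$ versus $R_{t}$, both equivalences follow with no further computation and the proof is complete. Alternatively, one may simply invoke Proposition~3.6 of~\cite{GLR2022}, of which the present theorem is a restatement.
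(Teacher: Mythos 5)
Your proof is correct, but it is worth noting that the paper itself gives no proof of this theorem at all: the statement is presented as a recalled result, with a pointer to Proposition~3.6 of~\cite{GLR2022}, and the text moves on immediately. Your argument therefore supplies what the paper leaves to a citation, and it does so by the most economical route available: the pivot identity $E_{H}f=\lbrace f,H\rbrace-fR_{z}H+R_{t}f$ is already displayed verbatim in Appendix~\ref{sec:appA} as the evolution law of an observable in the cocontact setting, and once it is granted, both equivalences are pure unwinding of the definitions of dissipated quantity ($E_{H}f=-fR_{z}H$) and of Noether symmetry ($\lbrace f,H\rbrace+R_{t}f=0$), exactly as you say; your check against $E_{H}H=-HR_{z}H+R_{t}H$ is also valid (it uses $\lbrace H,H\rbrace=0$, which holds by antisymmetry of the Jacobi bracket). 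One simplification: the sub-identity $X_{H}f=\lbrace f,H\rbrace-fR_{z}H$, which you propose to obtain by splitting $df$ along $\theta$, $\eta$ and the complementary distribution, requires no computation at all, because the paper \emph{defines} the cocontact Jacobi bracket by $\lbrace f,g\rbrace=X_{g}f+fR_{z}g$; taking $g=H$ and rearranging gives the identity on the spot, and then $E_{H}f=X_{H}f+R_{t}f$ finishes the job. (Your distributional computation is not wrong --- it essentially re-derives antisymmetry of the bracket --- but it is superfluous under the paper's conventions, and presenting it as ``the main obstacle'' overstates the difficulty.) Your closing remark that one may ``simply invoke'' Proposition~3.6 of~\cite{GLR2022} is, in fact, precisely what the paper does; the value of your write-up is that it makes the statement self-contained.
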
 
As in the contact case, we have that the quotient of two dissipated quantities is a constant of motion~\cite{GLR2022}.
Now, we can also observe that if $R_{z}H=0$ (locally, the Hamiltonian function does not depend on $z$), then dissipated quantities are constants of motion. 
Following the notion of good contact Hamiltonian systems, we define the following:
\begin{de}
We say that $(M,\theta,\eta,H)$ is \emph{a good cocontact Hamiltonian system} if $R_{z}H=0$.
\end{de}
It is important to remark that the condition $R_{t}H=0$ is not required, i.e., the Hamiltonian function $H$ may depend explicitly on $t$. 
Then we have the following immediate corollary: 
\begin{coro}
Let $(M,\theta,\eta,H)$ be a good cocontact Hamiltonian system. Then $X_{f}$ is a Noether symmetry if and only if $f$ is a constant of motion.
\end{coro}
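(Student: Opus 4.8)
The plan is to combine the cocontact Noether theorem (just stated) with the defining condition of a good cocontact Hamiltonian system, namely $R_{z}H=0$. The key observation is that for a good system the dissipation rate $-fR_{z}H$ that appears in the definition of a dissipated quantity vanishes identically, so that the dissipation equation $E_{H}f=-fR_{z}H$ collapses to the conservation equation $E_{H}f=0$. Thus ``dissipated quantity'' and ``constant of motion'' coincide for good systems, and the equivalence we want will follow by composing this identification with the already-established Noether theorem.

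Concretely, I would argue as follows. First I would recall that, by the cocontact Noether theorem stated above, $X_{f}$ is a Noether symmetry if and only if $\lbrace f,H\rbrace+R_{t}f=0$, which is in turn equivalent to $f$ being a dissipated quantity, i.e.\ $E_{H}f=-fR_{z}H$. Next I would invoke the hypothesis $R_{z}H=0$ to rewrite the right-hand side: since $fR_{z}H=0$ for every $f\in C^{\infty}(M)$, the dissipation equation reduces to $E_{H}f=0$. Finally I would note that $E_{H}f=0$ is exactly the statement that $f$ is a constant of motion along the cocontact dynamics. Chaining these three equivalences yields: $X_{f}$ is a Noether symmetry $\iff$ $f$ is a dissipated quantity $\iff$ $E_{H}f=0$ $\iff$ $f$ is a constant of motion.

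The main point to be careful about is the interplay between the two Reeb directions $R_{t}$ and $R_{z}$: the goodness condition only kills $R_{z}H$, and crucially it does \emph{not} require $R_{t}H=0$, so $H$ itself may still fail to be conserved when it depends explicitly on $t$. I would therefore stress that the reduction uses only the vanishing of the $R_{z}H$ term in the \emph{dissipation rate of $f$}, and not any conservation property of $H$. No genuine obstacle is expected here; the result is an immediate corollary of the cocontact Noether theorem once the defining identity of a good system is substituted, entirely parallel to the good-contact corollary proved earlier in Section~\ref{sec4}. The only modeling subtlety worth a remark is confirming that $E_{H}f=0$ is the correct notion of constant of motion in the cocontact setting (rather than $X_{H}f=0$), which follows from the conventions fixed in Appendix~\ref{sec:appA}.
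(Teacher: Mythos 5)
Your proposal is correct and follows exactly the paper's route: the paper also obtains this as an immediate consequence of the cocontact Noether theorem together with the observation that, when $R_{z}H=0$, the dissipation equation $E_{H}f=-fR_{z}H$ collapses to $E_{H}f=0$, i.e.\ dissipated quantities coincide with constants of motion. Your added remark that goodness does not require $R_{t}H=0$ likewise mirrors the paper's own caveat.
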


Again, we are interested in symmetries of cocontact Hamiltonian systems which are related to canonical and canonoid transformations.
We start with the following definition from~\cite{AE2023}.
\begin{de}
\emph{A canonical transformation} for the cocontact manifold $(M,\theta,\eta)$ is a possibly-local 
diffeomorphism $F$ on $M$ such that $F^{*}\theta=\theta$ and $F^{*}\eta=\eta$.
\end{de}
Let $F:M\longrightarrow M$ be a diffeomorphism. We know that $(F^{*}\theta,F^{*}\eta)$ is a cocontact structure on $M$.
Then around any point $p\in M$ there are local coordinates $(T,Q^{1},\cdots,Q^{n},P_{1},\cdots,P_{n},Z)$ such that
\begin{equation}
F^{*}\theta=dZ-P_{i}dQ^{i}\hspace{1cm}\textit{and}\hspace{1cm}F^{*}\eta=dT.
\end{equation}
If we think of $F$ locally as a coordinate transformation 
$$(t,q^{1},\cdots,q^{n},p_{1},\cdots,p_{n},z)\longmapsto (t,Q^{1},\cdots,Q^{n},P_{1},\cdots,P_{n},Z)\,,$$ 
then it is a canonical transformation if and only if it preserves the Jacobi bracket (see~\cite{AE2023} for the details). 

We are interested in vector fields whose flow leave the cocontact structure and the Hamiltonian invariant. 
In~\cite{GLR2022} these vector fields are called infinitesimal strict Hamiltonian symmetries; by following the language employed in the previous sections, we call them infinitesimal symmetries.
\begin{de}
A smooth vector field $V$ on $M$ is \emph{an infinitesimal symmetry} of $(M,\theta,\eta,H)$ if $L_{V}\theta=L_{V}H=0$ 
and $V\lrcorner\eta=0$ (which implies $L_{V}\eta=0$).
\end{de}
We have some results similar to the previous sections.
\begin{prop}\label{prop:symm-canonical-cocontact}
If $V$ is an infinitesimal symmetry of $(M,\theta,\eta, H)$, 
then the flow %$\varphi$ 
of $V$ defines a one-parameter group of canonical transformations
%namely $\lbrace \varphi_{s}\rbrace$, 
that leave the Hamiltonian invariant; reciprocally, if we have a one-parameter group of canonical transformations that leave the Hamiltonian $H$ 
invariant, then the infinitesimal generator of the flow defined by such one-parameter group is an infinitesimal symmetry of $(M,\theta,\eta, H)$.
\end{prop}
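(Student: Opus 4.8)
The plan is to exploit the standard correspondence between the vanishing of a Lie derivative and the invariance of a tensor under the flow, exactly as in the symplectic, cosymplectic and contact cases treated above (cf.~Propositions~\ref{prop:symm-canonical-sympl}, \ref{prop:symm-canonical-cosympl} and \ref{prop:symm-canonical-contact}). The single identity driving everything is $\frac{d}{ds}\varphi_{s}^{*}\alpha=\varphi_{s}^{*}(L_{V}\alpha)$ for any differential form $\alpha$, where $\varphi$ is the flow of $V$; this shows that $L_{V}\alpha=0$ is equivalent to $\varphi_{s}^{*}\alpha=\alpha$ for all $s$ (at least locally, where the flow is defined). Applying this to the three objects $\theta$, $\eta$ and $H$ will translate the infinitesimal conditions into the corresponding flow-invariance statements and back again.

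For the forward direction I would start from the defining conditions of an infinitesimal symmetry, namely $L_{V}\theta=0$, $L_{V}H=0$ and $V\lrcorner\eta=0$. Since $\eta$ is closed in a cocontact structure, Cartan's formula gives $L_{V}\eta=V\lrcorner d\eta+d(V\lrcorner\eta)=d(V\lrcorner\eta)=0$, so all three of $\theta$, $\eta$, $H$ are annihilated by $L_{V}$. By the identity above this yields $\varphi_{s}^{*}\theta=\theta$, $\varphi_{s}^{*}\eta=\eta$ and $\varphi_{s}^{*}H=H$ for every $s$. The first two equalities say precisely that each $\varphi_{s}$ is a canonical transformation for $(M,\theta,\eta)$, while the third says that the Hamiltonian is preserved, which is the desired conclusion.

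For the converse I would begin with a one-parameter group $\varphi_{s}$ of canonical transformations satisfying $\varphi_{s}^{*}\theta=\theta$, $\varphi_{s}^{*}\eta=\eta$ and $\varphi_{s}^{*}H=H$, and differentiate at $s=0$ to obtain $L_{V}\theta=0$, $L_{V}\eta=0$ and $L_{V}H=0$ for the generator $V$. The only point that does not follow purely formally is the condition $V\lrcorner\eta=0$ demanded by the definition of infinitesimal symmetry: because $\eta$ is closed, $L_{V}\eta=0$ gives only $d(V\lrcorner\eta)=0$, i.e.~that $V\lrcorner\eta$ is locally constant, which is strictly weaker than its vanishing. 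I expect this to be the main obstacle, and I would resolve it using the coordinate description of cocontact canonical transformations recorded above, under which the time coordinate is preserved, $t\circ\varphi_{s}=t$; differentiating this relation at $s=0$ gives $V(t)=\eta(V)=V\lrcorner\eta=0$. With $V\lrcorner\eta=0$ in hand, $V$ satisfies every requirement in the definition of infinitesimal symmetry, and the equivalence is complete.
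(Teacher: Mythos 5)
Your overall strategy --- converting Lie-derivative conditions into flow invariance via the identity $\tfrac{d}{ds}\varphi_{s}^{*}\alpha=\varphi_{s}^{*}(L_{V}\alpha)$ --- is exactly the argument the paper has in mind (it is the same one-line justification given before Proposition~\ref{prop:symm-canonical-sympl}, and the paper offers nothing more detailed for the cocontact case). Your forward direction is complete and correct: $V\lrcorner\eta=0$ together with $d\eta=0$ gives $L_{V}\eta=0$, and then $\theta$, $\eta$, $H$ are all flow-invariant. You also correctly isolated the one genuine subtlety in the converse: differentiating $\varphi_{s}^{*}\eta=\eta$ only yields $L_{V}\eta=d(V\lrcorner\eta)=0$, i.e.\ that $V\lrcorner\eta$ is locally constant, which is weaker than the condition $V\lrcorner\eta=0$ required by the definition of infinitesimal symmetry.

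However, the step you use to close that gap does not hold up. You appeal to the local coordinate description to assert that a canonical transformation satisfies $t\circ\varphi_{s}=t$, but this is not a consequence of the geometric definition $F^{*}\eta=\eta$: since $\eta=dt$ locally, $F^{*}\eta=\eta$ only forces $d(t\circ F-t)=0$, i.e.\ $t\circ F=t+c$ with $c$ constant, and the paper's coordinate picture silently normalizes $c=0$. For a one-parameter group this means $t\circ\varphi_{s}=t+ks$, so $V\lrcorner\eta=k$ can be any constant. Concretely, take any Hamiltonian with $R_{t}H=0$ and let $V=R_{t}$: then $L_{R_{t}}\theta=0$, $L_{R_{t}}\eta=0$ and $L_{R_{t}}H=0$, so the flow of $R_{t}$ is a one-parameter group of canonical transformations (in the paper's geometric sense) leaving $H$ invariant, yet $R_{t}\lrcorner\eta=1\neq 0$, so $R_{t}$ is not an infinitesimal symmetry. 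Thus the converse, read literally with the geometric definition of canonical transformation, is false, and no argument can close the gap you found without strengthening the hypotheses --- e.g.\ requiring the transformations to fix the time function itself (the coordinate-level notion with $T=t$, which is what your appeal implicitly assumes), or adding $V\lrcorner\eta=0$ as an assumption. Your proof inherits this defect at precisely the step where you invoke ``$t$ is preserved''; to be rigorous you should state that extra hypothesis explicitly rather than derive it from the definition.
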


Moreover, in this case infinitesimal symmetries are a subset of Noether symmetries, according to:
\begin{te}
$V$ is an infinitesimal symmetry if and only if it is a Noether symmetry $X_{f}$ with $R_{z}f=R_{t}f=0$.
\end{te}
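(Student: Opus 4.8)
The plan is to adapt the argument used above for the pure contact case, carefully tracking the two Reeb vector fields $R_{z}$ and $R_{t}$ and the closed time form $\eta$, and to prove the two implications separately.

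For the forward implication, suppose $V$ is an infinitesimal symmetry, so that $L_{V}\theta=0$, $L_{V}H=0$ and $V\lrcorner\eta=0$. First I would produce the generating function $f$: since $L_{V}\theta=0$ implies $L_{V}d\theta=0$, Cartan's formula gives $d(V\lrcorner d\theta)=0$, so locally $V\lrcorner d\theta=df$; feeding this back into $0=L_{V}\theta=d(V\lrcorner\theta)+V\lrcorner d\theta$ yields $V\lrcorner\theta=-f$ after absorbing a constant. The crucial new point compared with the contact case is that contracting $V\lrcorner d\theta=df$ with each Reeb field and using the structural identities $R_{z}\lrcorner d\theta=R_{t}\lrcorner d\theta=0$ gives $R_{z}f=d\theta(V,R_{z})=0$ and $R_{t}f=d\theta(V,R_{t})=0$ at once. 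Hence $V$ satisfies $V\lrcorner\eta=0$, $V\lrcorner\theta=-f$ and $V\lrcorner d\theta=df-(R_{z}f)\theta-(R_{t}f)\eta$, which are precisely the defining equations of the cocontact Hamiltonian vector field, so $V=X_{f}$ with $R_{z}f=R_{t}f=0$.

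Then I would use $L_{V}H=0$ to land on the Noether condition. Writing $dH=X_{H}\lrcorner d\theta+(R_{z}H)\theta+(R_{t}H)\eta$ and evaluating on $V$ with $\theta(V)=-f$ and $\eta(V)=0$ gives $L_{V}H=dH(V)=-X_{H}f-fR_{z}H$. Thus $L_{V}H=0$ is equivalent to $X_{H}f=-fR_{z}H$; since $R_{t}f=0$ we have $E_{H}f=X_{H}f$, so $f$ is a dissipated quantity, and by the cocontact version of Noether's theorem stated above, $X_{f}$ is a Noether symmetry, i.e.~$\lbrace f,H\rbrace+R_{t}f=0$. For the converse, assume $V=X_{f}$ is a Noether symmetry with $R_{z}f=R_{t}f=0$. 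From the defining equations I immediately get $V\lrcorner\eta=0$ and $L_{X_{f}}\theta=d(X_{f}\lrcorner\theta)+X_{f}\lrcorner d\theta=-df+(df-(R_{z}f)\theta-(R_{t}f)\eta)=-(R_{z}f)\theta-(R_{t}f)\eta=0$. Finally $L_{X_{f}}H=X_{f}H$, rewritten through the Jacobi bracket and the Reeb derivatives, reduces by $R_{z}f=R_{t}f=0$ and the Noether identity $\lbrace f,H\rbrace+R_{t}f=0$ to $\lbrace f,H\rbrace=0$, whence $L_{V}H=0$.

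The main obstacle I anticipate is essentially bookkeeping: keeping the roles of the two Reeb fields straight and invoking the exact defining relations of the cocontact Hamiltonian field $X_{f}$ (in particular $\eta(X_{f})=0$ and the identities $R_{z}\lrcorner d\theta=R_{t}\lrcorner d\theta=0$). The conceptual content is identical to the contact case, since the conditions $R_{z}f=R_{t}f=0$ are \emph{forced} automatically by contracting $V\lrcorner d\theta=df$ with the two Reeb fields; once these structural identities are in hand, the remaining computation is routine.
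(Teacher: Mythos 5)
Your proposal is correct and follows essentially the same route as the paper's proof: extract $f$ from $L_{V}\theta=0$ via Cartan's formula, kill $R_{z}f$ and $R_{t}f$ by contracting $V\lrcorner d\theta=df$ with the two Reeb fields, identify $V=X_{f}$ through the defining relations, and convert $L_{V}H=0$ into the dissipation/Noether condition, with the converse being the same direct computation of $L_{X_{f}}\theta$ and $X_{f}H$. The only (harmless) differences are matters of explicitness, e.g.\ your remark about absorbing the constant in $V\lrcorner\theta=-f$ and spelling out the uniqueness step $V=X_{f}$.
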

\begin{proof}
Let us suppose that $V$ is an infinitesimal symmetry, i.e. $L_{V}\theta=L_{V}H=0$ and $V\lrcorner\eta=0$. We have
\begin{equation}
L_{V}\theta=0\longrightarrow L_{V}d\theta=0\longrightarrow d(V\lrcorner d\theta)=0,
\end{equation}
i.e. there is a (possibly-local) function $f$ on $M$ such that $V\lrcorner d\theta=df$. 
Now
\begin{equation}
L_{V}\theta=0\longrightarrow d(V\lrcorner \theta)+V\lrcorner d\theta=0\longrightarrow V\lrcorner \theta=-f,
\end{equation}
so we have $V\lrcorner \theta=-f$ and $V\lrcorner d\theta=df$. Now we can see that for $R=R_{z}$ or $R=R_{t}$ we have
\begin{equation}
Rf=df(R)=d\theta(V,R)=-(R\lrcorner d\theta)(V)=0.
\end{equation}
On the other hand
\begin{equation}
L_{V}H=V(H)=dH(V)=d\theta(X_{H},V)+R_{z}H\theta(V)=-df(X_{H})-fR_{z}H=-X_{H}f-fR_{z}H,
\end{equation}
so
\begin{equation}
L_{V}H=0\longrightarrow X_{H}f=-fR_{z}H\longrightarrow E_{H}f=-fR_{z}H
\end{equation}
therefore $f$ is a dissipated quantity, so we conclude that $V=X_{f}$ is a Noether symmetry with $R_{z}f=R_{t}f=0$.

Now let us see the converse statement. Let us suppose that $V$ is a Noether symmetry $X_{f}$ 
for some function $f$ such that $R_{z}f=R_{t}f=0$. Then
\begin{equation}
L_{X_{f}}\theta=d(X_{f}\lrcorner\theta)+X_{f}\lrcorner d\theta=-(R_{z}f)\theta-R_{t}f\eta=0,
\end{equation}
\begin{equation}
L_{X_{f}}H=X_{f}H=\lbrace H,f\rbrace-HR_{z}f=-\lbrace f,H\rbrace=0
\end{equation}
and of course $X_{f}\lrcorner\eta=0$.
\end{proof}

Now we consider the definition of infinitesimal dynamical symmetries in the cocontact context (cf.~\cite{GLR2022}).
\begin{de}
A smooth vector field $W$ on $M$ is \emph{an infinitesimal dynamical symmetry} of $(M,\theta,\eta,H)$ if $[W,E_{H}]=0$.
\end{de}
%As in the cosymplectic case we considered vector fields such that $[W,X_{H}]=[W,R]=0$, they are infinitesimal dynamical symmetries. 
 
As in the previous cases, the assignment $f\mapsto X_{f}$ defines a Lie algebra antihomomorphism between the Lie algebras $(C^{\infty}(M),\lbrace,\rbrace)$ and $(\mathfrak{X}(M),[,])$.
Let us recall also that in the cocontact framework for $f,g\in C^{\infty}(M)$ we have 
\begin{equation}
-[X_{f},X_{g}]\lrcorner\theta=-\lbrace f,g\rbrace
\end{equation}
and in addition
\begin{equation}
-[X_{f},X_{g}]\lrcorner\eta=0,
\end{equation}
so
\begin{equation}
-[X_{f},X_{g}]=X_{\lbrace f,g\rbrace}.
\end{equation}
Analogously to the cosymplectic case, we can see that
\begin{equation}
\label{eqXRz}
-[X_{f},R_{z}]=X_{R_{z}f}
\end{equation}
and 
\begin{equation}
\label{eqXRt}
-[X_{f},R_{t}]=X_{R_{t}f}.
\end{equation}
We have the following results.
\begin{prop}
Every infinitesimal symmetry $V$ of $(M,\theta,\eta,H)$ satisfies $[V,X_{H}]=[V,R_{t}]=0$. 
Therefore $V$ is an infinitesimal dynamical symmetry.
\end{prop}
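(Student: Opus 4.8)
The plan is to reduce the statement to purely algebraic manipulations by invoking the characterization theorem proved immediately above, which asserts that $V$ is an infinitesimal symmetry of $(M,\theta,\eta,H)$ if and only if $V=X_{f}$ for some Noether symmetry $f$ satisfying $R_{z}f=R_{t}f=0$. Once $V$ is written in this Hamiltonian form, the two commutators $[V,X_{H}]$ and $[V,R_{t}]$ can be computed directly from the Lie algebra antihomomorphism $f\mapsto X_{f}$ together with the identities \eqref{eqXRz}--\eqref{eqXRt} already at our disposal, and the final claim follows from the decomposition of the evolution vector field.

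First I would dispose of $[V,X_{H}]$. Being a Noether symmetry, $f$ satisfies $\lbrace f,H\rbrace+R_{t}f=0$; combined with the hypothesis $R_{t}f=0$ this yields $\lbrace f,H\rbrace=0$. Applying $-[X_{f},X_{g}]=X_{\lbrace f,g\rbrace}$ with $g=H$ gives $[V,X_{H}]=[X_{f},X_{H}]=-X_{\lbrace f,H\rbrace}=-X_{0}=0$, where $X_{0}=0$ because the bundle isomorphism defining Hamiltonian vector fields sends the zero function to the zero field. Next I would treat $[V,R_{t}]$ using \eqref{eqXRt}, namely $-[X_{f},R_{t}]=X_{R_{t}f}$; since $R_{t}f=0$ this immediately gives $[V,R_{t}]=-X_{R_{t}f}=-X_{0}=0$. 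Finally, recalling that the evolution vector field decomposes as $E_{H}=X_{H}+R_{t}$ (exactly as in the cosymplectic case), I would conclude $[V,E_{H}]=[V,X_{H}]+[V,R_{t}]=0$, so that $V$ is an infinitesimal dynamical symmetry.

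The argument is essentially formal once the characterization $V=X_{f}$ is available, and the only steps requiring any attention are extracting $\lbrace f,H\rbrace=0$ from the Noether condition—precisely where the hypothesis $R_{t}f=0$ is used—and the observation that $X_{0}=0$. The reason I route the proof through $V=X_{f}$ rather than attempting a direct Cartan-calculus computation is instructive: using $\iota_{[V,X_{H}]}=L_{V}\iota_{X_{H}}-\iota_{X_{H}}L_{V}$ together with $L_{V}\theta=L_{V}\eta=L_{V}H=0$ and $X_{H}\lrcorner\eta=0$, $X_{H}\lrcorner\theta=-H$, one sees at once that $[V,X_{H}]\lrcorner\eta=0$ and $[V,X_{H}]\lrcorner\theta=-L_{V}H=0$, but the contraction with $d\theta$ leaves the residual terms $-V(R_{z}H)\theta-V(R_{t}H)\eta$, whose vanishing is not manifest. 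That is the genuine obstacle, and it is exactly what the passage to the Hamiltonian form $V=X_{f}$ sidesteps, since the algebraic identities encode this information automatically.
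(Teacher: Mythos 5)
Your proof is correct and follows essentially the same route as the paper: both invoke the preceding characterization theorem to write $V=X_{f}$ with $\lbrace f,H\rbrace=R_{z}f=R_{t}f=0$, then apply the antihomomorphism identity $-[X_{f},X_{H}]=X_{\lbrace f,H\rbrace}$ and \eqref{eqXRt} to kill both brackets, with $E_{H}=X_{H}+R_{t}$ yielding the final claim. Your extra remarks (extracting $\lbrace f,H\rbrace=0$ from the Noether condition, and the obstruction in the direct Cartan-calculus route) are sound but do not change the fact that the core argument coincides with the paper's.
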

\begin{proof}
Let $V$ be an infinitesimal symmetry of $(M,\theta,\eta,H)$, i.e.~$V=X_{f}$ for some function $f\in C^{\infty}(M)$ 
such that $\lbrace f,H\rbrace=R_{z}f=R_{t}f=0$. Then 
\begin{equation}
-[X_{f},R_{t}]=X_{R_{t}f}=0
\end{equation}
and
\begin{equation}
-[X_{f},X_{H}]=X_{\lbrace f,H\rbrace}=0.
\end{equation}
\end{proof}

However, the converse statement is not true. To show this, we start with the following:
\begin{prop}
Not every infinitesimal dynamical symmetry $W$ of $(M,\theta,\eta,H)$ is such that $[W,X_{H}]=[W,R_{t}]=0$. 
\end{prop}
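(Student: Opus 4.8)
The plan is to mirror the argument used in the cosymplectic setting (cf.\ the analogous statement in Section~\ref{sec3}), exhibiting a Noether symmetry that fails the stronger condition $[W,X_H]=[W,R_t]=0$. Recall that the evolution vector field decomposes as $E_H=X_H+R_t$, so that for any $W$ one has $[W,E_H]=[W,X_H]+[W,R_t]$. First I would take a dissipated quantity $f$ (equivalently, by the cocontact Noether theorem, a function satisfying $\lbrace f,H\rbrace+R_tf=0$) for which $R_tf\neq 0$; such functions exist generically whenever $H$ depends explicitly on $t$, and a concrete instance on $(\mathbb{R}^4,\,dz-p\,dq,\,dt,\,H)$ with $R_z=\partial_z$ and $R_t=\partial_t$ can be written down to settle existence.

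Next I would verify that the associated Hamiltonian vector field $X_f$ is indeed an infinitesimal dynamical symmetry. Using the Lie-algebra antihomomorphism $-[X_f,X_g]=X_{\lbrace f,g\rbrace}$ together with equation~\eqref{eqXRt}, one computes
\begin{equation}
[X_f,E_H]=[X_f,X_H]+[X_f,R_t]=-X_{\lbrace f,H\rbrace}-X_{R_tf}=-X_{\lbrace f,H\rbrace+R_tf}=0,
\end{equation}
where the last equality uses the Noether condition $\lbrace f,H\rbrace+R_tf=0$. Hence $W=X_f$ satisfies $[W,E_H]=0$ and is an infinitesimal dynamical symmetry.

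Finally, I would show that this $W$ violates $[W,R_t]=0$. By equation~\eqref{eqXRt} we have $[X_f,R_t]=-X_{R_tf}$, and since the contact Hamiltonian vector field obeys $X_g\lrcorner\theta=-g$ (so that $X_g=0$ forces $g=0$), the choice $R_tf\neq 0$ guarantees $[X_f,R_t]=-X_{R_tf}\neq 0$. Therefore $X_f$ is an infinitesimal dynamical symmetry that does not satisfy $[W,X_H]=[W,R_t]=0$, which proves the claim. The only genuinely delicate point is securing the existence of a dissipated quantity with $R_tf\neq 0$; I expect this to be the main obstacle, and I would address it by producing an explicit (good) cocontact example rather than relying on a bare genericity argument.
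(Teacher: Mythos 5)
Your proposal is correct and takes essentially the same route as the paper: pick a dissipated quantity $f$ with $R_{t}f\neq 0$, use the antihomomorphism $-[X_{f},X_{g}]=X_{\lbrace f,g\rbrace}$ together with equation~\eqref{eqXRt} to get $[X_{f},E_{H}]=-X_{\lbrace f,H\rbrace+R_{t}f}=0$, and observe that $[X_{f},R_{t}]=-X_{R_{t}f}$ is nonzero. Your two refinements—noting that $X_{g}\lrcorner\theta=-g$ forces $X_{R_{t}f}\neq 0$ (where the paper only says ``not necessarily zero''), and flagging that the existence of such an $f$ should be settled by an explicit example (a point the paper leaves equally implicit)—only strengthen the same argument.
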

\begin{proof}
If $f$ is a dissipated quantity with $R_{t}f\neq 0$, then 
\begin{equation}
[X_{f},E_{H}]=[X_{f},X_{H}]+[X_{f},R_{t}]=X_{\lbrace H,f\rbrace}-X_{R_{t}f}=X_{\lbrace f,H\rbrace+R_{t}f}=0,
\end{equation}
i.e.~$X_{f}$ is an infinitesimal dynamical symmetry. However, in general $[X_{f},R_{t}]=X_{R_{t}f}$, 
which is not necessarily zero.
\end{proof}

Now we introduce the concept of scaling symmetries in the cocontact framework. 
\begin{de}
$X\in\mathfrak{X}(M)$ is \emph{a scaling symmetry of degree $\Lambda\in\mathbb{R}$}  
for the cocontact Hamiltonian system $(M,\theta,\eta,H)$ if
\begin{enumerate}
\item $L_{X}\theta=\theta$ (therefore $L_{X}d\theta=d\theta$),
\item $X\lrcorner\eta=0$ (therefore $L_{X}\eta=0$) and
\item $L_{X}H=\Lambda H$
\end{enumerate}
\end{de}
As in the previous cases, we have:
\begin{lem}
If $X$ is a scaling symmetry of degree $\Lambda$ for $(M,\theta,\eta,H)$ then $[X,X_{H}]=(\Lambda-1)X_{H}$.
\end{lem}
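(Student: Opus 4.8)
The plan is to follow the template already used in the cosymplectic and contact cases (Lemma~\ref{lem:scaling-cosympl} and its contact counterpart): rather than computing $[X,X_{H}]$ head-on, I will show that the three contractions of $[X,X_{H}]-(\Lambda-1)X_{H}$ against $\theta$, $\eta$ and $d\theta$ all vanish. Since on a cocontact manifold the map $Y\mapsto(Y\lrcorner\theta,\,Y\lrcorner\eta,\,Y\lrcorner d\theta)$ is injective (this is the bundle isomorphism underlying the cocontact structure), this is enough to conclude $[X,X_{H}]=(\Lambda-1)X_{H}$. Throughout I will use the defining relations of the cocontact Hamiltonian vector field, namely $X_{H}\lrcorner\eta=0$, $X_{H}\lrcorner\theta=-H$ and $X_{H}\lrcorner d\theta=dH-(R_{z}H)\theta-(R_{t}H)\eta$, together with the scaling hypotheses $L_{X}\theta=\theta$, $L_{X}\eta=0$ and $L_{X}H=\Lambda H$, and the Cartan identity $\iota_{[X,Y]}=L_{X}\iota_{Y}-\iota_{Y}L_{X}$.

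First I would establish the two auxiliary commutator relations $[X,R_{z}]=-R_{z}$ and $[X,R_{t}]=0$, exactly as in the contact case. These follow by contracting $[X,R_{z}]$ and $[X,R_{t}]$ with $\theta$, $\eta$ and $d\theta$ and invoking the Reeb identities ($R_{z}\lrcorner\theta=1$, $R_{z}\lrcorner\eta=0$, $R_{z}\lrcorner d\theta=0$ and $R_{t}\lrcorner\eta=1$, $R_{t}\lrcorner\theta=0$, $R_{t}\lrcorner d\theta=0$) together with $L_{X}\theta=\theta$ and $L_{X}\eta=0$. The payoff of these relations is that, applied to $H$, they let me rewrite the two awkward derivatives $X(R_{z}H)$ and $X(R_{t}H)$: from $[X,R_{z}]=-R_{z}$ and $XH=\Lambda H$ I get $X(R_{z}H)=(\Lambda-1)R_{z}H$, and from $[X,R_{t}]=0$ I get $X(R_{t}H)=\Lambda R_{t}H$.

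Then the main computation proceeds contraction by contraction. The $\eta$- and $\theta$-contractions are short: $[X,X_{H}]\lrcorner\eta=L_{X}(X_{H}\lrcorner\eta)-X_{H}\lrcorner L_{X}\eta=0$, matching $(\Lambda-1)X_{H}\lrcorner\eta=0$; and $[X,X_{H}]\lrcorner\theta=L_{X}(-H)-X_{H}\lrcorner\theta=-\Lambda H+H=-(\Lambda-1)H$, matching $(\Lambda-1)X_{H}\lrcorner\theta$. The substantive step is the $d\theta$-contraction, where I expand $L_{X}(X_{H}\lrcorner d\theta)$ by Leibniz, use $L_{X}d\theta=d\theta$ and $L_{X}\eta=0$, and substitute the rewrites above. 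The coefficient of $\theta$ then collapses via $(\Lambda-1)R_{z}H+R_{z}H=\Lambda R_{z}H$ and the $\eta$-coefficient via $\Lambda R_{t}H$, so that subtracting $X_{H}\lrcorner L_{X}d\theta=X_{H}\lrcorner d\theta$ leaves exactly $(\Lambda-1)\bigl(dH-(R_{z}H)\theta-(R_{t}H)\eta\bigr)=(\Lambda-1)(X_{H}\lrcorner d\theta)$.

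The main obstacle is bookkeeping: with two Reeb fields there are twice as many $\theta$- and $\eta$-proportional terms as in the contact case, and the cancellation is delicate. One must keep precise track of the coefficient $-(R_{t}H)\eta$ in $X_{H}\lrcorner d\theta$ and of the extra $(R_{z}H)\theta$ term produced when $L_{X}$ hits $\theta$ (recall $L_{X}\theta=\theta$, not $0$). The relations $[X,R_{z}]=-R_{z}$ and $[X,R_{t}]=0$ are precisely the tool that makes these terms recombine into $(\Lambda-1)$ times the corresponding piece of $X_{H}\lrcorner d\theta$; keeping their signs straight is the only genuinely error-prone point. Once all three contractions agree with those of $(\Lambda-1)X_{H}$, injectivity of the cocontact bundle map finishes the proof.
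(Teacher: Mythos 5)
Your proposal is correct and follows essentially the same route as the paper's proof: establish the auxiliary relations $[X,R_{z}]=-R_{z}$ and $[X,R_{t}]=0$, then verify that the contractions of $[X,X_{H}]$ with $\eta$, $\theta$ and $d\theta$ coincide with the defining relations of $X_{(\Lambda-1)H}$, which determines the vector field uniquely. If anything, your bookkeeping is slightly more careful than the paper's, which omits the $-R_{t}(\Lambda H-H)\eta$ term in its displayed formula for $[X,X_{H}]\lrcorner d\theta$.
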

\begin{proof}
Let $X$ be a scaling symmetry. By a calculation analogous to the ones in the previous sections, we have: 
\begin{equation}
[X,R_{z}]=-R_{z},
\end{equation}
\begin{equation}
[X,R_{t}]=0,
\end{equation}
\begin{equation}
[X,X_{H}]\eta=0,
\end{equation}
\begin{equation}
[X,X_{H}]\lrcorner\theta=-(\Lambda-1)H
\end{equation}
and
\begin{equation}
[X,X_{H}]\lrcorner d\theta=d(\Lambda H-H)-R(\Lambda H-H)\theta.
\end{equation}
Thus, we conclude that
\begin{equation}
[X,X_{H}]=X_{\Lambda H-H}=\Lambda X_{H}-X_{H}=(\Lambda-1)X_{H}.
\end{equation}
\end{proof}
So we have that scaling symmetries of degree 1 are examples of infinitesimal dynamical symmetries
such that $[X,X_{H}]=[X,R_{t}]=0$, but they are not infinitesimal symmetries. 
As in the contact case, a scaling symmetry might be a Hamiltonian vector field. 
To see this, we observe that if $X_{f}$ is such that $R_{z}(f)=-1$, $R_{t}f=0$ and $\lbrace H,f\rbrace=(\Lambda-1)H$, then $X_{f}$ is a scaling symmetry. 
Indeed,
\begin{equation}
L_{X_{f}}\theta=-(R_{z}f)\theta-(R_{t}f)\eta=\theta
\end{equation}
and
\begin{equation}
L_{X_{f}}H=X_{f}H=\lbrace H,f\rbrace-HR_{z}f=(\Lambda-1)H+H=\Lambda H.
\end{equation}

The geometrical definition of canonoid transformations for cocontact Hamiltonian systems has been introduced in~\cite{AE2023} as follows.
\begin{de}
We say that a possibly-local diffeomorphism $F:M\longrightarrow M$ is \emph{a canonoid transformation} for the Hamiltonian system $(M,\theta,\eta,H)$ if $F^{*}\eta=\eta$ (or equivalently $F_{\ast}R_{t}=R_{t}$) and there exists a function $K\in C^{\infty}(M)$ such that 
\begin{equation}
X_{H}\lrcorner F^{*}\theta = -K \hspace{1cm}\textit{and}\hspace{1cm} X_{H}\lrcorner d F^{*}\theta =dK-(F_{\ast}R_{z})KF^{*}\theta-R_{t}K\eta.
\end{equation}
\end{de}
Locally, a canonoid transformation is a coordinate transformation of the form 
$$(q^{1},\cdots,q^{n},p_{1},\cdots,p_{n},z,t)\longmapsto (t,Q^{1},\cdots,Q^{n},P_{1},\cdots,P_{n},Z,t)$$ 
such that the equations of motion in the coordinates $(Q^{1},\cdots,Q^{n},P_{1},\cdots,P_{n},Z,t)$ are 
\begin{equation}
%\begin{split}
\dot{Q^{i}} = \frac{\partial K}{\partial P_{i}},\qquad
\dot{P_{i}} =- \frac{\partial K}{\partial Q^{i}}-P_{i}\frac{\partial K}{\partial Z}\,,\qquad
\dot{Z} = P_{i}\frac{\partial K}{\partial P_{i}}-K\,,\qquad
\dot{t}=1\,,
%\end{split}
\end{equation}
for some function $K\in C^{\infty}(M)$. 

We have the following characterization result, analogous to Theorems~\ref{teinfcanonoid},~\ref{teinfcanonoid2} and~\ref{teinfcanonoid3}.
\begin{te}
\label{teinfcanonoid4}
$X\in \mathfrak{X}(M)$ is the infinitesimal generator of a one-parameter group of canonoid transformations for $(M,\theta.\eta,H)$ if and only if $[X,R_{t}]=0$ and $[X,X_{H}]$ is a Hamiltonian vector field. 
In this case its Hamiltonian function is $L_{X}H+X_{H}\lrcorner L_{X}\theta$.
\end{te}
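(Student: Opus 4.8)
My plan is to mirror the flow-based argument used for the contact case in Theorem~\ref{teinfcanonoid3}, augmented with the time-direction bookkeeping from the cosymplectic case in Theorem~\ref{teinfcanonoid2}. First I would let $\varphi$ denote the flow of $X$ and write out the full defining conditions for $\varphi_{s}$ to be a one-parameter group of canonoid transformations: namely $\varphi_{s}^{*}\eta=\eta$ together with the pair
\begin{equation*}
X_{H}\lrcorner\varphi_{s}^{*}\theta=-K_{s},\qquad X_{H}\lrcorner d\varphi_{s}^{*}\theta=dK_{s}-(\varphi_{s\ast}R_{z})K_{s}\,\varphi_{s}^{*}\theta-R_{t}K_{s}\,\eta
\end{equation*}
for a family $\lbrace K_{s}\rbrace$ with $K_{0}=H$. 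Differentiating $\varphi_{s}^{*}\eta=\eta$ at $s=0$ yields $L_{X}\eta=0$, which is equivalent to $[X,R_{t}]=0$; this already recovers one half of the claimed characterization and is precisely the cocontact analogue of the $L_{X}R=0$ condition appearing in Theorem~\ref{teinfcanonoid2}.

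Next I would differentiate the two substantive equations at $s=0$. From the first, using $K_{0}=H$, I obtain $K:=\frac{d}{ds}|_{s=0}K_{s}=-X_{H}\lrcorner L_{X}\theta$, exactly as in the contact computation. Differentiating the second produces a relation for $X_{H}\lrcorner dL_{X}\theta$ in terms of $dK$, $(L_{X}R_{z})H\,\theta$, $R_{z}K\,\theta$, $R_{t}K\,\eta$, $R_{z}H\,L_{X}\theta$ and $R_{t}H\,\eta$; substituting $K=-X_{H}\lrcorner L_{X}\theta$ converts this into the single infinitesimal identity characterizing $X$ as a generator of canonoid transformations, the cocontact counterpart of equation~\eqref{eqcanfunc3}.

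The heart of the proof is then the direct computation of $[X,X_{H}]\lrcorner\eta$, $[X,X_{H}]\lrcorner\theta$ and $[X,X_{H}]\lrcorner d\theta$ via $[X,X_{H}]\lrcorner\alpha=L_{X}(X_{H}\lrcorner\alpha)-X_{H}\lrcorner L_{X}\alpha$ for each of the three basic forms. The $\eta$-contraction vanishes because $L_{X}\eta=0$ and $X_{H}\lrcorner\eta=0$, establishing $[X,X_{H}]\lrcorner\eta=0$. The $\theta$-contraction gives $[X,X_{H}]\lrcorner\theta=-(L_{X}H+X_{H}\lrcorner L_{X}\theta)$, identifying the candidate Hamiltonian function $g:=L_{X}H+X_{H}\lrcorner L_{X}\theta$. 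For the $d\theta$-contraction I would expand $L_{X}(X_{H}\lrcorner d\theta)=L_{X}(dH-R_{z}H\,\theta-R_{t}H\,\eta)$, keeping the term $L_{X}R_{z}$ (which, unlike in the scaling-symmetry computation, is not constrained to equal $-R_{z}$), and use the infinitesimal identity derived above to rewrite the whole expression as $dg-(R_{z}g)\theta-(R_{t}g)\eta$. This is exactly the statement that $[X,X_{H}]$ is the cocontact Hamiltonian vector field of $g$.

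The step I expect to be the main obstacle is the bookkeeping in the $d\theta$-contraction: one must simultaneously track the $\theta$-proportional terms (governed by $R_{z}$ and the unconstrained $L_{X}R_{z}$) and the $\eta$-proportional terms (governed by $R_{t}$), and verify that the $\eta$-coefficient $R_{t}K$ produced by differentiation is consistent with the coefficient forced by the first canonoid equation — the cocontact analogue of the check $Rk_{0}=RK$ carried out in the proof of Lemma~\ref{teinffunc2}. Once these coefficient identities are shown compatible, the equivalence with ``$[X,R_{t}]=0$ and $[X,X_{H}]$ Hamiltonian with Hamiltonian function $L_{X}H+X_{H}\lrcorner L_{X}\theta$'' follows by reading the resulting chain of equivalences in both directions, just as in Theorems~\ref{teinfcanonoid2} and~\ref{teinfcanonoid3}.
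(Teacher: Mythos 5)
Your proposal is correct and follows essentially the same route as the paper's own proof: differentiate the defining flow conditions at $s=0$ to obtain $[X,R_{t}]=0$ and $K=-X_{H}\lrcorner L_{X}\theta$, then compute the three contractions $[X,X_{H}]\lrcorner\eta$, $[X,X_{H}]\lrcorner\theta$ and $[X,X_{H}]\lrcorner d\theta$ to identify $[X,X_{H}]$ as the Hamiltonian vector field of $L_{X}H+X_{H}\lrcorner L_{X}\theta$. The only remark is that the consistency check you flag as the main obstacle (the cocontact analogue of $Rk_{0}=RK$ from Lemma~\ref{teinffunc2}) is not actually needed, since here the first canonoid equation pins down $K$ directly.
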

\begin{proof}
Let $X\in \mathfrak{X}(M)$ and $\varphi$ its flow. $X$ is the infinitesimal generator of a one-parameter group of canonoid transformations for $(M,\theta,\eta,H)$ if and only if 
$\varphi_{s}^{*}\eta=\eta$ (or equivalently $\varphi_{s \ast}R_{t}=R_{t}$),
\begin{equation}
\label{eqcocanonoids}
X_{H}\lrcorner \varphi_{s}^{*}\theta = -K_{s} \hspace{1cm}\textit{and}\hspace{1cm} X_{H}\lrcorner d \varphi_{s}^{*}\theta =dK_{s}-(\varphi_{s\ast}R_{z})K_{s}\varphi_{s}^{*}\theta-R_{t}K_{s}\eta
\end{equation}
for a family $\lbrace K_{s}\rbrace$ of functions on $M$. By taking $\frac{d}{ds}|_{s=0}$ in $\varphi_{s}^{*}\eta=\eta$ we have $L_{X}\eta=0$ (or equivalently $L_{X}R_{t}=0$). Now by taking $\frac{d}{ds}|_{s=0}$ in equation $X_{H}\lrcorner d \varphi_{s}^{*}\theta = dK_{s}-(\varphi_{s\ast}R_{z})K_{s}\varphi_{s}^{*}\theta-R_{t}K_{s}\eta$ we have that it is equivalent to
\begin{equation}
X_{H}\lrcorner dL_{X}\theta=dK-(L_{X}R_{z})H\theta-R_{z}K\theta-R_{z}HL_{X}\theta--R_{t}K\eta
\end{equation}
where $K=\frac{d}{ds}|_{s=0}K_{s}$ and $K_{s}|_{s=0}=H$. 
In addition, by taking $\frac{d}{ds}|_{s=0}$ in equation $X_{H}\lrcorner \varphi_{s}^{*}\theta = -K_{s}$ we have that 
\begin{equation}
K=-X_{H}\lrcorner L_{x}\theta.
\end{equation}
So $X$ is the infinitesimal generator of a one-parameter group of canonoid transformations for $(M,\theta,H)$ if and only if $L_{X}R_{t}=0$ and
\begin{equation}
X_{H}\lrcorner dL_{X}\theta=-d(X_{H}\lrcorner L_{x}\theta)-(L_{X}R_{z})H\theta+R_{z}(X_{H}\lrcorner L_{x}\theta)\theta-R_{z}HL_{X}\theta+R_{t}(X_{H}\lrcorner L_{X}\theta)\eta,
\end{equation}
or equivalently $[X,R_{t}]=0$ and
\begin{equation}
\label{eqcanfunc4}
-d(X_{H}\lrcorner L_{x}\theta)-(L_{X}R_{z})H\theta+R_{z}(X_{H}\lrcorner L_{x}\theta)\theta-R_{z}HL_{X}\theta+R_{t}(X_{H}\lrcorner L_{X}\theta)\eta-X_{H}\lrcorner dL_{X}\theta=0.
\end{equation}

On the other hand, we have that
\begin{equation}
[X,X_{H}]\lrcorner\eta=L_{X}(X_{H}\lrcorner\eta)-X_{H}\lrcorner L_{X}\eta,
\end{equation}
\begin{equation}
\begin{split}
[X,X_{H}]\lrcorner\theta &= L_{X}(X_{H}\lrcorner\theta)-X_{H}\lrcorner L_{X}\theta\\
&= -(L_{X}H+X_{H}\lrcorner L_{X}\theta)
\end{split}
\end{equation}
and
\begin{equation}
\label{eqhamveccan4}
\begin{split}
[X,X_{H}]\lrcorner d\theta &= L_{X}(X_{H}\lrcorner d\theta)-X_{H}\lrcorner L_{X}d\theta\\
&= L_{X}(dH-R_{z}H\theta-R_{t}H\eta)-X_{H}\lrcorner dL_{X}\theta\\
&= dL_{X}H-XR_{z}H\theta-R_{z}HL_{X}\theta-XR_{t}H\eta-R_{t}HL_{X}\eta-X_{H}\lrcorner dL_{X}\theta\\
&=d(L_{X}H+X_{H}\lrcorner L_{X}\theta)-R_{z}(L_{X}H+X_{H}\lrcorner L_{X}\theta)\theta-R_{t}(L_{X}H+X_{H}\lrcorner L_{X}\theta)\eta\\
&-d(X_{H}\lrcorner L_{x}\theta)-(L_{X}R_{z})H\theta+R_{z}(X_{H}\lrcorner L_{x}\theta)\theta-R_{z}HL_{X}\theta+R_{t}(X_{H}\lrcorner L_{X}\theta)\eta\\
&-(L_{X}R_{t})H\eta-R_{t}HL_{X}\eta-X_{H}\lrcorner dL_{X}\theta.
\end{split}
\end{equation}
So, by using equations (\ref{eqcanfunc4}) and (\ref{eqhamveccan4}), 
we conclude that $X$ is the infinitesimal generator of a one-parameter group of canonoid transformations for $(M,\theta,H)$ if and only if $[X,R_{t}]=0$,
\begin{equation}
[X,X_{H}]\lrcorner\eta=0,
\end{equation}
\begin{equation}
[X,X_{H}]\lrcorner\theta = -(L_{X}H+X_{H}\lrcorner L_{X}\theta)
\end{equation}
and
\begin{equation}
[X,X_{H}]\lrcorner d\theta=d(L_{X}H+X_{H}\lrcorner L_{X}\theta)-R_{z}(L_{X}H+X_{H}\lrcorner L_{X}\theta)\theta-R_{t}(L_{X}H+X_{H}\lrcorner L_{X}\theta)\eta,
\end{equation}
i.e., $X$ is the infinitesimal generator of a one-parameter group of canonoid transformations for $(M,\theta,H)$ if and only if $[X,R_{t}]=0$  and $[X,X_{H}]$ is a Hamiltonian vector field with Hamiltonian function $L_{X}H+X_{H}\lrcorner L_{X}\theta$.
\end{proof}

Finally, we also have
\begin{coro}\label{coroscalingcanonoid4}
Scaling symmetries are infinitesimal generators of one-parameter groups of canonoid transformations.
\end{coro}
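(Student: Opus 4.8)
The plan is to reduce the corollary directly to the characterization provided by Theorem~\ref{teinfcanonoid4}, which asserts that $X$ generates a one-parameter group of canonoid transformations if and only if $[X,R_{t}]=0$ and $[X,X_{H}]$ is a Hamiltonian vector field. Hence it suffices to check that both conditions are met whenever $X$ is a scaling symmetry of degree $\Lambda$, and the corollary follows with no further computation.

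First I would invoke the scaling-symmetry Lemma established just above in this section. Its statement already gives $[X,X_{H}]=(\Lambda-1)X_{H}$, while its proof simultaneously records the auxiliary bracket relations $[X,R_{z}]=-R_{z}$ and, crucially for us, $[X,R_{t}]=0$. The latter immediately supplies the first hypothesis of Theorem~\ref{teinfcanonoid4}. Next I would observe that $(\Lambda-1)X_{H}$ is itself Hamiltonian: by $\mathbb{R}$-linearity of the assignment $f\mapsto X_{f}$ we have $(\Lambda-1)X_{H}=X_{(\Lambda-1)H}$, so $[X,X_{H}]$ is the contact Hamiltonian vector field of the function $(\Lambda-1)H$. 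This supplies the second hypothesis, and applying the theorem yields the conclusion.

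As a consistency check I would confirm that the Hamiltonian function predicted by Theorem~\ref{teinfcanonoid4}, namely $L_{X}H+X_{H}\lrcorner L_{X}\theta$, agrees with $(\Lambda-1)H$: since $L_{X}H=\Lambda H$ and $L_{X}\theta=\theta$, we get $X_{H}\lrcorner L_{X}\theta=X_{H}\lrcorner\theta=-H$, whence $L_{X}H+X_{H}\lrcorner L_{X}\theta=\Lambda H-H=(\Lambda-1)H$, exactly the Hamiltonian function of $(\Lambda-1)X_{H}$.

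There is essentially no obstacle here, as the entire analytic content has already been absorbed into the scaling-symmetry Lemma and Theorem~\ref{teinfcanonoid4}. The only point demanding a moment's care is the cocontact-specific requirement $[X,R_{t}]=0$, which is absent in the symplectic and contact settings but present here as in the cosymplectic case; this is why I would extract it explicitly from the Lemma's proof rather than rely on the Lemma's bare statement.
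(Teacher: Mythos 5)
Your proof is correct and takes essentially the same route as the paper: both reduce the corollary to Theorem~\ref{teinfcanonoid4} via the scaling-symmetry lemma, identifying $[X,X_{H}]=(\Lambda-1)X_{H}$ as the Hamiltonian vector field of the function $(\Lambda-1)H$. If anything, your write-up is slightly more complete, since the paper's proof never explicitly checks the cocontact-specific hypothesis $[X,R_{t}]=0$ required by the theorem, whereas you correctly extract it from the lemma's proof.
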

\begin{proof}
If $X$ is a scaling symmetry of $(M,\theta,\eta,H)$ then $[X,X_{H}]=(\Lambda-1)X_{H}$ (where $\Lambda$ is the degree of $X$), which is a Hamiltonian vector field with Hamiltonian function $(\Lambda-1)H$.
\end{proof}
However, as in the cosymplectic case, we have that if $W$ is an infinitesimal dynamical symmetry of $(M,\theta,\eta,H)$, then its flow does not necessarily preserve $\eta$ (nor $R_{t}$). 
Therefore, it does not necessarily define a group of canonoid transformations. 
Since we are interested in symmetries related to canonoid transformations, we thus consider a subset of the set of infinitesimal dynamical symmetries.
\begin{coro}
If $W$ is such that $[W,X_{H}]=0$ and $[W,R_{t}]=0$, then its flow defines a one-parameter group of canonoid transformations.
\end{coro}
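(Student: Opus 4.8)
The plan is to obtain this statement as an immediate specialization of the characterization already established in Theorem~\ref{teinfcanonoid4}, in exactly the same spirit as the corresponding corollaries in the symplectic and cosymplectic sections. Recall that Theorem~\ref{teinfcanonoid4} states that $W$ is the infinitesimal generator of a one-parameter group of canonoid transformations for $(M,\theta,\eta,H)$ if and only if the two conditions $[W,R_{t}]=0$ and ``$[W,X_{H}]$ is a Hamiltonian vector field'' both hold. Under the present hypotheses the first condition is granted verbatim, so the only thing left to verify is that the vanishing commutator $[W,X_{H}]=0$ indeed qualifies as a Hamiltonian vector field.

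First I would observe that in the cocontact setting the assignment $f\mapsto X_{f}$ is a bijection onto the space of contact Hamiltonian vector fields (this is the same map used above to identify Noether symmetries), and that the zero function satisfies $X_{0}=0$, since $X_{0}\lrcorner\theta=0$, $X_{0}\lrcorner\eta=0$ and $X_{0}\lrcorner d\theta=0$ force $X_{0}=0$ by nondegeneracy of the cocontact structure. Hence the zero vector field is itself Hamiltonian, namely $[W,X_{H}]=0=X_{0}$, and the second hypothesis of Theorem~\ref{teinfcanonoid4} is met with the trivial Hamiltonian function. Invoking the theorem then yields at once that $W$ generates a one-parameter group of canonoid transformations. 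For internal consistency one may also remark that the Hamiltonian function delivered by the theorem, $L_{W}H+X_{H}\lrcorner L_{W}\theta$, must in this case coincide with the zero function, again by injectivity of $f\mapsto X_{f}$.

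Equivalently, the argument can be phrased through the flow $\varphi$ of $W$: the assumption $[W,X_{H}]=0$ says that $X_{H}$ is invariant under $\varphi$, while $[W,R_{t}]=0$ (equivalently $L_{W}\eta=0$, i.e.~$\varphi_{s}^{*}\eta=\eta$) says that $R_{t}$ and $\eta$ are preserved, and these are precisely the ingredients required for each $\varphi_{s}$ to be a canonoid transformation. I do not expect any genuine obstacle, as the result is a direct corollary of Theorem~\ref{teinfcanonoid4}; the single point that deserves an explicit line is the observation that the zero vector field counts as a Hamiltonian vector field. This is where the cocontact case differs superficially from the symplectic and cosymplectic ones: there a whole constant (not merely the value zero) yields the zero Hamiltonian field, whereas here only the zero function does, because $X_{f}\lrcorner\theta=-f$ already detects nonzero constants.
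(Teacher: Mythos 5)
Your proposal is correct and takes essentially the same route as the paper: both apply Theorem~\ref{teinfcanonoid4}, using the given $[W,R_{t}]=0$ and recognizing $[W,X_{H}]=0$ as the Hamiltonian vector field of the zero function, the paper phrasing this by computing the Hamiltonian function as $L_{W}H+X_{H}\lrcorner L_{W}\theta=-[W,X_{H}]\lrcorner\theta=0$. Your closing remark that in the cocontact (unlike the symplectic and cosymplectic) setting only the zero constant produces the zero Hamiltonian vector field, because $X_{f}\lrcorner\theta=-f$ detects constants, is accurate and consistent with the paper's statement of the corollary.
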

\begin{proof}
If $W$ is such that $[W,X_{H}]=0$ and $[W,R_{t}]=0$, 
then $[W,X_{H}]=0$ is a Hamiltonian vector field with Hamiltonian function $L_{X}H+X_{H}\lrcorner L_{X}\theta=-[X,X_{H}]\lrcorner\theta=0$. 
So the flow $\varphi$ of $W$ defines a one-parameter group of canonoid transformations.
\end{proof}

In addition, we have that for each $s\in\mathbb R$, the new Hamiltonian function is 
$K_{s}=\varphi_{s}^{*}H$. Indeed, $\varphi_{s}^{*}X_{H}=X_{H}$ and $\varphi_{s \ast}R_{t}=R_{t}$ and we have
\begin{equation}
(X_{H}\lrcorner \varphi_{s}^{*}\theta)(p)=\theta_{\varphi_{s}(p)}(\varphi_{s \ast}X_{H})=\theta_{\varphi_{s}(p)}(X_{H})=-H(\varphi_{s}(p))=-(\varphi_{s}^{*}H)(p)
\end{equation}
and for every $Y\in\mathfrak{X}(M)$ we have
\begin{equation}
\begin{split}
(X_{H}\lrcorner \varphi_{s}^{*}d\theta)(Y) &=(\varphi_{s}^{*}d\theta)(X_{H},Y)\\
&=d\theta(\varphi_{s \ast}X_{H},\varphi_{s \ast}Y)\\
&=(X_{H}\lrcorner d\theta)(\varphi_{s \ast}Y)\\
&=\varphi_{s}^{*}(X_{H}\lrcorner d\theta)(Y)\\
&=(\varphi_{s}^{*}dH-(\varphi_{s \ast}R_{z})(\varphi_{s}^{*}H)\varphi_{s}^{*}\theta-(\varphi_{s \ast}R_{t})(\varphi_{s}^{*}H)\varphi_{s}^{*}\eta)(Y)\\
&=(d\varphi_{s}^{*}H-(\varphi_{s \ast}R_{z})(\varphi_{s}^{*}H)\varphi_{s}^{*}\theta-(R_{t}\varphi_{s}^{*}H)\eta)(Y)
\end{split}
\end{equation}
i.e.~the vector field $X_{H}$ is Hamiltonian with respect to the cocontact structure $(\varphi_{s}^{*}\theta,\eta)$ with Hamiltonian function $K_{s}=\varphi_{s}^{*}H$.

In~\cite{AE2023} it has been shown that from a given canonoid transformation for a \emph{good} cocontact Hamiltonian system one can obtain the corresponding constants of motion, provided the Hamiltonian functions are invariant under the flow of the time Reeb vector field.
%Here we have shown that we can obtain canonoid transformations, in fact a one-parameter group of canonoid transformations, for a contact Hamiltonian system from an infinitesimal generator characterized by Theorem~\ref{teinfcanonoid3}. 
It is worth remarking that in our case the cocontact Hamiltonian system needs not be good. 
Therefore in the general case we have that if $\varphi_{s}$ 
is a one-parameter group of canonoid transformations for $(M,\theta,\eta, H)$ 
with new Hamiltonian functions $K_{s}$ and if for each $s$ we have $R_{t}K_{s}=0$, then $K_{s}$ are dissipated quantities; 
in this case, the quotients $K_{s}/K_{s^{\prime}}$ (whenever well defined) are constants of motion.

\section{Conclusions}
\label{sec:conclusions}
We have considered different types of symmetries in several classes of Hamiltonian systems, 
namely symplectic, cosymplectic, contact and cocontact. In each case we
have analyzed their mutual relationships and the associated Noether-type theorems.
In our main results we have shown:  a characterization of canonoid transformations
for the cosymplectic, contact and cocontact case in a manner that is completely similar to the symplectic case (Theorems~\ref{teinfcanonoid2}, \ref{teinfcanonoid3} and~\ref{teinfcanonoid4}); with this, we have shown that scaling symmetries are always the generators of one-parameter groups of canonoid transformations (Corollaries~\ref{coroscalingcanonoid2}, \ref{coroscalingcanonoid3} and~\ref{coroscalingcanonoid4}); finally, we have discussed the conditions that guarantee that to each one-parameter group of canonoid transformations there corresponds a one-parameter family of dissipated quantities, in analogy with the symplectic case where the associated functions are conserved quantities.

Our results provide a general toolkit to study scaling symmetries and canonoid transformations in depth.
Given the current interest in these types of symmetries in the mathematical and physical literature, we expect that this work can help the qualitative analysis of several physical systems.
Moreover, we foresee that a more general definition of canonoid transformations can be applied to the cosymplectic and cocontact cases, one that does not necessarily preserve the corresponding 1-form $\eta$. It will be interesting to study this more general class of symmetries in the future. In addition, we would like to study canonoid transformations in general Jacobi structures, not necessarily the ones arising from contact or cocontact forms.
Finally, it will be also relevant to place some of the results obtained in this work within the context of the reformulation of contact systems proposed in~\cite{GG2022,grabowska2023reductions}.

\section*{Acknowledgments}
R.~Azuaje wishes to thank CONACYT (México) for the financial support through a postdoctoral fellowship in the program Estancias Posdoctorales por México 2022.
The work of A.~Bravetti was partially supported by DGAPA-UNAM, program PAPIIT, Grant No.~IA-102823.

\appendix
%\section{Appendices}

\section{Symplectic, cosymplectic, contact and cocontact Hamiltonian systems}
\label{sec:appA}

In this appendix we review briefly the main definitions and properties of Hamiltonian systems on symplectic, cosymplectic, contact and cocontact manifolds that are used in the main text.

\subsection{Symplectic Hamiltonian systems}
\label{subsec1}

%in order to set up the notation and language used in this paper, 
We present a brief review of symplectic geometry and of the formulation of (time-independent) symplectic Hamiltonian mechanics (for details see \cite{AMRC2019,LR89,Torres2020,Lee2012,marsden2013introduction}).
 
\begin{de}
Let $M$ be a $2n$-dimensional smooth manifold. 
A symplectic structure on $M$ is a closed 2-form $\omega$ on $M$ such that 
$\omega^{n}\neq 0$. If $(\omega)$ is a symplectic structure on $M$, we say that $(M,\omega)$ is a symplectic manifold.
\end{de}

Let $(M,\omega)$ be a symplectic manifold of dimension $2n$. 
Locally, around any point $p\in M$, there exist local coordinates $(q^{1},\cdots,q^{n},p_{1},\cdots,p_{n})$, 
called canonical coordinates or Darboux coordinates, such that
\begin{equation}
\omega=dq^{i}\wedge dp_{i}.
\end{equation}

To each $f\in C^{\infty}(M)$ there corresponds a vector field $X_{f}$ on $M$, 
called the Hamiltonian vector field of $f$, defined through:
\begin{equation}
X_{f}\lrcorner \omega \ = \ df \ .
\end{equation}
In canonical coordinates, $X_{f}$ reads
\begin{equation}\label{eq:Xfsymplectic}
X_{f}=\frac{\partial f}{\partial p_{i}}\frac{\partial}{\partial q^{i}}-\frac{\partial f}{\partial q^{i}}\frac{\partial}{\partial p_{i}}\ .
\end{equation}
The assignment $f\longmapsto X_{f}$ is linear, that is,
\begin{equation}
X_{f+\alpha g}\ = \ X_{f}+\alpha X_{g}\ ,
\end{equation}
$\forall f,g\in C^{\infty}(M)$ and $\alpha \in\mathbb{R}$. 
Given $f,g \in C^{\infty}(M)$, the Poisson bracket of $f$ and $g$ is defined by
\begin{equation}
\lbrace f,g\rbrace=X_{g}f=\omega(X_{f},X_{g}).
\end{equation}
In canonical coordinates we have
\begin{equation}
\lbrace f,g\rbrace \ = \ \frac{\partial f}{\partial q^{i}}\frac{\partial g}{\partial p_{i}}\,-\,\frac{\partial f}{\partial p_{i}}\frac{\partial g}{\partial q^{i}} \ .
\end{equation}

The theory of time-independent (conservative) Hamiltonian systems 
is naturally constructed within the mathematical formalism of symplectic geometry. 
Given $H\in C^{\infty}(M)$, the dynamics of a system on $(M,\omega)$ with Hamiltonian function $H$ is given by the Hamiltonian vector field $X_{H}$. 
That is, the trajectories of the system 
$\psi(t)=(q^{1}(t),\cdots,q^{n}(t),p_{1}(t),\cdots,p_{n}(t))$ are the integral curves of $X_{H}$. 
From~\eqref{eq:Xfsymplectic}, we obtain that in canonical coordinates 
they satisfy Hamilton's equations of motion
\begin{equation}
\phantom{cccccccccccc}\qquad
\dot{q^{i}} =\frac{\partial H}{\partial p_{i}}, \qquad \textrm{and} 
\qquad
\dot{p_{i}} =-\frac{\partial H}{\partial q^{i}}, 
%\qquad \textrm{with} 
\qquad i=1,2,3,\ldots,n \ .
\end{equation}

The evolution of a function $f\in C^{\infty}(M)$ (a physical observable) 
along the trajectories of the system is given by
\begin{equation}
\dot{f}=L_{X_{H}}f=X_{H}f=\lbrace f,H\rbrace \ ,
\end{equation}
where $L_{X_{H}}f$ is the Lie derivative of $f$ with respect to $X_{H}$~\cite{AMR88}. 
We say that $f$ is a constant of motion of the system if it is constant along the trajectories of the system, 
that is, $f$ is a constant of motion if $L_{X_{H}}f=0$ (equivalently, $\lbrace f,H\rbrace=0$).

\subsection{Cosymplectic Hamiltonian systems}
\label{subsec2}

Here we provide some aspects of cosymplectic geometry and of the associated formulation of time-dependent Hamiltonian systems
(for more details see~\cite{LR89,CLL92,LS2017}). 

\begin{de}
Let $M$ be a $(2n+1)$-dimensional smooth manifold. 
A cosymplectic structure on $M$ is a couple $(\Omega,\eta)$, where $\Omega$ is a closed 2-form on $M$ and $\eta$ is a closed 1-form on $M$ 
such that $\eta\wedge\Omega^{n}\neq 0$. 
If $(\Omega,\eta)$ is a cosymplectic structure on $M$ we say that $(M,\Omega,\eta)$ is a cosymplectic manifold.
\end{de}

Let $(M,\Omega,\eta)$ be a cosymplectic manifold of dimension $2n+1$. 
Locally, around any point $p\in M$, 
there exist local coordinates $(q^{1},\cdots,q^{n},p_{1},\cdots,p_{n},t)$, called canonical coordinates or Darboux coordinates, such that
\begin{equation}
\Omega=dq^{i}\wedge dp_{i}\hspace{1cm}\textit{and}\hspace{1cm}\eta=dt.
\end{equation}
Furthermore, there exists a distinguished vector field $R$ on $M$, called the Reeb vector field, defined by
\begin{equation}
R\lrcorner \Omega =0 \hspace{1cm}\textit{and}\hspace{1cm} R\lrcorner \eta =1.
\end{equation}
In canonical coordinates we have $R=\frac{\partial}{\partial t}$.

To each $f\in C^{\infty}(M)$ there corresponds a vector field $X_{f}$ on $M$, called the Hamiltonian vector field for $f$, defined through:
\begin{equation}
X_{f}\lrcorner \Omega =df-(Rf)\eta \hspace{1cm}\textit{and}\hspace{1cm} X_{f}\lrcorner \eta =0.
\end{equation}
In canonical coordinates we have
\begin{equation}
X_{f}=\frac{\partial f}{\partial p_{i}}\frac{\partial}{\partial q^{i}}-\frac{\partial f}{\partial q^{i}}\frac{\partial}{\partial p_{i}}.
\end{equation}
The assignment $f\longmapsto X_{f}$ is linear, that is
\begin{equation}
X_{f+\alpha g}=X_{f}+\alpha X_{g},
\end{equation}
$\forall f,g\in C^{\infty}(M)$ and $\alpha \in\mathbb{R}$. 
Given $f,g \in C^{\infty}(M)$ the Poisson bracket of $f$ and $g$ is defined by
\begin{equation}
\lbrace f,g\rbrace=X_{g}f=\Omega(X_{f},X_{g}).
\end{equation}
In canonical coordinates, it reads
\begin{equation}
\lbrace f,g\rbrace=\frac{\partial f}{\partial q^{i}}\frac{\partial g}{\partial p_{i}}-\frac{\partial f}{\partial p_{i}}\frac{\partial g}{\partial q^{i}}.
\end{equation}

The theory of time-dependent Hamiltonian systems can be developed under the mathematical formalism of cosymplectic geometry 
(see \cite{LR89,CLL92,LS2017}). 
Given $H\in C^{\infty}(M)$, the dynamics of a system on $(M,\Omega,\eta)$ 
with Hamiltonian function $H$ is given by the evolution vector field $E_{H}=X_{H}+R$. 
The trajectories $\psi(t)=(q^{1}(t),\cdots,q^{n}(t),p_{1}(t),\cdots,p_{n}(t),t)$ of the system are the integral curves of $E_{H}$, 
which satisfy Hamilton's equations of motion
\begin{equation}
\dot{q^{i}} =\frac{\partial H}{\partial p_{i}}, \qquad
\dot{p_{i}} =-\frac{\partial H}{\partial q^{i}}, \qquad 
\dot t=1\,.
\end{equation}

The evolution of a function $f\in C^{\infty}(M)$ (an observable) along the trajectories of the system is given by
\begin{equation}
\dot{f}=L_{E_{H}}f=E_{H}f=X_{H}f+Rf=\lbrace f,H\rbrace+\frac{\partial f}{\partial t}.
\end{equation}
We say that a function $f\in C^{\infty}(M)$ is a constant of motion of the system if it is constant along the trajectories of the system, 
that is, $f$ is a constant of motion if $L_{E_{H}}f=0$ (equivalently, $\lbrace f,H\rbrace+\frac{\partial f}{\partial t}=0$).

\subsection{Contact Hamiltonian systems}
\label{subsec3}

In this subsection we move to the formalism of contact Hamiltonian systems, which naturally describe a large class of dissipative systems 
(for details see~\cite{LL2019,LS2017,BCT2017}), 
and have also applications in thermodynamics and statistical mechanics, among others~\cite{Bravetti2017}. 

\begin{de}
\label{deContact}
Let $M$ be a $(2n+1)$-dimensional smooth manifold. 
A contact 1-form on $M$ is a 1-form $\theta$ such that $\theta\wedge d\theta^{n}\neq 0$. 
If $\theta$ is a contact 1-form on $M$, we say that $(M,\theta)$ is a contact manifold.
\end{de}

There is a wider notion of contact manifolds: 
some authors define a contact structure on a manifold as a maximally non-integrable distribution of hyperplanes~\cite{LL2020,BH2016,Geiges2009,GG2022}. 
Locally, every contact structure is given by the kernel of a contact 1-form.
%a contact structure can be expressed as the kernel of a one-form $\theta$ satisfying $\theta\wedge d\theta^{n}\neq 0$, $\theta$ is 
%called a local contact form. 
However, not every contact structure admits a global contact 1-form. 
When a contact structure admits a global contact 1-form, the contact manifold is called co-oriented. 
%Every contact manifold $(M,\theta)$ from definition \ref{deContact} is a co-oriented contact manifold in this wider sense 
%by taking the distribution $Ker(\theta)$. 
In this paper we restrict ourselves to co-oriented contact manifolds and we refer to them simply as contact manifolds (see~\cite{GG2022,grabowska2023reductions} for the more general perspective on contact manifolds and on the related
Hamiltonian systems).

Let $(M,\theta)$ be a contact manifold of dimension $2n+1$. 
Locally, around any point $p\in M$, there exist local coordinates $(q^{1},\cdots,q^{n},p_{1},\cdots,p_{n},z)$, called canonical or Darboux coordinates, such that
\begin{equation}
\theta=dz-p_{i}dq^{i}.
\end{equation}
Furthermore, there exists a distinguished vector field $R$ on $M$, called the Reeb vector field, defined by 
\begin{equation}
R\lrcorner \theta =1 \hspace{1cm}\textit{and}\hspace{1cm} R\lrcorner d\theta =0.
\end{equation}
In canonical coordinates we simply have $R=\frac{\partial}{\partial z}$.

To each $f\in C^{\infty}(M)$ there corresponds a vector field $X_{f}$ on $M$, called the Hamiltonian vector field of $f$, defined through:
\begin{equation}
X_{f}\lrcorner \theta = -f \hspace{1cm}\textit{and}\hspace{1cm} X_{f}\lrcorner d\theta =df-(Rf)\theta.
\end{equation}
In canonical coordinates we have
\begin{equation}
X_{f}=\frac{\partial f}{\partial p_{i}}\frac{\partial}{\partial q^{i}}-\left( \frac{\partial f}{\partial q^{i}}+p_{i}\frac{\partial f}{\partial z}\right) \frac{\partial}{\partial p_{i}}+\left( p_{i}\frac{\partial f}{\partial p_{i}}-f\right)\frac{\partial}{\partial z}.
\end{equation}
It can be checked that the assignment $f\longmapsto X_{f}$ is linear, that is
\begin{equation}
X_{f+\alpha g}=X_{f}+\alpha X_{g},
\end{equation}
$\forall f,g\in C^{\infty}(M)$ and $\alpha \in\mathbb{R}$.

While symplectic and cosymplectic manifolds are Poisson manifolds (the symplectic and cosymplectic structures define corresponding Poisson brackets), 
a contact manifold is in general a Jacobi manifold, 
i.e., the contact 1-form defines a Jacobi bracket~\cite{LL2019}. 
Given $f,g \in C^{\infty}(M)$ the Jacobi bracket of $f$ and $g$ is defined by
\begin{equation}
\lbrace f,g\rbrace=X_{g}f+fRg=\theta([X_{f},X_{g}]).
\end{equation}
In canonical coordinates we have
\begin{equation}
\lbrace f,g\rbrace=\frac{\partial f}{\partial q^{i}}\frac{\partial g}{\partial p_{i}}-\frac{\partial f}{\partial p_{i}}\frac{\partial g}{\partial q^{i}}+\frac{\partial f}{\partial z}\left( p_{i}\frac{\partial g}{\partial p_{i}}-g\right)-\frac{\partial g}{\partial z}\left( p_{i}\frac{\partial f}{\partial p_{i}}-f\right).
\end{equation}

Hamiltonian systems on contact manifolds are called contact Hamiltonian systems. 
Given $H\in C^{\infty}(M)$, the dynamics of a system on $(M,\theta)$ with Hamiltonian $H$ is defined by the Hamiltonian vector field $X_{H}$.
In canonical coordinates we have
\begin{equation}
X_{H}=\frac{\partial H}{\partial p_{i}}\frac{\partial}{\partial q^{i}}-\left( \frac{\partial H}{\partial q^{i}}+p_{i}\frac{\partial H}{\partial z}\right) \frac{\partial}{\partial p_{i}}+\left( p_{i}\frac{\partial H}{\partial p_{i}}-H\right)\frac{\partial}{\partial z}.
\end{equation}
The trajectories $\psi(t)=(q^{1}(t),\cdots,q^{n}(t),p_{1}(t),\cdots,p_{n}(t),z(t))$ of the system are the integral curves of $X_{H}$, 
which satisfy the contact version of Hamilton's equations, namely
\begin{equation}
\dot{q^{i}} =\frac{\partial H}{\partial p_{i}}, \hspace{1cm}
\dot{p_{i}} =-\frac{\partial H}{\partial q^{i}}+p_{i}\frac{\partial H}{\partial z}\,, \hspace{1cm}
\dot{z}=p_{i}\frac{\partial H}{\partial p_{i}}-H.
\end{equation}

The evolution of a function $f\in C^{\infty}(M)$ (an observable) along the trajectories of the system reads
\begin{equation}
\dot{f}=L_{X_{H}}f=X_{H}f=\lbrace f,H\rbrace-fRH.
\end{equation}
We say that a function $f\in C^{\infty}(M)$ is a constant of motion of the system if it is constant along the trajectories of the system, that is, $f$ is a constant of motion if $L_{X_{H}}f=0$ (equivalently, $\lbrace f,H\rbrace-fRH=0$).

\subsection{Cocontact Hamiltonian systems}
\label{subsec4}

This subsection, together with the previous ones, completes the notation and the language employed in this work. 
Here we briefly review the formalism of time-dependent contact Hamiltonian systems through the so-called cocontact geometry recently introduced in~\cite{Letal2022}.

\begin{de}
Let $M$ be a $(2n+2)$-dimensional smooth manifold. 
A cocontact structure on $M$ is a couple $(\theta,\eta)$ of 1-forms on $M$ such that 
$\eta$ is closed and $\eta\wedge\theta\wedge(d\theta)^{n}\neq 0$. 
If $(\theta,\eta)$ is a cocontact structure on $M$, we say that $(M,\theta,\eta)$ is a cocontact manifold.
\end{de}

Let $(M,\theta,\eta)$ be a cocontact manifold of dimension $2n+2$. 
Locally, around any point $p\in M$, 
there exist local coordinates $(t,q^{1},\cdots,q^{n},p_{1},\cdots,p_{n},z)$, called canonical coordinates or Darboux coordinates, such that
\begin{equation}
\theta=dz-p_{i}dq^{i}\hspace{1cm}\textit{and}\hspace{1cm}\eta=dt.
\end{equation}
Furthermore, there exist two distinguished vector fields $R_{z}$ and $R_{t}$ on $M$, 
called the contact Reeb vector field and the time Reeb vector field respectively, 
such that 
\begin{equation}
\left\lbrace \begin{array}{c}
R_{z}\lrcorner \eta=0 \\ 
R_{z}\lrcorner \theta=1 \\ 
R_{z}\lrcorner d\theta=0
\end{array} \right. 
\end{equation}
and
\begin{equation}
\left\lbrace \begin{array}{c}
R_{t}\lrcorner \eta=1 \\ 
R_{t}\lrcorner \theta=0 \\ 
R_{t}\lrcorner d\theta=0\,.
\end{array} \right. 
\end{equation}
In canonical coordinates, we have $R_{z}=\frac{\partial}{\partial z}$ and $R_{t}=\frac{\partial}{\partial t}$.

To each $f\in C^{\infty}(M)$ there corresponds a vector field $X_{f}$ on $M$, called the contact Hamiltonian vector field of $f$, 
defined through:
\begin{equation}
X_{f}\lrcorner \theta=-f,\hspace{1cm}X_{f}\lrcorner d\theta=df-(R_{z}f)\theta-(R_{t}f)\eta \hspace{1cm}\textit{and}\hspace{1cm} X_{f}\lrcorner \eta =0.
\end{equation}
In canonical coordinates, we have
\begin{equation}
X_{f}=\frac{\partial f}{\partial p_{i}}\frac{\partial}{\partial q^{i}}-\left( \frac{\partial f}{\partial q^{i}}+p_{i}\frac{\partial f}{\partial z}\right) \frac{\partial}{\partial p_{i}}+\left( p_{i}\frac{\partial f}{\partial p_{i}}-f\right)\frac{\partial}{\partial z}.
\end{equation}
We observe that the assignment $f\longmapsto X_{f}$ is linear, that is
\begin{equation}
X_{f+\alpha g}=X_{f}+\alpha X_{g},
\end{equation}
$\forall f,g\in C^{\infty}(M)$ and $\alpha \in\mathbb{R}$. 
Moreover, as for the case of contact manifolds, cocontact manifolds are Jacobi manifolds. 
Indeed, given $f,g \in C^{\infty}(M)$ the Jacobi bracket of $f$ and $g$ is defined by
\begin{equation}
\lbrace f,g\rbrace=X_{g}f+fR_{z}g.
\end{equation}
In canonical coordinates, we have
\begin{equation}
\lbrace f,g\rbrace=\frac{\partial f}{\partial q^{i}}\frac{\partial g}{\partial p_{i}}-\frac{\partial f}{\partial p_{i}}\frac{\partial g}{\partial q^{i}}+\frac{\partial f}{\partial z}\left( p_{i}\frac{\partial g}{\partial p_{i}}-g\right)-\frac{\partial g}{\partial z}\left( p_{i}\frac{\partial f}{\partial p_{i}}-f\right).
\end{equation}

Time-dependent contact Hamiltonian systems can be described within the framework of cocontact geometry. 
Given $H\in C^{\infty}(M)$, the dynamics of a system on $(M,\theta,\eta)$ 
with Hamiltonian $H$ is given by the evolution vector field $E_{H}=X_{H}+R_{t}$.
In canonical coordinates
\begin{equation}
E_{H}=\frac{\partial H}{\partial p_{i}}\frac{\partial}{\partial q^{i}}-\left( \frac{\partial H}{\partial q^{i}}+p_{i}\frac{\partial H}{\partial z}\right) \frac{\partial}{\partial p_{i}}+\left( p_{i}\frac{\partial H}{\partial p_{i}}-H\right)\frac{\partial}{\partial z}+\frac{\partial}{\partial t}.
\end{equation}
The trajectories $\psi(s)=(t(s),q^{1}(s),\cdots,q^{n}(s),p_{1}(s),\cdots,p_{n}(s),z(s))$ of the system are the integral curves of $E_{H}$, 
and thus they satisfy the equations
\begin{equation}
\dot{q^{i}} =\frac{\partial H}{\partial p_{i}}, \hspace{1cm}
\dot{p_{i}} =-\left( \frac{\partial H}{\partial q^{i}}+p_{i}\frac{\partial H}{\partial z}\right) , \hspace{1cm}
\dot{z}=p_{i}\frac{\partial H}{\partial p_{i}}-H, \hspace{1cm} \dot{t}=1\,.
\end{equation}
%Since $\dot{t}=1$, we can take $t=s$, which implies that the temporal parameter for the system is $t$, that is, the trajectories of the system are parametrized by %$t$
%\begin{equation}
%\psi(t)=(t,q^{1}(t),\cdots,q^{n}(t),p_{1}(t),\cdots,p_{n}(s),z(t))
%\end{equation}
%and they obey the dissipative Hamilton equations of motion 
%\begin{equation}
%\dot{q^{i}} =\frac{\partial H}{\partial p_{i}}, \hspace{1cm}
%\dot{p_{i}} =-\left( \frac{\partial H}{\partial q^{i}}+p_{i}\frac{\partial H}{\partial z}\right) , \hspace{1cm}
%\dot{z}=p_{i}\frac{\partial H}{\partial p_{i}}-H.
%\end{equation}
The evolution of a function $f\in C^{\infty}(M)$ (an observable) along the trajectories of the system is given by
\begin{equation}
\dot{f}=L_{E_{H}}f=E_{H}f=X_{H}f+R_{t}f=\lbrace f,H\rbrace-fR_{z}H+R_{t}f.
\end{equation}
We say that a function $f\in C^{\infty}(M)$ is a constant of motion of the system if it is constant along the trajectories of the system, that is, $f$ is a constant of motion if $L_{E_{H}}f=0$ (equivalently, $\lbrace f,H\rbrace-fR_{z}H+R_{t}f=0$).

%\nocite{*}
\bibliography{refs} %Produces the bibliography via BibTeX.
\bibliographystyle{unsrt} %Order references by appearance.

\end{document}